\documentclass[reprint,twocolumn,aps,amsmath,amssymb,pra,superscriptaddress,floatfix,tightenlines]{revtex4-2}
\usepackage{amsmath, amssymb, amsthm}
\usepackage{mathtools}
\usepackage[dvipsnames]{xcolor}
\usepackage{braket}
\usepackage{comment}
\usepackage{bbm}
\usepackage{bm}
\usepackage{hyperref}
\usepackage{url}
\usepackage{color}
\usepackage{cleveref}
\usepackage{times,txfonts}
\usepackage[T1]{fontenc}

\hypersetup{
    bookmarksnumbered=true, 
    unicode=false, 
    pdfstartview={FitH}, 
    pdftitle={}, 
    pdfauthor={}, 
    pdfsubject={}, 
    pdfcreator={}, 
    pdfproducer={}, 
    pdfkeywords={}, 
    pdfnewwindow=true, 
    colorlinks=true, 
    linkcolor=NavyBlue, 
    citecolor=NavyBlue, 
    filecolor=NavyBlue, 
    urlcolor=NavyBlue 
}

\DeclareMathOperator{\tr}{Tr}

\mathchardef\mhyphen="2D

\newcommand{\Herm}{\mathrm{Herm}}

\newcommand{\ketbra}[2]{|#1\rangle\langle#2|}

\allowdisplaybreaks[4]

\theoremstyle{plain}
\newtheorem{theorem}{Theorem}
\newtheorem{lemma}[theorem]{Lemma}

\newtheorem{proposition}[theorem]{Proposition}
\theoremstyle{definition}
\newtheorem{definition}[theorem]{Definition}
\newtheorem{example}{Example}

\theoremstyle{remark}
\newtheorem{conjecture}{Conjecture}
\newtheorem{remark}[conjecture]{Remark}

\newcommand{\blue}{}

\begin{document}
\author{Kohdai Kuroiwa}
\affiliation{Institute for Quantum Computing and Department of Combinatorics and Optimization, University of Waterloo, Ontario, Canada, N2L 3G1}
\affiliation{Perimeter Institute for Theoretical Physics, Ontario, Canada, N2L 2Y5}
\author{Ryuji Takagi}
\affiliation{Department of Basic Science, The University of Tokyo, 3-8-1 Komaba, Meguro-ku, Tokyo, 153-8902, Japan}
\author{Gerardo Adesso}
\affiliation{School of Mathematical Sciences and Centre for the Mathematical and Theoretical Physics of Quantum Non-Equilibrium Systems, University of Nottingham, University Park, Nottingham, NG7 2RD, United Kingdom}
\author{Hayata Yamasaki}
\affiliation{Department of Physics, Graduate School of Science, The University of Tokyo, 7-3-1 Hongo, Bunkyo-ku, Tokyo, 113-0033, Japan}

\title{Robustness- and weight-based resource measures without convexity restriction:\\ Multi-copy witness and operational advantage in static and dynamical quantum resource theories}

\begin{abstract}
    Quantum resource theories (QRTs) provide a unified framework to analyze quantum properties as resources for achieving advantages in quantum information processing. 
    The generalized robustness and the weight of resource have been gaining increasing attention as useful resource quantifiers. 
    However, the existing analyses of these measures were restricted to the cases where convexity of the set of free states is assumed, and physically motivated resources do not necessarily satisfy this restriction.
    In this paper, we give characterizations of robustness- and weight-based measures in general QRTs without convexity restriction through two different yet related approaches. 
    On the one hand, we characterize the generalized robustness and the weight of resource by introducing a non-linear witness. 
    We show a general construction of new witness observables that detect the resourcefulness of a given state from multiple copies of the state and, using these witnesses, we provide operational interpretations of the above resource measures even without any convexity assumption. 
    On the other hand, we find that the generalized robustness and the weight of resource can also be interpreted as the worst-case maximum advantage in variants of channel-discrimination and channel-exclusion tasks, respectively, where the set of free states consists of several convex subsets corresponding to multiple restrictions.
    We further extend these results to QRTs for quantum channels and quantum instruments.
    These characterizations show that every quantum resource exhibits an advantage for the corresponding tasks, even in general QRTs without convexity assumption.
    Thus, we establish the usefulness of robustness-based and weight-based techniques beyond the conventional scope of convex QRTs, leading to a better understanding of the general structure of QRTs\@. 
\end{abstract}
\maketitle

\section{Introduction}

Quantum mechanics has opened the door to exciting possibilities in information processing, paving the way for quantum technologies to outshine classical ones in computation, communication, cryptography, and sensing~\cite{Dowling2003,Acin_2018}.  In order to transform our technological landscape and explore undiscovered benefits of quantum mechanics, fundamental quantum-mechanical phenomena need to be thoroughly characterized and leveraged as valuable assets, or {\it resources}. To facilitate this program, a framework called quantum resource theory (QRT)~\cite{Streltsov2017,Chitambar2018,Kuroiwa2020} has been established, and applied to the systematic investigation of many different quantum features, including traditionally entanglement and superposition. Common properties shared among such quantum features were also analyzed in the name of general QRTs~\cite{Streltsov2017,Chitambar2018,Horodecki2013b, Liu2017, Regula2017,Bromley2018,Anshu2018,Uola2019,Vijayan2019,Liu_ZW2019a,Takagi2019b,Takagi2019a,Fang2020,Regula2020,Sparaciari2020,Kuroiwa2020,PhysRevA.104.L020401,Regula2021,Regula2021_fundamentallimitation,Regula2021one-shot,Lami2021,Regula2022,Fang2022,Regula2022tightconstraints,Lami2023,berta2023gap}. While QRTs were initially intended to study properties of quantum states, this framework was extended to analyze dynamical properties of quantum channels~\cite{Gour2019a,Liu_ZW2019b,Liu_YC2020,Li2018,Gour_Wilde2018,Takagi2020,gour2020dynamical,yuan2020oneshot,Regula2021_fundamentallimitation,Takagi2019a,Fang2022} and quantum measurements~\cite{Heinosaari2015_NoiseRobustness,Haapasalo2015_RobustnessIncompatibility,Guerini2017_MeasurmentSimulability,Skrzypczyk2019_RobustnessMeasurement,Skrzypczyk2019_IncompatibleMeasurements,Oszmaniec2019operational,Guff2021_ResourceMeasurement}.  

Quantification of quantum resources has been of central interest to validate their usefulness, and for this reason, various resource quantifiers have been proposed both operationally and axiomatically.  
Among all of them, two promising candidates have been investigated in QRTs: the \textit{generalized robustness}~\cite{Vidal1999,Steiner2003,Harrow2003} and the \textit{weight of resource}~\cite{Lewenstein1998,Skrzypczyk2014,Pusey2015,Cavalcanti2016,Kaifeng2018,Ducuara2020,Uola2020}. 
The generalized robustness intuitively represents how resilient a given resource state is against mixing with another state. 
It was firstly defined for bipartite entanglement~\cite{Steiner2003,Harrow2003}, and generalized to QRTs characterized by a convex structure of free states. 
The generalized robustness is regarded as a valuable resource quantifier also thanks to its computability with semidefinite programming (SDP) for several representative resources such as $k$-entanglement~\cite{Bae2019}, coherence~\cite{Napoli2016}, multilevel coherence~\cite{Ringbauer2018}, asymmetry~\cite{Piani2016}, magic~\cite{Seddon2021}, and steering~\cite{Piani2015}. 
Moreover, remarkably, the generalized robustness has a direct connection to the ``usefulness'' of quantum resources in a convex QRT. 
Indeed, Ref~\cite{Takagi2019b} showed that
the generalized robustness is operationally understood as the advantage of a quantum resource state in some channel discrimination task~\cite{Kitaev_1997,Andrew2000,Acin2001,watrous_2018} in any convex QRT defined on a finite-dimensional state space, by generalizing the results for steering~\cite{Piani2015}, coherence~\cite{Napoli2016}, and asymmetry~\cite{Piani2016}. 
This general result was recently extended to infinite-dimensional convex QRTs~\cite{Regula2021,Lami2021}. 
Further, the result was extended to QRTs for channels~\cite{Takagi2019a} and measurements~\cite{Skrzypczyk2019_RobustnessMeasurement,Skrzypczyk2019_IncompatibleMeasurements,Takagi2019a}. 
On the other hand, the weight of resource is understood as the amount of resource needed to generate a given resource state by mixing up with some free state. 
This measure can also be computed with SDP for steering~\cite{Skrzypczyk2014}, coherence~\cite{Kaifeng2018}, and asymmetry~\cite{Kaifeng2018}. 
In addition, the weight of resource gives evidence of the usefulness of every quantum state in convex QRTs as well; Refs.~\cite{Ducuara2020,Uola2020} independently showed that the weight of resource quantifies the operational advantage of a given quantum state in a task called \textit{channel exclusion}. 
Thus, both resource measures enjoy computability and operational characterizations, indicating the usefulness of quantum resources. 

However, all of these previous results are based on the assumption of convexity, 
while physically well-motivated quantum resources do not necessarily have a convex structure.
For example, non-Gaussianity is an essential resource for quantum optics~\cite{Mattia2021}, including entanglement distillation~\cite{Eisert2002, Giedke2002, Fiuraifmmode2002, Zhang2010, Lami2018}, quantum error correction~\cite{Niset2009}, and universal quantum computation using continuous variable  systems~\cite{Lloyd1999,Bartlett2002,Ohliger2010,Menicucci2006,yamasaki2020polylogoverhead}, but the set of Gaussian states~\cite{Weedbrook2012,Adesso2014Gauss} is non-convex. 
(See Refs.~\cite{Takagi2018,Albarelli2018,PhysRevA.104.L020401} for another framework that aims to remedy the issue of non-convexity.)
Non-Markovianity is also regarded as a valuable resource for secure communication; a classical non-Markov chain facilitates the secret key agreement~\cite{Maurer1999}, 
and quantum non-Markovianity is used as a resource in the quantum one-time pad~\cite{Sharma2020}, but the set of quantum Markov states~\cite{Wakakuwa2017} is also non-convex.  
In addition, the sets of states with no quantum discord~\cite{Ollivier2001,Henderson2001}, e.g., classical-classical and classical-quantum states~\cite{ABC2016,Bera2017}, are not convex as well. 
More generally, it is not always possible to justify a mathematical assumption of convexity in tailored resource theories where physical limitations are identified from experimental constraints. While a probabilistic mixture of free states makes the set of free states convex~\cite{Takagi2018}, this approach cannot be generally accepted because classical randomness may itself be regarded as a resource in some cases~\cite{Groisman2005,Anshu2018}. 
Despite the existence of valuable resources beyond a convex structure, the operational characterization of the generalized robustness and the weight of resource, as well as the operational usefulness of quantum resources in more general QRTs without convexity assumption, have been largely unexplored. 
Indeed, the convexity of the set of resource-free states, which guarantees the connection between the generalized robustness and \textit{resource witness}~\cite{Brandao2005,Eisert_2007}, plays an essential role in the previous analyses. 
We cannot straightforwardly generalize these previous approaches when the convexity is not assumed. 
Thus, without new techniques and insights towards resources beyond convexity, the ultimate goal of QRTs, i.e., a unified understanding of fundamental advantages and limitations of quantum properties, could not be accomplished.

In a companion Letter~\cite{PRL} we show that the aforementioned challenges of generally non-convex QRTs can be overcome by providing a new characterization of the resourcefulness attributed to quantum states via the generalized robustness measure.
This paper complements it by providing additional results and explicit examples --- offering a detailed analysis and discussions of the results therein including full proofs of the main results ---  and further extends the framework to the regime not covered by Ref.~\cite{PRL}.
In particular, we extend the analysis to the weight of resource and provide its operational characterization in terms of multi-copy witnesses and channel exclusion tasks.
Furthermore, beyond QRTs for quantum states, here we develop extensions of our general results to \emph{dynamical} resources represented by quantum channels and instruments. 
We construct a multi-copy witness applying to a multi-copy of the Choi operator of a quantum channel in a similar way, based on the quantum channel version of the generalized robustness. Also, we show that the generalized robustness for quantum channels can be operationally interpreted as the worst-case advantage of some \textit{state discrimination} task. Our analysis can be further generalized to quantum instruments. 

Our results give a new perspective on the generalized robustness and the weight of resource in QRTs without assuming convexity. 
We establish a universal connection between these resource measures and resource witnesses that was only analyzed in restrictive convex cases previously.  
Thus, we open the direction of applying robustness- or weight-based techniques in QRTs to an even broader class of quantum resources that can be without the convexity assumption yet physically well-motivated. 
These results lead to a better understanding of general QRTs and shed new light on the framework that covers any resource regarded as physically well-motivated. 

The rest of this paper will be organized as follows. 
{\blue In Sec.~\ref{sec:prelim}, we introduce our notation, review background materials, and define the generalized robustness in general QRTs.}  
In Sec.~\ref{sec:multi-copy_witness}, we introduce a concept of multi-copy witness, and give a construction of multi-copy witnesses in a general QRT without convexity assumption, based on the generalized robustness. 
In Sec.~\ref{sec:advantage_multi-copy}, we show the operational advantage of every resource in a general QRT without convexity assumption QRT, for some channel discrimination task based on the multi-copy witness introduced in Sec.~\ref{sec:multi-copy_witness}. 
In Sec.~\ref{sec:worst-case}, we show that the generalized robustness in a general QRT without convexity assumption is operationally understood as the worst-case advantage of resource for channel discrimination. {\blue In Sec.~\ref{sec:special_case_qubit}, we explicitly show how our multi-copy witness looks in the single-qubit case,  give a geometric characterization of such witness, and evaluate the generalized robustness in a practically relevant instance of non-convex QRT\@.}
In Sec.~\ref{sec:extension_weight}, we present our analysis for the weight of resource; namely, we show another construction of multi-copy witness based on the weight of resource, an operational advantage of every quantum state in some channel exclusion task, an operational characterization of the weight of resource in QRTs without convexity assumption. 
In Sec.~\ref{sec:extension_channel}, we show an extension of our results to resource theories of quantum channels and quantum instruments. 
In Sec.~\ref{sec:conclusion}, we summarize our results and provide possible future directions.

\section{preliminaries}
\label{sec:prelim}
In this section, we introduce the notation of this paper and review background materials. 
\subsection{Notation}
In this paper, we use the letter $\mathcal{H}$ to denote a $d$-dimensional Hilbert space for finite $d$.
We use $\Herm(\mathcal{H})$ to denote the set of Hermitian operators on $\mathcal{H}$. 
We let $\mathcal{D}(\mathcal{H})$ denote the set of quantum states (density operators) on $\mathcal{H}$. 
For a linear operator $X$ on $\mathcal{H}$, we use $\|X\|_1$ to denote the trace norm of $X$ and $\|X\|_\infty$ the operator norm. 
The identity operator on a $d$-dimensional space is denoted by $I_{d}$. 

For Hilbert spaces $\mathcal{H}_1$ and $\mathcal{H}_2$, 
a completely positive and trace-preserving linear map from the set of linear operators on $\mathcal{H}_1$ to the set of those on $\mathcal{H}_2$ is called a \textit{quantum channel} from  $\mathcal{H}_1$ to $\mathcal{H}_2$. 
For example, we use $\mathcal{I}_{\mathcal{H}}$ to denote the identity operation on space $\mathcal{H}$. 
Let $\mathcal{O}(\mathcal{H}_1\to \mathcal{H}_2)$ denote the set of quantum channels from $\mathcal{H}_1$ to $\mathcal{H}_2$. 
Similarly, let $\mathcal{O}^{(\mathrm{CP})}(\mathcal{H}_1\to \mathcal{H}_2)$ denote the set of completely positive and trace-nonincreasing maps from $\mathcal{H}_1$ to $\mathcal{H}_2$. 
For a linear map $\Lambda$ from $\mathcal{H}_1$ to $\mathcal{H}_2$ with dimensions $d_1\coloneqq\dim(\mathcal{H}_1)$ and $d_2\coloneqq\dim(\mathcal{H}_2)$, the Choi operator $J_{\Lambda}$ is defined by 
\begin{equation}
    J_{\Lambda} \coloneqq \sum_{i,j = 1}^{d_1} \ketbra{i}{j}\otimes \Lambda(\ketbra{i}{j}). 
\end{equation}
For a subset $\mathcal{L}$ of linear operations, we define 
\begin{equation}
\label{eq:O_J}
    \mathcal{L}^{J} \coloneqq \{J_{\Lambda}: \Lambda \in \mathcal{L}\}; 
\end{equation}
for example, $\mathcal{O}^{J}(\mathcal{H}_1\to \mathcal{H}_2)$ denotes the set of Choi operators of quantum channels from $\mathcal{H}_1$ to $\mathcal{H}_2$.

\subsection{Quantum Resource Theory and Generalized Robustness}
We consider general quantum resource theories (QRTs) on a $d$-dimensional Hilbert space $\mathcal{H}$ for finite $d$. 
The set $\mathcal{F}(\mathcal{H})$ of free states is defined as some subset of $\mathcal{D}(\mathcal{H})$, and a state in $\mathcal{D}(\mathcal{H})\setminus\mathcal{F}(\mathcal{H})$ is called a resource state. 
For simplicity, we assume $\mathcal{F}(\mathcal{H})$ is closed. 
The generalized robustness is defined as follows. 
\begin{definition}
    Let $\rho \in \mathcal{D}(\mathcal{H})$ be a quantum state. Then, the generalized robustness $R_{\mathcal{F}(\mathcal{H})}(\rho)$ of $\rho$ with respect to the set $\mathcal{F}(\mathcal{H})$ of free states is defined by 
    \begin{equation}
        \label{eq:robustness}
        R_{\mathcal{F}(\mathcal{H})}(\rho) = \min_{\tau \in \mathcal{D}(\mathcal{H})}\left\{s\geq 0 : \dfrac{\rho + s \tau}{1 + s} \eqqcolon \sigma \in \mathcal{F}(\mathcal{H}) \right\}. 
    \end{equation}
\end{definition}
By definition, $R_{\mathcal{F}(\mathcal{H})}(\rho)$ is non-negative. 
Moreover, the generalized robustness is \textit{faithful}, \textit{i.e.}, $R_{\mathcal{F}(\mathcal{H})}(\rho) = 0$ if and only if $\rho \in \mathcal{F}(\mathcal{H})$. 
{\blue 
In this paper, we assume that $\mathcal{F}(\mathcal{H})$ contains at least one full-rank state; otherwise, $R_{\mathcal{F}(\mathcal{H})}(\rho) = \infty$ for any resource state $\rho \in \mathcal{D}(\mathcal{H})\backslash \mathcal{F}(\mathcal{H})$.}
Note that we do not assume convexity of $\mathcal{F}(\mathcal{H})$, at variance with previous literature in which the generalized robustness has been studied for convex resource theories. 

{\blue
When $\mathcal{F}(\mathcal{H})$ is convex, by considering the convex dual~\cite{boyd_vandenberghe_2004} of~\eqref{eq:robustness}, we have another expression of the generalized robustness~\cite{Brandao2005,Regula2017, Napoli2016,Piani2016,Takagi2019a,Takagi2019b}, in particular, for any $\rho \in \mathcal{D}(\mathcal{H})\backslash\mathcal{F}(\mathcal{H})$,  
\begin{equation}
    \label{eq:robustness_witness}
    \begin{aligned}
         R_{\mathcal{F}(\mathcal{H})}(\rho) =&  \max\Big\{\tr\left[W\rho\right] - 1\\
         & : W\geq 0,\,\, \tr\left[W\sigma\right] \leq 1, \forall \sigma \in \mathcal{F}(\mathcal{H})\Big\}. 
    \end{aligned}
\end{equation}
The above expression relates the evaluation of the generalized robustness to the expectation value of an optimal {\em witness} operator $W$.

Without the assumption of convexity, we no longer have this equivalence relation in general, but we can still show that the right-hand side of~\eqref{eq:robustness_witness} lower-bounds the generalized robustness. Specifically, for a general QRT with an arbitrary set $\mathcal{F}(\mathcal{H})$ of free states, we have for any $\rho \in \mathcal{D}(\mathcal{H})\backslash\mathcal{F}(\mathcal{H})$,  
    \begin{equation}
        \label{eq:robustness_lower_bound_main}
        \begin{aligned}
        R_{\mathcal{F}(\mathcal{H})}(\rho) \geq & \max\Big\{\tr\left[W\rho\right] - 1 \\
        & : W\geq 0,\,\, \tr\left[W\sigma\right] \leq 1, \forall \sigma \in \mathcal{F}(\mathcal{H})\Big\}. 
        \end{aligned}
    \end{equation}
 This is a consequence of a well-known property in convex analysis called weak duality~\cite{boyd_vandenberghe_2004}---a self-contained detailed derivation of this result is given in Appendix~\ref{appendix:generalized_robustness_lower_bound}.  We remark that Eq.~(\ref{eq:robustness_lower_bound_main}) entails that for an arbitrary resource theory, even in cases in which the exact evaluation of the generalized robustness is mathematically or experimentally impractical --- as it may happen, e.g., for high dimensional systems --- we can always obtain accessible lower bounds to $R_{\mathcal{F}(\mathcal{H})}(\rho)$, by means of witnesses operators acting on a single copy of the state $\rho$. 
Note that the lower bound obtained in~\eqref{eq:robustness_lower_bound_main} is typically loose when $\mathcal{F}(\mathcal{H})$ is non-convex. 
Indeed, by definition, the right-hand side of~\eqref{eq:robustness_lower_bound_main} is zero, by taking $W=I_d$, for states in the convex hull of $\mathcal{F}(\mathcal{H})$. 
Thus, when $\mathcal{F}(\mathcal{H})$ is non-convex, the lower bound fails to detect some resource states, while the generalized robustness $R_{\mathcal{F}(\mathcal{H})}$ is a faithful measure which becomes zero only for free states in any QRT.  

In the following, we will enlarge our analysis to define a class of witness operators acting on multiple copies of quantum states, which are intrinsically constructed in terms of general resource measures such as the generalized robustness and the weight of resource.
}

\subsection{Generalized Bloch Vector}
In this section, we review the Bloch-vector characterization of quantum states on a $d$-dimensional Hilbert space. 
Consider a general $d$-dimensional Hilbert space $\mathcal{H} \simeq \mathbb{C}^{d}$ for $d\geq 2$. 
Given a quantum state $\eta$, we consider a generalized Bloch vector $(x_1,x_2,\ldots,x_{d^2-1})$ defined by 
\begin{equation}
    \eta = \frac{1}{d}\left(I_d + \sqrt{\dfrac{d(d-1)}{2}} \sum_{j=1}^{d^2-1} x_j\lambda_j \right). 
\end{equation}
Here, $I_d$ is the $d\times d$ identity matrix, and $\{\lambda_j\}_j$ are the $d\times d$ generalized Gell-Mann matrices~\cite{KIMURA2003339,Bertlmann_2008} satisfying 
\begin{align}
    &\lambda_j = \lambda_j^\dagger, &&j = 1,2,\ldots,d^2-1, \\ 
    &\tr[\lambda_j] = 0, &&j = 1,2,\ldots,d^2-1, \\ 
    &\tr[\lambda_i^\dagger \lambda_j] = 2\delta_{ij}, &&i,j = 1,2,\ldots,d^2-1, 
\end{align}
where $\dagger$ denotes the Hermitian conjugate of a matrix, and $\delta_{ij}$ is the Kronecker delta.  

The following lemma by Byrd and Khaneja~\cite{Byrd2003_characterization} gives a necessary and sufficient condition for a valid generalized Bloch vector $(x_j)_j$. 
\begin{lemma}[\cite{Byrd2003_characterization}]~\label{lem:bloch_ball}
    Let $\mathcal{H}$ be a $d$-dimensional Hilbert space.
    Let $\eta$ be a Hermitian operator on $\mathcal{H}$ written in the form 
    \begin{equation}
        \eta = \frac{1}{d}\left(I_d + \sqrt{\dfrac{d(d-1)}{2}} \sum_{j=1}^{d^2-1} x_j\lambda_j \right).
    \end{equation}
    Then, $\eta$ is positive semidefinite if and only if 
    \begin{equation}
        S_m(\eta) \geq 0
    \end{equation}
for all $m = 1,2,\ldots,d$, where $S_m(\eta)$ is defined in the following recursive way: 
\begin{align}
    \label{eq:s_k}
    &S_m(\eta) \coloneqq \dfrac{1}{m}\sum_{l=1}^{m} \left((-1)^{l-1}\tr\left[\eta^l\right]S_{m-l}(\eta)\right) 
\end{align}
for $m\geq 1$ with 
\begin{equation}
    S_0(\eta) \coloneqq 1. 
\end{equation}
\end{lemma}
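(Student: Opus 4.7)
The plan is to recognize that the quantities $S_m(\eta)$ are exactly the elementary symmetric polynomials in the eigenvalues of $\eta$, and then to read off positive semidefiniteness directly from the non-negativity of these symmetric functions via the characteristic polynomial.

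First I would let $\lambda_1,\ldots,\lambda_d\in\mathbb{R}$ denote the (real) eigenvalues of the Hermitian operator $\eta$ and introduce the power sums $p_l\coloneqq\sum_{i=1}^d\lambda_i^l=\tr[\eta^l]$ together with the elementary symmetric polynomials $e_m\coloneqq\sum_{1\leq i_1<\cdots<i_m\leq d}\lambda_{i_1}\cdots\lambda_{i_m}$, with the convention $e_0\coloneqq 1$. The classical Newton--Girard identities then read
\begin{equation}
    m\,e_m=\sum_{l=1}^{m}(-1)^{l-1}\,p_l\,e_{m-l}, \qquad m=1,2,\ldots,d,
\end{equation}
which coincides with the recursion in~\eqref{eq:s_k} together with the matching initial value $S_0=e_0=1$. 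A straightforward induction on $m$ therefore gives $S_m(\eta)=e_m(\lambda_1,\ldots,\lambda_d)$ for every $m=0,1,\ldots,d$.

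With this identification in hand, I would invoke the characteristic polynomial factorization $\det(tI_d-\eta)=\prod_{i=1}^d(t-\lambda_i)=\sum_{m=0}^d(-1)^m e_m\,t^{d-m}$, which upon the substitution $t\mapsto -t$ yields
\begin{equation}
    \prod_{i=1}^d(t+\lambda_i)=\sum_{m=0}^{d} e_m\,t^{d-m}.
\end{equation}
The forward direction is immediate: if $\eta\succeq 0$ then each $\lambda_i\geq 0$, so every $e_m=S_m(\eta)$, being a sum of products of non-negative reals, is non-negative. For the converse, assume $S_m(\eta)\geq 0$ for all $m=1,\ldots,d$. Then for any $t>0$ the right-hand side above is a sum of non-negative terms with leading term $t^d>0$, hence strictly positive; so $\prod_i(t+\lambda_i)>0$ for every $t>0$. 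If some eigenvalue were strictly negative, the choice $t=-\lambda_i>0$ would make the product vanish, a contradiction. Therefore all $\lambda_i\geq 0$ and $\eta\succeq 0$.

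The essentially only non-trivial step is the identification $S_m(\eta)=e_m$ via Newton's identities; once that is in place, the rest follows from elementary polynomial and spectral considerations. I expect the main obstacle to be purely bookkeeping: verifying by induction that the recursion~\eqref{eq:s_k} reproduces Newton--Girard, being careful with the signs and the handling of $e_0=1$. No convexity or other structural assumption on $\eta$ is needed here—only that $\eta$ is Hermitian so that its spectrum is real.
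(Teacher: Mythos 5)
Your proposal is correct. Note that the paper does not prove this lemma at all --- it is imported verbatim from the cited reference of Byrd and Khaneja --- so there is no in-paper argument to compare against; your derivation (identifying $S_m(\eta)$ with the elementary symmetric polynomials $e_m$ of the spectrum via the Newton--Girard recursion, then reading off positive semidefiniteness from the sign pattern of the coefficients of $\prod_i(t+\lambda_i)$) is precisely the standard argument underlying that reference, and both directions are handled soundly, including the use of Hermiticity to guarantee a real spectrum. The only cosmetic caveat is that you reuse $\lambda_i$ for eigenvalues while the surrounding text reserves $\lambda_j$ for the generalized Gell-Mann matrices; a different symbol for the spectrum would avoid confusion if this were spliced into the paper.
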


\section{Multi-copy witness in QRTs without convexity assumptions}
\label{sec:multi-copy_witness}
In this section, we construct our multi-copy witnesses in general QRTs without convexity restriction. 
The concept of multi-copy witness was originally introduced in entanglement theory~\cite{Horodecki2003, Horodecki2009}.
In this paper, we show a construction of a $m$-copy witness for each $m = 2,3,\ldots,d$. 
For an arbitrary free state $\sigma$, there exists at least one $m$ such that the $m$-copy witness can discern a given state $\rho$ from $\sigma$.  

\begin{theorem}[Theorem~2 in the companion Letter~\cite{PRL}]
\label{cor:high_order_witness_qudit_conventional}
    Let $\mathcal{H}$ be a $d$-dimensional Hilbert space. 
    Let $\rho \in \mathcal{D}(\mathcal{H})\backslash\mathcal{F}(\mathcal{H})$ be a resource state.  
    Then, for $s>0$, we can construct a family $\widetilde{\mathcal{W}}_{\rho,s} \coloneqq (\widetilde{W}_{m}(\rho,s) \in \Herm(\mathcal{H}^{\otimes m}): m = 2,3,\ldots,d)$ of Hermitian operators with the following property: 
    \begin{align}
    \label{eq:conventional_witness_resource}
    &\max_{m = 2,3,\ldots,d}\tr\left[\widetilde{W}_{m}(\rho,s)\rho^{\otimes m}\right] < 0, \\ 
    \label{eq:conventional_witness_free}
    &\max_{m = 2,3,\ldots,d} \tr\left[\widetilde{W}_{m}(\rho,s)\sigma^{\otimes m}\right] \geq 0, \,\,\forall \sigma \in \mathcal{F}(\mathcal{H})
\end{align}
    if and only if
    \begin{equation}
        s < R_{\mathcal{F}(\mathcal{H})}(\rho).
    \end{equation}
\end{theorem}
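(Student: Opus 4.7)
The plan is to construct the witness family explicitly by translating the Byrd--Khaneja positivity criterion in Lemma~\ref{lem:bloch_ball} into a polynomial condition on $\sigma$ and then realising that polynomial as the expectation value of a Hermitian operator on $\sigma^{\otimes m}$. First I would record the basic reformulation: by the definition of $R_{\mathcal{F}(\mathcal{H})}$ together with closedness of $\mathcal{F}(\mathcal{H})$, the condition $s < R_{\mathcal{F}(\mathcal{H})}(\rho)$ is equivalent to the Hermitian operator $T(\sigma) \coloneqq (1+s)\sigma - \rho$ failing to be positive semidefinite for every $\sigma \in \mathcal{F}(\mathcal{H})$. Lemma~\ref{lem:bloch_ball} then rephrases this as the existence, for each free $\sigma$, of some $m \in \{1,\ldots,d\}$ with $S_m(T(\sigma)) < 0$; since $S_1(T(\sigma)) = \tr[T(\sigma)] = s > 0$, the effective index set collapses to $m \in \{2,\ldots,d\}$, and the entire theorem is now a statement about the polynomials $\sigma \mapsto S_m(T(\sigma))$.

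The next step is to encode $\sigma \mapsto S_m(T(\sigma))$ as an expectation value on $\sigma^{\otimes m}$. Unwinding the recursion in \eqref{eq:s_k}, $S_m(T(\sigma))$ is a polynomial of total degree at most $m$ in $\sigma$, built from traces $\tr[T(\sigma)^l] = \tr[\pi_l T(\sigma)^{\otimes l}]$ (with $\pi_l$ the $l$-cycle), and products $\prod_i \tr[T(\sigma)^{l_i}] = \tr[(\bigotimes_i \pi_{l_i}) T(\sigma)^{\otimes \sum_i l_i}]$. Expanding $T(\sigma)^{\otimes k} = \sum_{S \subseteq [k]}(1+s)^{|S|}(-1)^{k-|S|}\,\sigma_S \otimes \rho_{S^c}$ and padding each residual $\sigma^{\otimes j}$ with $I_d^{\otimes(m-j)}$---allowed because $\tr[\sigma]=1$---produces, after symmetrising to keep things Hermitian, an operator $\mathcal{W}_m^{(\rho,s)} \in \Herm(\mathcal{H}^{\otimes m})$ depending only on $\rho$ and $s$ such that $\tr[\mathcal{W}_m^{(\rho,s)}\,\sigma^{\otimes m}] = S_m(T(\sigma))$ for every $\sigma \in \mathcal{D}(\mathcal{H})$.

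With these preparations I would set
\begin{equation}
    \widetilde{W}_m(\rho,s) \coloneqq -\mathcal{W}_m^{(\rho,s)} - \epsilon\, I_d^{\otimes m}
\end{equation}
for a small constant $\epsilon > 0$ to be fixed. For the ``if'' direction, $T(\rho) = s\rho \geq 0$ gives $\tr[\widetilde{W}_m \rho^{\otimes m}] = -s^m S_m(\rho) - \epsilon < 0$ for every $m$ since $S_m(\rho) \geq 0$, yielding \eqref{eq:conventional_witness_resource}. For each free $\sigma$, the continuous function $g(\sigma) \coloneqq \max_{m=2,\ldots,d}(-S_m(T(\sigma)))$ is strictly positive on the compact set $\mathcal{F}(\mathcal{H})$ by the first paragraph, so it attains a positive minimum $g_{\min}$; choosing $\epsilon \in (0, g_{\min})$ gives $\max_m \tr[\widetilde{W}_m \sigma^{\otimes m}] \geq 0$ for every $\sigma \in \mathcal{F}(\mathcal{H})$, which is \eqref{eq:conventional_witness_free}. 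Conversely, if \eqref{eq:conventional_witness_resource} and \eqref{eq:conventional_witness_free} both held for our $\widetilde{\mathcal{W}}_{\rho,s}$ with $s \geq R_{\mathcal{F}(\mathcal{H})}(\rho)$, some $\sigma_0 \in \mathcal{F}(\mathcal{H})$ would satisfy $T(\sigma_0) \geq 0$, hence $S_m(T(\sigma_0)) \geq 0$ for all $m$, forcing $\tr[\widetilde{W}_m \sigma_0^{\otimes m}] \leq -\epsilon < 0$ uniformly in $m$, contradicting \eqref{eq:conventional_witness_free}.

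The main obstacle is the polynomial-to-observable translation of the second paragraph: collecting all the mixed traces $\tr[\sigma^{a_1}\rho^{b_1}\sigma^{a_2}\rho^{b_2}\cdots]$ produced by Newton's recursion, and combining them via the cyclic-permutation identities and identity-padding trick into a single symmetric Hermitian operator on $\mathcal{H}^{\otimes m}$, is conceptually routine but bookkeeping-heavy. A secondary subtlety is the uniform choice of $\epsilon$, which is resolved by compactness of $\mathcal{F}(\mathcal{H})$ in the finite-dimensional $\mathcal{D}(\mathcal{H})$ together with the continuity of each $S_m$.
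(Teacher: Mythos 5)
Your proposal is correct, and it shares the paper's overall skeleton: reduce $s<R_{\mathcal{F}(\mathcal{H})}(\rho)$ to the failure of positive semidefiniteness of $(1+s)\sigma-\rho$ for every free $\sigma$ (this is the content of Lemma~\ref{lem:witness_characterization}, proved via the Byrd--Khaneja criterion of Lemma~\ref{lem:bloch_ball}), realize each $S_m$ of the shifted operator as $\tr[W_m\,\sigma^{\otimes m}]$ for a Hermitian $W_m$ depending only on $\rho$ and $s$, and then flip the sign and subtract a small multiple of the identity to pass to the conventional witness inequalities \eqref{eq:conventional_witness_resource}--\eqref{eq:conventional_witness_free}. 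You deviate in the two technical steps, and in an instructive way. For the polynomial-to-observable step, the paper expands $S_{m,\rho,s}(\eta)$ in generalized Bloch coordinates and assembles $W_m$ from identity-padded tensor products of Gell-Mann matrices (Eq.~\eqref{eq:construction_witness_fraction}, Appendix~\ref{app:witness_derivation}), whereas you realize the power sums by symmetrized cycle permutation operators via Newton's identities, expand $T(\sigma)^{\otimes m}$, contract out the $\rho$ factors, and pad with identities; both routes are bookkeeping-heavy and produce the same functional, yours being basis-free and the paper's fully explicit, and your symmetrization step is justified because the functional is real on all states. For the shift, you use a single uniform $\epsilon$ below $g_{\min}=\min_{\sigma\in\mathcal{F}(\mathcal{H})}\max_{m}\bigl(-S_m((1+s)\sigma-\rho)\bigr)$, which is strictly positive by compactness precisely when $s<R_{\mathcal{F}(\mathcal{H})}(\rho)$; the paper instead shifts each $W_m$ by $\Delta_m(\rho,s)=\min_{\sigma\in\mathcal{F}(\mathcal{H})}\bigl|\tr[W_m(\rho,s)\sigma^{\otimes m}]\bigr|$ and asserts $\Delta_m>0$, which is not obviously guaranteed for every individual $m$ (a free state detected at some other copy number $m'$ could make the $m$-th value vanish), so your uniform shift is arguably the more robust variant of this step. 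The only point to tidy is that the theorem demands a construction for every $s>0$, while your $\epsilon$ is specified only when $g_{\min}>0$; fix it unconditionally (say $\epsilon=\min\{1,g_{\min}\}/2$ if $g_{\min}>0$ and $\epsilon=1$ otherwise), noting that your converse argument, like the paper's, works for any $\epsilon>0$ once the minimum in $R_{\mathcal{F}(\mathcal{H})}(\rho)$ is attained.
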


To show this theorem, we will prove the following slightly different proposition for simplicity of the argument, so that the theorem should be derived from this proposition.
\begin{proposition}
\label{thm:high_order_witness_qudit}
    Let $\mathcal{H}$ be a $d$-dimensional Hilbert space. 
    Let $\rho \in \mathcal{D}(\mathcal{H})\backslash\mathcal{F}(\mathcal{H})$ be a resource state.  
    Then, for $s > 0$, we can construct a family $\mathcal{W}_{\rho,s} \coloneqq (W_m(\rho,s) \in \Herm(\mathcal{H}^{\otimes m}): m = 2,3,\ldots,d)$ of Hermitian operators on $\mathcal{H}^{\otimes m}$ with the following properties. 
    \begin{enumerate}
    \item For all $m = 2,3,\ldots,d$ and all $s>0$, 
    \begin{equation}
    \label{eq:high_order_witness_resource_qudit}
    \tr\left[W_m(\rho,s)\rho^{\otimes m}\right] \geq 0. 
    \end{equation}
    
    \item For each $\sigma \in \mathcal{F}(\mathcal{H})$, there exists $m$ such that 
    \begin{equation}
    \label{eq:high_order_witness_free_qudit}
    \tr\left[W_m(\rho,s)\sigma^{\otimes m}\right] < 0 
    \end{equation}
    if and only if
    \begin{equation}
        s < R_{\mathcal{F}(\mathcal{H})}(\rho).
    \end{equation}
    \end{enumerate}
\end{proposition}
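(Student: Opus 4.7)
My plan is to reduce the asserted equivalence to the Byrd--Khaneja positivity criterion in Lemma~\ref{lem:bloch_ball}. Unpacking definition~\eqref{eq:robustness}, the inequality $s \geq R_{\mathcal{F}(\mathcal{H})}(\rho)$ holds if and only if there exists $\sigma \in \mathcal{F}(\mathcal{H})$ for which $\eta_\sigma \coloneqq (1+s)\sigma - \rho$ is positive semidefinite; trace-matching ($\tr[\eta_\sigma] = s$) automatically makes $\tau \coloneqq \eta_\sigma/s$ a valid density operator. By Lemma~\ref{lem:bloch_ball}, positivity of $\eta_\sigma$ is equivalent to $S_m(\eta_\sigma) \geq 0$ for all $m = 1, 2, \ldots, d$, and since $S_1(\eta_\sigma) = \tr[\eta_\sigma] = s > 0$ is automatic, only $m \in \{2, 3, \ldots, d\}$ carry information. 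Thus $s < R_{\mathcal{F}(\mathcal{H})}(\rho)$ holds iff for every $\sigma \in \mathcal{F}(\mathcal{H})$ there exists $m \in \{2,\ldots,d\}$ with $S_m(\eta_\sigma) < 0$, and the problem reduces to realising the scalar $S_m(\eta_\sigma)$ as an expectation value $\tr[W_m(\rho,s) \sigma^{\otimes m}]$.

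The core technical step is then to construct, for each $m$, a Hermitian operator $W_m(\rho,s) \in \Herm(\mathcal{H}^{\otimes m})$ such that $\tr[W_m(\rho,s) X^{\otimes m}] = S_m((1+s)X - \rho)$ for every Hermitian $X$ with $\tr[X] = 1$. The recursion~\eqref{eq:s_k} expresses $S_m((1+s)X - \rho)$ as a polynomial in the traces $\tr[((1+s)X-\rho)^l]$ for $l \leq m$, and each such trace expands by multilinearity into a weighted sum of cyclic traces of products interleaving copies of $X$ and $\rho$. A cyclic trace with $l-k$ factors of $X$ and $k$ factors of $\rho$ can always be written as $\tr[A\, X^{\otimes(l-k)}]$ for some $A \in \Lin(\mathcal{H}^{\otimes(l-k)})$ built from $\rho$ and a cyclic permutation operator on $\mathcal{H}^{\otimes l}$ that groups the $X$-slots. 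The degree in $X$ is then lifted to $m$ via the identity $1 = \tr[X]^{\,m-(l-k)} = \tr[I_d^{\otimes(m-l+k)}\, X^{\otimes(m-l+k)}]$, which amounts to tensoring $A$ with $I_d^{\otimes(m-l+k)}$ to obtain an operator on $\mathcal{H}^{\otimes m}$. Summing all contributions with the correct signs and combinatorial weights, and replacing the result by its Hermitian part (which preserves expectation values against $X^{\otimes m}$ for Hermitian $X$), yields the required $W_m(\rho,s)$.

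With such $W_m(\rho,s)$ in hand, both properties follow quickly. For property~1, substituting $X = \rho$ gives $\tr[W_m(\rho,s)\rho^{\otimes m}] = S_m(s\rho) = s^m S_m(\rho) \geq 0$, using the degree-$m$ homogeneity of $S_m$ (it is the $m$-th elementary symmetric polynomial in the eigenvalues) together with $\rho \geq 0$ and Lemma~\ref{lem:bloch_ball}. For property~2, substituting $X = \sigma$ for any $\sigma \in \mathcal{F}(\mathcal{H})$ gives $\tr[W_m(\rho,s)\sigma^{\otimes m}] = S_m(\eta_\sigma)$, so at fixed $\sigma$ the existence of $m \in \{2,\ldots,d\}$ with $\tr[W_m(\rho,s)\sigma^{\otimes m}] < 0$ is equivalent to $\eta_\sigma \not\geq 0$. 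Quantifying universally over $\sigma$ and invoking the reduction from the first paragraph then closes the claimed equivalence with $s < R_{\mathcal{F}(\mathcal{H})}(\rho)$.

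The main obstacle is the explicit construction in the second paragraph: the bookkeeping of cyclic permutations, the decomposition by the number of $\rho$-factors in each cyclic word, and the identity-padding steps must be combined cleanly enough to produce a single well-defined Hermitian operator on $\mathcal{H}^{\otimes m}$. One sanity check worth noting is that the degenerate endpoint $s = 0$ would force $\rho \in \mathcal{F}(\mathcal{H})$ by closedness of $\mathcal{F}(\mathcal{H})$, which is excluded by hypothesis, so the restriction $s > 0$ in the statement is consistent and no boundary pathology arises.
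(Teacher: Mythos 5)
Your proof is correct, and while its logical skeleton coincides with the paper's (reduce property 2 to positive semidefiniteness of a shifted operator via the Byrd--Khaneja criterion of Lemma~\ref{lem:bloch_ball}, then realize the scalars $S_m$ as expectation values on $m$ copies), your construction of $W_m(\rho,s)$ takes a genuinely different route. The paper works in the generalized Gell-Mann (Bloch) basis: it writes $S_{m,\rho,s}(\eta)=S_m\bigl(\tfrac{1+s}{s}\eta-\tfrac{1}{s}\rho\bigr)$ as a degree-$m$ polynomial in the Bloch coordinates of $\eta$ and assembles $W_m$ monomial by monomial from identity-padded tensor products of Gell-Mann matrices, as in Eq.~\eqref{eq:construction_witness_fraction} and Appendix~\ref{app:witness_derivation}, and it routes the robustness equivalence through Lemma~\ref{lem:witness_characterization}. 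You instead argue basis-free: you expand the recursion~\eqref{eq:s_k} into (products of) cyclic traces of words in $X$ and $\rho$, realize each via cyclic permutation operators and partial traces, pad with identities using $\tr[X]=1$, and pass to the Hermitian part (legitimate since $S_m$ of a Hermitian argument is real); you also fold the content of Lemma~\ref{lem:witness_characterization} directly into the argument, working with $\eta_\sigma=(1+s)\sigma-\rho$, which is $s$ times the paper's trace-one operator --- harmless by the degree-$m$ homogeneity of $S_m$, which you also use for property 1 via $S_m(s\rho)=s^m S_m(\rho)\geq 0$ (the paper's normalization makes this step literally $S_m(\rho)\geq 0$). The paper's route buys an explicit closed-form witness in Bloch coefficients, later exploited in the single-qubit example; yours buys basis independence and avoids the Gell-Mann bookkeeping. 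The one place where your sketch is thinner is that the Newton recursion produces \emph{products} of power-sum traces, and these products must also be tensorized into a single operator on $\mathcal{H}^{\otimes m}$ before padding; this follows by the same device (tensor the per-factor operators) and is comparable in rigor to what the paper defers to its appendix, but it deserves a sentence in a full write-up. Both arguments equally rely on attainment of the minimum in Eq.~\eqref{eq:robustness} (closedness of $\mathcal{F}(\mathcal{H})$ in finite dimension) for the direction $s\geq R_{\mathcal{F}(\mathcal{H})}(\rho)$.
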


Before proving Proposition~\ref{thm:high_order_witness_qudit}, we describe a way to construct a witness $\widetilde{W}_{m}(\rho,s)$ in Theorem~\ref{cor:high_order_witness_qudit_conventional} from a Hermitian operator $W_{m}(\rho,s)$ in Proposition~\ref{thm:high_order_witness_qudit}. 

\begin{proof}[Proof of Theorem~\ref{cor:high_order_witness_qudit_conventional}]
Under the assumption that we have $W_m(\rho,s)$ in Proposition~\ref{thm:high_order_witness_qudit}, we define 
\begin{equation}
    \widetilde{W}_{m}(\rho,s) \coloneqq -C\left(W(\rho,s) + \Delta_{m}(\rho,s) I_d^{\otimes m}\right), 
\end{equation}
with 
\begin{equation}
    \Delta_{m}(\rho,s) \coloneqq \min_{\sigma \in \mathcal{F}(\mathcal{H})} 
    \left|\tr[W_m(\rho,s)\sigma^{\otimes m}]\right| (> 0), 
\end{equation}
where $C>0$ is any suitable constant for normalization. 
With this definition, 
\begin{equation}
    \begin{aligned}
        \tr[\widetilde{W}_m(\rho,s)\rho^{\otimes m}] 
        &= -C\tr[W_m(\rho,s)\rho^{\otimes d}] -C\Delta_{m}(\rho,s)  \\ 
        &< -C\tr[W_m(\rho,s)\rho^{\otimes d}] \\ 
        &\leq 0. 
    \end{aligned}
\end{equation}
Moreover, by the definition of $\Delta_{m}(\rho,s)$, for free states $\sigma$ satisfying $\tr[W_m(\rho,s)\sigma^{\otimes m}] < 0$, 
\begin{equation}
    \begin{aligned}
        \tr[\widetilde{W}_m(\rho,s)\sigma^{\otimes d}] = C\left(-\tr[W_m(\rho,s)\sigma^{\otimes d}] - \Delta_{m}(\rho,s) \right) \geq 0.  
    \end{aligned}
\end{equation}
For free states $\sigma$ with $\tr[W_m(\rho,s)\sigma^{\otimes m}] \geq 0$, 
we have 
\begin{equation}
    \begin{aligned}
        \tr[\widetilde{W}_m(\rho,s)\sigma^{\otimes d}] = C\left(-\tr[W_m(\rho,s)\sigma^{\otimes d}] - \Delta_{m}(\rho,s) \right) < 0.  
    \end{aligned}
\end{equation}

Hence, from Proposition~\ref{thm:high_order_witness_qudit}, for $m = 2,3,\ldots,d$, there exists a family $\widetilde{\mathcal{W}}_{\rho,s} \coloneqq (\widetilde{W}_m(\rho,s): m = 2,3,\ldots,d)$ of Hermitian operators with the following properties. 
    \begin{enumerate}
    \item For all $m = 2,3,\ldots,d$ and all $s>0$, 
    \begin{equation}
    \tr\left[\widetilde{W}_m(\rho,s)\rho^{\otimes m}\right] < 0. 
    \end{equation}
    
    \item For each $\sigma \in \mathcal{F}(\mathcal{H})$, there exists $m$ such that 
    \begin{equation}
    \tr\left[\widetilde{W}_m(\rho,s)\sigma^{\otimes m}\right] \geq 0 
    \end{equation}
    if and only if
    \begin{equation}
        s < R_{\mathcal{F}(\mathcal{H})}(\rho).
    \end{equation}
    \end{enumerate}
Here, for a state $\eta$, 
\begin{equation}
    \tr\left[\widetilde{W}_m(\rho,s)\eta^{\otimes m}\right] < 0,\,\, \forall m = 2,3,\ldots, m
 \end{equation}
is equivalent to 
 \begin{equation}
    \max_{m = 2,3,\ldots,d}\tr\left[\widetilde{W}_{m}(\rho,s)\eta^{\otimes m}\right] < 0, 
\end{equation}
and the existence of $m$ such that
\begin{equation}
    \tr\left[\widetilde{W}_m(\rho,s)\eta^{\otimes m}\right] \geq 0 
\end{equation}
is equivalent to 
\begin{equation}
    \max_{m = 2,3,\ldots,d}\tr\left[\widetilde{W}_{m}(\rho,s)\eta^{\otimes m}\right] \geq 0.  
\end{equation}
Thus, we have Theorem~\ref{cor:high_order_witness_qudit_conventional} from Proposition~\ref{thm:high_order_witness_qudit}. 
\end{proof}

In the proof of Proposition~\ref{thm:high_order_witness_qudit}, we use the following lemma, which we prove in Appendix~\ref{app:proof_witness_characterization}. 

\begin{lemma}~\label{lem:witness_characterization}
    Let $\rho$ be a quantum state, and let $s > 0$. 
    Consider a quantum state $\eta$.  
    Then, 
    $\eta$ can be expressed as 
    \begin{equation}
        \eta = \frac{\rho + s' \tau}{1 + s'}
    \end{equation}
    by using some positive number $0 < s' \leq s$ and some state $\tau \in \mathcal{D}(\mathcal{H})$ 
    if and only if it holds for all $m = 1,2,\ldots,d$ that
    \begin{equation}
        S_m\left(\frac{1+s}{s} \eta - \frac{1}{s}\rho \right) \geq 0,
    \end{equation}
     with $S_m$ given in Lemma~\ref{lem:bloch_ball}. 
\end{lemma}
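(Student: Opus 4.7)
The plan is to introduce the auxiliary operator
\begin{equation}
    X \coloneqq \frac{1+s}{s}\,\eta - \frac{1}{s}\,\rho,
\end{equation}
which is manifestly Hermitian, and to observe by direct computation that $\tr(X) = \frac{1+s}{s} - \frac{1}{s} = 1$. Hence $X$ admits a generalized Bloch representation as in Lemma~\ref{lem:bloch_ball}, so that the condition ``$S_m\!\left(\frac{1+s}{s}\eta - \frac{1}{s}\rho\right) \geq 0$ for all $m=1,2,\ldots,d$'' is equivalent, via Lemma~\ref{lem:bloch_ball}, to $X \succeq 0$. Combined with $\tr(X)=1$, this is simply the statement that $X$ is itself a quantum state. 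The whole proof therefore reduces to showing that the existence of a decomposition $\eta = \frac{\rho + s'\tau}{1+s'}$ with $0 < s' \leq s$ and $\tau \in \mathcal{D}(\mathcal{H})$ is equivalent to $X \in \mathcal{D}(\mathcal{H})$.

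For the forward direction, I would start from $\eta = \frac{\rho + s'\tau}{1+s'}$, solve for $\rho$ to get $\rho = (1+s')\eta - s'\tau$, and substitute into the definition of $X$. A short algebraic manipulation yields
\begin{equation}
    X \;=\; \frac{s'}{s}\,\tau \;+\; \frac{s - s'}{s}\,\eta.
\end{equation}
Since $0 < s' \leq s$, the coefficients $s'/s$ and $(s-s')/s$ lie in $[0,1]$ and sum to $1$, so $X$ is a convex combination of the states $\tau$ and $\eta$, hence a state; therefore $X \succeq 0$, and Lemma~\ref{lem:bloch_ball} gives all the $S_m(X) \geq 0$.

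For the reverse direction, assume $S_m(X) \geq 0$ for $m = 1,\ldots,d$. Then Lemma~\ref{lem:bloch_ball} yields $X \succeq 0$, and combined with $\tr(X)=1$ this makes $X$ a legitimate quantum state. I then simply exhibit a witnessing decomposition by choosing $s' \coloneqq s$ and $\tau \coloneqq X$: one verifies
\begin{equation}
    \frac{\rho + s\,X}{1+s} \;=\; \frac{\rho + s\!\left(\frac{1+s}{s}\eta - \frac{1}{s}\rho\right)}{1+s} \;=\; \frac{(1+s)\eta}{1+s} \;=\; \eta,
\end{equation}
so the required $s'$ and $\tau$ exist.

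There is no genuine obstacle here; the lemma is essentially a change of variables. The only care points are (i) verifying $\tr(X) = 1$ and $X^\dagger = X$ so that Lemma~\ref{lem:bloch_ball} applies, and (ii) performing the rewriting of $X$ as the convex combination $\frac{s'}{s}\tau + \frac{s-s'}{s}\eta$ cleanly, which is where the constraint $s' \leq s$ is crucial for positivity of both coefficients. The mild subtlety to flag in the writeup is that one must check that the chosen $s'$ (namely $s' = s$) in the reverse direction indeed satisfies the stipulated bound $0 < s' \leq s$, which it does trivially.
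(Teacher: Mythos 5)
Your proof is correct and follows essentially the same route as the paper's: both hinge on the observation that $\zeta := \frac{1+s}{s}\eta - \frac{1}{s}\rho$ is a unit-trace Hermitian operator, so that Lemma~\ref{lem:bloch_ball} reduces the $S_m$ conditions to positive semidefiniteness, and both use $\zeta$ itself (with $s'=s$) as the witnessing $\tau$ in the reverse direction. The only cosmetic difference is that you write $\zeta$ as a convex combination of $\eta$ and $\tau$ in the forward direction while the paper expands it as a nonnegative combination of $\rho$ and $\tau$, and the paper phrases that direction contrapositively; these are equivalent.
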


Now, we give the proof of Proposition~\ref{thm:high_order_witness_qudit}. 
Intuitively, our construction of $W_m$ for Proposition~\ref{thm:high_order_witness_qudit} is intended to witness a valid density operator in the state space, so that we can use the witnessed quantum state as a feasible solution in the minimization in the definition of generalized robustness in Eq.~\eqref{eq:robustness}. 

\begin{proof}[Proof of Proposition~\ref{thm:high_order_witness_qudit}]

For a state $\eta \in \mathcal{D}(\mathcal{H})$, define 
\begin{equation}
    S_{m,\rho,s}(\eta) \coloneqq S_m\left(\frac{1+s}{s} \eta - \frac{1}{s}\rho \right) 
\end{equation}
for $m = 1,2,\ldots,d$.
Note that it holds automatically that
\begin{equation}
    S_{1,\rho,s}(\eta) = \frac{1+s}{s}\tr[\eta] - \frac{1}{s}\tr[\rho] = 1\geq 0, 
\end{equation}
and thus we will discuss whether $S_{m,\rho,s}(\eta)\geq 0$ for $m\geq 2$.

It trivially follows that $S_{m,\rho,s}(\rho) = S_m(\rho) \geq 0$ for all $m$ since $\rho$ is a state, in particular, a positive semidefinite operator. 
When $s < R_{\mathcal{F}(\mathcal{H})}(\rho)$, by the definition of the generalized robustness, for any free state $\sigma \in \mathcal{F}(\mathcal{H})$, there is no positive number $s' \leq  s$ such that 
\begin{equation}
    \sigma = \frac{\rho + s'\tau}{1+s'}
\end{equation}
with some $\tau \in \mathcal{D}(\mathcal{H})$.
By Lemma~\ref{lem:witness_characterization}, in this case, there exists $2 \leq m \leq d$ such that $S_{m,\rho,s}(\sigma) < 0$.
On the other hand, when $s \geq R_{\mathcal{F}(\mathcal{H})}(\rho)$, there exists $0 \leq s' \leq s$ such that 
\begin{equation}
    \sigma = \frac{\rho + s'\tau}{1+s'}
\end{equation}
with some $\tau \in \mathcal{D}(\mathcal{H})$.
Indeed, we can take a free state $\sigma$ achieving $R_{\mathcal{F}(\mathcal{H})}(\rho)$ with $s' = R_{\mathcal{F}(\mathcal{H})}(\rho)$. 
In this case, for all $m = 2,3,\ldots, d$, 
we have $S_{m,\rho,s}(\sigma)  \geq 0$ for such $\sigma$. 

Thus, it suffices to construct $W_m(\rho,s)$ satisfying 
\begin{equation}
    \tr\left[W_{m}(\rho,s) \eta^{\otimes n}\right] = S_{m,\rho,s}(\eta)
\end{equation}
for $\eta \in \mathcal{D}(\mathcal{H})$, 
for all $2\leq m\leq d$ and $s>0$. 
Using the generalized Bloch vector $(x_j)_j$ of $\eta$, $S_{m,\rho,s}(\eta)$ is expressed as 
\begin{equation}
    \begin{aligned}
        &S_{m,\rho,s}(\eta) = \sum_{l = 0}^m \left(\sum_{\substack{n_1,\ldots,n_{d^2-1} \geq 0 \\ n_1 + n_2 \cdots n_{d^2-1} = l}} \left(c^{(l,m)}_{n_1,n_2,\ldots,n_{d^2-1}} \prod_{j=1}^{d^2-1} \left(x_j\right)^{n_j}\right)\right)
    \end{aligned}
\end{equation}
with some real coefficients $c^{(l,m)}_{n_1,n_2,\ldots,n_{d^2-1}}$. 
By definition of $S_m$, $S_{m,\rho,s}(\eta)$ is a real $m$-degree polynomial of $(x_j)_j$. 
Now, define a Hermitian operator 
    \begin{equation}
    \begin{aligned}
        \label{eq:construction_witness_fraction}
        &W^{(l,m)}_{n_1,n_2,\ldots,n_{d^2-1}} \\ 
        &\coloneqq \left(\dfrac{d}{2(d-1)}\right)^{\tfrac{l}{2}}c^{(l,m)}_{n_1,n_2,\ldots,n_{d^2-1}} \\ 
        &\quad\cdot \left(\lambda_1\right)^{\otimes n_1} \otimes\left(\lambda_2\right)^{\otimes n_2} \cdots \otimes \left(\lambda_{d^2-1}\right)^{\otimes n_{d^2-1}}\otimes I^{\otimes (m-l)}
    \end{aligned}
\end{equation}
on $\mathcal{H}^{\otimes m}$. 
In Appendix~\ref{app:witness_derivation}, we show that this operator $W^{(l,m)}_{n_1,n_2,\ldots,n_{d^2-1}}$ satisfies 
\begin{equation}
 \label{eq:witness_derivation}
    \tr\left[W^{(l,m)}_{n_1,n_2,\ldots,n_{d^2-1}}\eta^{\otimes m}\right] = \left(c^{(l,m)}_{n_1,n_2,\ldots,n_{d^2-1}} \prod_{j=1}^{d^2-1} \left(x_j\right)^{n_j}\right).
\end{equation}
Thus, one candidate of $W(\rho,s)$ can be constructed as 
\begin{equation}
\label{eq:W_proof}
    W_{m}(\rho,s) = \sum_{l = 0}^m \left(\sum_{\substack{n_1,\ldots,n_{d^2-1} \geq 0 \\ n_1 + n_2 \cdots n_{d^2-1} = l}} W^{(l,m)}_{n_1,n_2,\ldots,n_{d^2-1}} \right), 
\end{equation}
and this operator satisfies 
\begin{equation}
\tr\left[W_{m}(\rho,s) \eta^{\otimes n}\right] = S_{m,\rho,s}(\eta),
\end{equation}
which yields the conclusion.
\end{proof}

We remark that if one wants to make $W_{m}(\rho,s)$ more symmetric than Eq.~\eqref{eq:W_proof}, 
one may give another construction of $W_{m}(\rho,s)$ as 
\begin{equation}
    \label{eq:d_deg_witness_qudit}
    W_{m}(\rho,s) = \sum_{l = 0}^m \left(\sum_{\substack{n_1,\ldots,n_{d^2-1} \geq 0 \\ n_1 + n_2 \cdots n_{d^2-1} = l}} \tilde{W}^{(l,m)}_{n_1,n_2,\ldots,n_{d^2-1}} \right)
\end{equation}
with 
    \begin{equation}
    \begin{aligned}
        &\tilde{W}^{(l,m)}_{n_1,n_2,\ldots,n_{d^2-1}} \\
        &\coloneqq \left(\dfrac{d}{2(d-1)}\right)^{\tfrac{l}{2}}c^{(l,m)}_{n_1,n_2,\ldots,n_{d^2-1}} \\ 
        &\quad\cdot\dfrac{n_1!n_2!\cdots n_{d^2-1}!(m-l)!}{m!}\sum\lambda^{(l,m)}_{n_1,n_2,\ldots,n_{d^2-1}}, 
    \end{aligned}
\end{equation}
where the sum is taken over all trace-less Hermitian operators $\lambda^{(l,m)}_{n_1,n_2,\ldots,n_{d^2-1}}$ obtained by permuting the systems of $\left(\lambda_1\right)^{\otimes n_1} \otimes\left(\lambda_2\right)^{\otimes n_2} \cdots \otimes \left(\lambda_{d^2-1}\right)^{\otimes n_{d^2-1}} \otimes I^{\otimes (m-l)}$. 
This symmetric construction of $W_{m}(\rho,s)$ can also be used for Proposition~\ref{thm:high_order_witness_qudit}.

From Theorem~\ref{cor:high_order_witness_qudit_conventional}, we have 
\begin{equation}~\label{eq:witness_robustness_relation}
    \begin{aligned}
        \sup\left\{s>0: \widetilde{\mathcal{W}}_{\rho,s}\,\,\mathrm{is\,\,a\,\,witness} \right\} 
        = R_{\mathcal{F}(\mathcal{H})}(\rho). 
    \end{aligned}
\end{equation}
We can also see that when $s$ is closer to $R_{\mathcal{F}(\mathcal{H})}$, $W_{m}(\rho,s)$ serves as a \textit{better} witness; that is, $W_{m}(\rho,s)$ recognizes more resource states, as shown in the following proposition. 
\begin{proposition}
    If $0 < s' < s$, then 
    \begin{equation}
        \begin{aligned}
        \label{eq:good_witness_qudit}
        &\{\eta \in \mathcal{D}(\mathcal{H}): \tr[W_m(\rho,s') \eta^{\otimes m}] \geq 0,\,\, \forall m=2,3,\ldots,d \} \\
        &\subsetneq \{\eta \in \mathcal{D}(\mathcal{H}): \tr[W_m(\rho,s) \eta^{\otimes m}] \geq 0,\,\,\forall m=2,3,\ldots,d \}. 
        \end{aligned}
    \end{equation}
\end{proposition}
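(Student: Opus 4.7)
The plan is to first reduce the simultaneous conditions $\tr[W_{m}(\rho,s)\eta^{\otimes m}]\geq 0$, $m=2,\ldots,d$, to a single operator inequality $(1+s)\eta\geq\rho$, and then to verify containment and exhibit an explicit witness of strict containment.

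Recall from the construction of $W_{m}(\rho,s)$ in the proof of Proposition~\ref{thm:high_order_witness_qudit} that
\begin{equation*}
\tr[W_{m}(\rho,s)\eta^{\otimes m}] = S_{m,\rho,s}(\eta) = S_{m}(\xi_{s}(\eta)),
\end{equation*}
where $\xi_{s}(\eta)\coloneqq\tfrac{1+s}{s}\eta-\tfrac{1}{s}\rho$ is Hermitian with $\tr[\xi_{s}(\eta)]=1$. Because $S_{1,\rho,s}(\eta)=1$ holds automatically, the $(d-1)$ inequalities indexed by $m=2,\ldots,d$ are jointly equivalent to $S_{m}(\xi_{s}(\eta))\geq 0$ for all $m=1,\ldots,d$; by Lemma~\ref{lem:bloch_ball} applied to the trace-one Hermitian operator $\xi_{s}(\eta)$, this collapses to the single condition $\xi_{s}(\eta)\geq 0$, i.e., $(1+s)\eta\geq\rho$. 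Writing $A_{s}\coloneqq\{\eta\in\mathcal{D}(\mathcal{H}):(1+s)\eta\geq\rho\}$, the right-hand side of Eq.~\eqref{eq:good_witness_qudit} equals $A_{s}$ and the left-hand side equals $A_{s'}$.

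The inclusion $A_{s'}\subseteq A_{s}$ is then immediate: for $\eta\in A_{s'}$, $(1+s)\eta-\rho=[(1+s')\eta-\rho]+(s-s')\eta\geq 0$, since both summands are positive semidefinite. To show strictness, I would construct an explicit element of $A_{s}\setminus A_{s'}$ as follows. Choose any eigenvector $\ket{\psi}$ of $\rho$ with eigenvalue $\lambda>0$ (which exists since $\tr\rho=1$) and any unit vector $\ket{\phi}$ orthogonal to $\ket{\psi}$ (available since $d\geq 2$, which is the nontrivial regime for the claim). Setting $\eta^{\star}\coloneqq\tfrac{\rho+s\,\ketbra{\phi}{\phi}}{1+s}$, one has $(1+s)\eta^{\star}-\rho=s\,\ketbra{\phi}{\phi}\geq 0$, so $\eta^{\star}\in A_{s}$; however,
\begin{equation*}
\bra{\psi}\bigl[(1+s')\eta^{\star}-\rho\bigr]\ket{\psi}=\tfrac{-(s-s')\lambda}{1+s}<0,
\end{equation*}
so $\eta^{\star}\notin A_{s'}$.

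The main and only nontrivial step is the collapse in the first paragraph of the family of polynomial constraints $S_{m,\rho,s}(\eta)\geq 0$ to the single operator inequality $(1+s)\eta\geq\rho$; this hinges on Lemma~\ref{lem:bloch_ball} together with the fact that $\xi_{s}(\eta)$ has trace one. After this reformulation, both the inclusion and its strictness follow by elementary linear algebra, so I do not anticipate any further obstacle.
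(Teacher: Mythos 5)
Your proof is correct, and it differs from the paper's in execution even though the skeleton (witness values $=S_{m,\rho,s}$, counterexample of the form $(\rho+s\phi)/(1+s)$ with $\phi$ pure) is shared. The paper routes both halves through Lemma~\ref{lem:witness_characterization}: for the inclusion it extracts a decomposition $\eta=(\rho+s''\tau)/(1+s'')$ with $s''\leq s'<s$ and re-applies the lemma at parameter $s$, and for strictness it takes an arbitrary pure $\phi\neq\rho$ and rules out a decomposition at parameter $s'<s$ via a support argument (positivity of $\tfrac{s}{1+s}\phi-\tfrac{s'}{1+s'}\tau$ with $\phi$ pure forces $\tau=\phi$, hence $\rho=\phi$, a contradiction). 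You instead inline the content of that lemma: since $\tr[W_m(\rho,s)\eta^{\otimes m}]=S_m\bigl(\tfrac{1+s}{s}\eta-\tfrac1s\rho\bigr)$ and the shifted operator has unit trace, Lemma~\ref{lem:bloch_ball} collapses the whole family of constraints to the single operator inequality $(1+s)\eta\geq\rho$ (this is legitimate --- it is exactly how Lemma~\ref{lem:witness_characterization} is proved in Appendix~A). This buys you a one-line inclusion, $(1+s)\eta-\rho=[(1+s')\eta-\rho]+(s-s')\eta\geq 0$, and a strictness witness certified by a single expectation value $\langle\psi|[(1+s')\eta^{\star}-\rho]|\psi\rangle<0$, with $\ket{\psi}$ a positive-eigenvalue eigenvector of $\rho$ and $\ket{\phi}\perp\ket{\psi}$; your computation checks out, and $\eta^{\star}$ is indeed a state. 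What the paper's formulation buys in exchange is that it stays in the decomposition language matching the robustness definition and the rest of Sec.~III (so the same Lemma~\ref{lem:witness_characterization} does all the work), and its counterexample is the more general family of all pure $\phi\neq\rho$ rather than your orthogonality-restricted choice; your restriction is harmless here but is what makes the direct computation trivial. Both arguments implicitly use $d\geq 2$, which you correctly flag as the nontrivial regime.
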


\begin{proof}
    We first show the inclusion, \textit{i.e.}, 
    \begin{equation}
        \begin{aligned}
        &\{\eta \in \mathcal{D}(\mathcal{H}): \tr[W_m(\rho,s') \eta^{\otimes m}] \geq 0,\,\, \forall m=2,3,\ldots,d \} \\
        &\subseteq \{\eta \in \mathcal{D}(\mathcal{H}): \tr[W_m(\rho,s) \eta^{\otimes m}] \geq 0,\,\,\forall m=2,3,\ldots,d \}. 
        \end{aligned}
    \end{equation}
    Suppose that 
    \begin{equation}
        \eta \in \{\eta \in \mathcal{D}(\mathcal{H}): \tr[W_m(\rho,s') \eta^{\otimes m}] \geq 0,\,\, \forall m=2,3,\ldots,d \}. 
    \end{equation}
    Then, by the construction of $W_{m}(\rho,s)$, for all $m = 2,3,\ldots,d$, 
    \begin{equation}
        S_{m,\rho,s'}(\eta) = S_m\left(\frac{1 + s'}{s'}\eta - \frac{1}{s'}\rho\right) \geq 0. 
    \end{equation}
    By Lemma~\ref{lem:witness_characterization}, there exists $0 < s''\leq s'$ and a state $\tau \in \mathcal{D}(\mathcal{H})$ such that 
    \begin{equation}
        \eta = \frac{\rho + s''\tau}{1 + s''}. 
    \end{equation}
    Since we also have $s'' < s$, again from Lemma~\ref{lem:witness_characterization}, 
    for all $m = 2,3,\ldots,d$, 
    \begin{equation}
        S_{m,\rho,s}(\eta) = S_m\left(\frac{1 + s}{s}\eta - \frac{1}{s}\rho\right) \geq 0; 
    \end{equation}
    that is, 
     \begin{equation}
        \eta \in \{\eta \in \mathcal{D}(\mathcal{H}): \tr[W_m(\rho,s) \eta^{\otimes m}] \geq 0,\,\, \forall m=2,3,\ldots,d \}, 
    \end{equation}
    which shows the inclusion. 

    Now, we show that the inclusion is strict.
    In particular, we construct a state (\textit{i.e.,} $\eta_s$ defined below) included in
    \begin{equation}\{\eta \in \mathcal{D}(\mathcal{H}): \tr[W_m(\rho,s) \eta^{\otimes m}] \geq 0,\,\,\forall m=2,3,\ldots,d \}
    \end{equation}
    but not in
    \begin{equation}
        \{\eta \in \mathcal{D}(\mathcal{H}): \tr[W_m(\rho,s') \eta^{\otimes m}] \geq 0,\,\, \forall m=2,3,\ldots,d \}.
    \end{equation}
    For the construction, choose any pure state $\phi \in \mathcal{D}(\mathcal{H})$ satisfying $\phi \neq \rho$.
    We construct a state $\eta_s$ by 
    \begin{equation}
        \eta_{s} \coloneqq \frac{\rho + s\phi}{1 + s}. 
    \end{equation}
    By Lemma~\ref{lem:witness_characterization}, using a similar argument as above, 
    \begin{equation}
    \label{eq:eta_s}
        \eta_s \in \{\eta \in \mathcal{D}(\mathcal{H}): \tr[W_m(\rho,s) \eta^{\otimes m}] \geq 0,\,\, \forall m=2,3,\ldots,d \}.  
    \end{equation}
    On the other hand, we show that $\eta_s$ cannot be written as 
    \begin{equation}
        \eta_{s} = \frac{\rho + s'\tau}{1 + s'}
    \end{equation}
    for any $s'$ satisfying $0 < s' < s$ and any $\tau \in \mathcal{D}(\mathcal{H})$. 
    By way of contradiction, suppose that this is possible. 
    Then, we have 
    \begin{equation}
        \frac{\rho + s\phi}{1 + s} = \frac{\rho + s'\tau}{1 + s'}; 
    \end{equation}
    that is, 
    \begin{equation}
        \left(\frac{1}{1 + s'} - \frac{1}{1 + s} \right)\rho = \frac{s}{1 + s}\phi - \frac{s'}{1 + s'}\tau. 
    \end{equation}
    Since $s' < s$, 
    \begin{equation}
        \left(\frac{1}{1 + s'} - \frac{1}{1 + s} \right)\rho
    \end{equation}
    is positive, so 
    \begin{equation}
        \frac{s}{1 + s}\phi - \frac{s'}{1 + s'}\tau
    \end{equation}
    must be positive as well. 
    At the same time, since $\phi$ is chosen to be a pure state, for this operator to be positive, it is necessary that $\tau = \phi$. 
    This choice leads to $\rho = \phi$, but this contradicts $\phi \neq \rho$. 
    Therefore, there is no $s' < s$ with which we can write 
   \begin{equation}
        \eta_{s} = \frac{\rho + s'\tau}{1 + s'}. 
    \end{equation}
    Hence, by Lemma~\ref{lem:witness_characterization}, there exists $2\leq m \leq d$ such that 
    \begin{equation}
        S_{m,\rho,s'}(\eta_s) = S_m\left(\frac{1 + s'}{s'}\eta_s - \frac{1}{s'}\rho\right) < 0. 
    \end{equation}
    Thus, we have 
    \begin{equation}
    \label{eq:eta_s_prime}
        \eta_{s} 
            \not\in\{\eta \in \mathcal{D}(\mathcal{H}): \tr[W_m(\rho,s') \eta^{\otimes m}] \geq 0,\,\, \forall m=2,3,\ldots,d \}.
    \end{equation}
    Therefore, Eqs.~\eqref{eq:eta_s} and~\eqref{eq:eta_s_prime} show that the inclusion is strict. 
\end{proof}

\section{Operational advantage of resources in multi-copy channel discrimination}
\label{sec:advantage_multi-copy}
In this section, we show that the multi-copy witnesses obtained in the previous sections lead to an operational advantage of all resource states in general QRTs without convexity assumption for a variant of channel discrimination tasks, which we call \textit{$m$-input channel discrimination}. 

In $m$-input channel discrimination, one aims to distinguish channels that act on $m$ copies $\rho^{\otimes m}$ of a given state $\rho$. 
Let $\{p_i,\Lambda^{(m)}_i\}_i$ be an ensemble of channels, where channels $\Lambda^{(m)}_i$ on $\mathcal{D}(\mathcal{H}^{\otimes m})$ are randomly picked with probability $p_i$. 
Once a channel $\Lambda^{(m)}_i$ is sampled, we apply this channel to the $m$ copies of the given state $\rho$; that is, at this point, we have $\Lambda^{(m)}_i(\rho^{\otimes m})$. 
Our goal is to figure out which channel from the channel ensemble acted on the state. 
To identify the label $i$, we perform a quantum measurement $\{M_i\}_i$ to the resulting state. 
The success probability for this $m$-input channel discrimination task is given by 
\begin{equation}
    \label{eq:success_prob}
    p_{\mathrm{succ}}(\{p_i,\Lambda^{(m)}_i\}_i, \{M_i\}_i, \rho^{\otimes m}) \coloneqq \sum_{i} p_i \tr\left[M_i~\Lambda^{(m)}_i(\rho^{\otimes m})\right]. 
\end{equation}
In this task, given a quantum state and a channel ensemble, we aim to maximize the success probability by choosing the best measurement strategy. 
We characterize the operational advantage of a resource state in this scenario by taking the ratio between the best success probability with respect to the given state and 
the best success probability with respect to a free state. 

The separation of resource state $\rho$ and free states $\sigma$ shown in Proposition~\ref{thm:high_order_witness_qudit}  implies all resource states in general QRTs without convexity restriction show operational advantages in this task, as shown in the following theorem. 

\begin{theorem}[Theorem~3 in the companion Letter~\cite{PRL}]
\label{thm:advantage_multi-copy}
Let $\mathcal{H}$ be a $d$-dimensional Hilbert space. 
For any resource state $\rho \in \mathcal{D}(\mathcal{H})\backslash\mathcal{F}(\mathcal{H})$, there exists a family of channel ensembles $\left(\{p_i,\Lambda^{(m)}_i\}_i\right)_{m=2}^{d}$ such that 
\begin{equation}
    \min_{\sigma \in \mathcal{F}(\mathcal{H})}\max_{m = 2,3,\ldots,d}\dfrac{\displaystyle \max_{\{M^{(m)}_i\}_i} p_{\mathrm{succ}}(\{p_i,\Lambda^{(m)}_i\}_i,\{M^{(m)}_i\}_i,\rho^{\otimes m})}{\displaystyle \max_{\{M^{(m)}_i\}_i} p_{\mathrm{succ}}(\{p_i,\Lambda^{(m)}_i\}_i,\{M^{(m)}_i\}_i,\sigma^{\otimes m})} > 1. 
\end{equation}
\end{theorem}

\begin{proof}
Let $\{W_m\}_{m=2}^d$ be a family of Hermitian operators such that every free state $\sigma\in\mathcal{F}(\mathcal{H})$ comes with an integer $m\in\{2,\dots, d\}$ satisfying 
\begin{align}
        &\tr[W_m\rho^{\otimes m}] > 1, \\ 
        &0\leq\tr[W_m\sigma^{\otimes m}] \leq 1.  
        \label{eq:modified witness free}
    \end{align}
    The existence of such a family of Hermitian operators is guaranteed by Theorem~\ref{cor:high_order_witness_qudit_conventional}.
Indeed, it ensures that when $s < R_{\mathcal{F}(\mathcal{H})}(\rho)$, for every $\sigma\in\mathcal{F}(\mathcal{H})$ there exists $m\in\{2,\ldots, d\}$ satisfying
    \begin{align}
        \label{eq:witness_resource}
        &\tr[\widetilde{W}_m(\rho,s)\rho^{\otimes m}] < 0, \\ 
        &\tr[\widetilde{W}_m(\rho,s)\sigma^{\otimes m}] \geq 0 
    \end{align}
    for the family $\widetilde{\mathcal{W}}_{\rho,s} = (\widetilde{W}_{m}(\rho,s) \in \Herm(\mathcal{H}^{\otimes m}): m = 2,3,\ldots,d)$ of Hermitian operators.
    Then, we may take $W_m \coloneqq I_d^{\otimes m} - \widetilde{W}_m(\rho,s)/\|\widetilde{W}_m(\rho,s)\|_\infty$ for some $s < R_{\mathcal{F}(\mathcal{H})}(\rho)$. 
    Note that Eq.~\eqref{eq:witness_resource} implies that $\widetilde{W}_m(\rho,s)$ has at least one negative eigenvalues, so we have $\|W_m\|_\infty > 1$. 
    Define two quantum channels $\Lambda^{(m)}_1$ and $\Lambda^{(m)}_2$ as 
        \begin{align}
            &\Lambda^{(m)}_1(X)\\ \nonumber
            &= \left(\dfrac{\tr(X)}{2} + \dfrac{\tr(W_mX)}{2\|W_m\|_\infty}\right)\ketbra{0}{0} + \left(\dfrac{\tr(X)}{2} - \dfrac{\tr(W_mX)}{2\|W_m\|_\infty}\right)\ketbra{1}{1}, \\
            &\Lambda^{(m)}_2(X)\\ \nonumber 
            &= \left(\dfrac{\tr(X)}{2} - \dfrac{\tr(W_mX)}{2\|W_m\|_\infty}\right)\ketbra{0}{0} + \left(\dfrac{\tr(X)}{2} + \dfrac{\tr(W_mX)}{2\|W_m\|_\infty}\right)\ketbra{1}{1}. 
        \end{align}
    Let $\sigma\in\mathcal{F}(\mathcal{H})$ be an arbitrary free state and let $m$ be an integer that satisfies \eqref{eq:modified witness free}. 
    Then, since $\tr[W_m\rho^{\otimes m}] >1$, 
    \begin{equation}
        \begin{aligned}
            &\|\Lambda^{(m)}_1(\rho^{\otimes m}) - \Lambda^{(m)}_2(\rho^{\otimes m})\|_1 \\ 
            &= \left\|\dfrac{\tr[W_m\rho^{\otimes d}]}{\|W_m\|_\infty}\ketbra{0}{0} - \dfrac{\tr[W_m\rho^{\otimes m}]}{\|W_m\|_\infty}\ketbra{1}{1}\right\|_1 \\ 
            &= \dfrac{2\tr[W_m\rho^{\otimes m}]}{\|W_m\|_\infty} \\ 
            &> \dfrac{2}{\|W_m\|_\infty}. 
        \end{aligned}
    \end{equation}
    Similarly, since $0\leq\tr[W_m\sigma^{\otimes m}] \leq 1$, 
    \begin{equation}
        \begin{aligned}
        &\|\Lambda^{(m)}_1(\sigma^{\otimes m}) - \Lambda^{(m)}_2(\sigma^{\otimes m})\|_1 \\ 
            &= \dfrac{2\tr[W_m\sigma^{\otimes m}]}{\|W_m\|_\infty} \\ 
            &\leq \dfrac{2}{\|W_m\|_\infty}. 
        \end{aligned}
    \end{equation}
    Consider the channel ensemble consisting of $\Lambda^{(m)}_1$ and $\Lambda^{(m)}_2$ with probability $\tfrac{1}{2}$, denoted by $\{\tfrac{1}{2},\Lambda^{(m)}_i\}_{i=1}^2$. 
    Then, we have 
        \begin{equation}
            \begin{aligned}
            &\max_{\{M_i\}_i} p_{\mathrm{succ}}(\{\tfrac{1}{2},\Lambda^{(m)}_i\}_{i=1}^2,\{M_i\}_i,\rho^{\otimes m}) \\
            &= \dfrac{1}{2}\left(1 + \dfrac{1}{2}\|\Lambda^{(m)}_1(\rho^{\otimes m}) - \Lambda^{(m)}_2(\rho^{\otimes m})\|_1\right) \\ 
            &> \dfrac{1}{2}\left(1 + \dfrac{1}{\|W_m\|_\infty}\right), 
            \end{aligned}
        \end{equation}
        and 
        \begin{equation}
            \begin{aligned}
            & \max_{\{M_i\}_i} p_{\mathrm{succ}}(\{\tfrac{1}{2},\Lambda^{(m)}_i\}_{i=1}^2,\{M_i\}_i,\sigma^{\otimes m})\\
            &= \dfrac{1}{2}\left(1 + \dfrac{1}{2}\max_{\sigma\in\mathcal{F}(\mathcal{H})}\|\Lambda^{(m)}_1(\sigma^{\otimes m}) - \Lambda^{(m)}_2(\sigma^{\otimes m})\|_1\right) \\ 
            &\leq \dfrac{1}{2}\left(1 + \dfrac{1}{\|W_m\|_\infty}\right). 
            \end{aligned}
        \end{equation}
    Therefore, 
    \begin{equation}
        \begin{aligned}
            &\dfrac{\max_{\{M_i\}_i} p_{\mathrm{succ}}(\{\tfrac{1}{2},\Lambda^{(m)}_i\}_{i=1}^2,\{M_i\}_i,\rho^{\otimes m})}{\max_{\{M_i\}_i} p_{\mathrm{succ}}(\{\tfrac{1}{2},\Lambda^{(m)}_i\}_{i=1}^2,\{M_i\}_i,\sigma^{\otimes m})} \\ 
            &> \dfrac{\tfrac{1}{2}\left(1 + \tfrac{1}{\|W_m\|_\infty}\right)}{\tfrac{1}{2}\left(1 + \tfrac{1}{\|W_m\|_\infty}\right)} \\ 
            &= 1. 
            \label{eq:advantage fixed free state}
        \end{aligned}
    \end{equation}
    Since every free state comes with an integer $m$ satisfying \eqref{eq:advantage fixed free state}, we have
    \begin{equation}
        \max_{m = 2,3,\ldots,d}\dfrac{\displaystyle \max_{\{M^{(m)}_i\}_i} p_{\mathrm{succ}}(\{p_i,\Lambda^{(m)}_i\}_i,\{M^{(m)}_i\}_i,\rho^{\otimes m})}{\displaystyle \max_{\{M^{(m)}_i\}_i} p_{\mathrm{succ}}(\{p_i,\Lambda^{(m)}_i\}_i,\{M^{(m)}_i\}_i,\sigma^{\otimes m})} > 1
    \end{equation}
    for arbitrary $\sigma\in\mathcal{F}(\mathcal{H})$. 
    Since this relation holds even for the worst choice of $\sigma$, the statement follows as desired. 
\end{proof}

\section{Worst-case advantage of resources in channel discrimination}
\label{sec:worst-case}
In this section, we give an operational characterization of the generalized robustness in general QRTs without convexity restriction. 

We consider that the set $\mathcal{F}(\mathcal{H})$ of free states can be expressed as a union  
    \begin{equation}
        \mathcal{F}(\mathcal{H}) = \bigcup_{k} \mathcal{F}_k(\mathcal{H}), 
    \end{equation}
where $\mathcal{F}_k(\mathcal{H})$ denotes a closed convex set for all $k$. 
This expression corresponds to the case in which a QRT is characterized by multiple constraints. 
We give examples of discord, coherence with multiple bases, thermodynamics, and multipartite entanglement below. 
Note that apart from such physically well-motivated cases, mathematically, a decomposition of a given set $\mathcal{F}(\mathcal{H})$ into convex subsets is always possible. 

\begin{example}[Discord]
In discord theory, the set of classical-quantum states is usually taken as the set of free states~\cite{ABC2016}. 
This set can be regarded as the union of the set of classical-quantum states with a fixed local basis. 
More concretely, let $\mathcal{H}_1$ and $\mathcal{H}_2$ be finite-dimensional Hilbert spaces. 
The set of classical-quantum states can be written as 
\begin{equation*}
    \bigcup_{\{\ket{a_k}\}_k} \left\{\sum_k p_k \ketbra{a_k}{a_k} \otimes \rho_k: p_k \geq 0,\, \sum_k p_k = 1,\, \rho_k \in \mathcal{D}(\mathcal{H}_2)  \right\}, 
\end{equation*}
where $\{\ket{a_k}\}_k$ is an orthonormal basis of $\mathcal{H}_1$. 
While the set of classical-quantum states is not convex, each set 
\begin{equation*}
    \left\{\sum_k p_k \ketbra{a_k}{a_k} \otimes \rho_k: p_k \geq 0,\, \sum_k p_k = 1,\, \rho_k \in \mathcal{D}(\mathcal{H}_2)  \right\}
\end{equation*}
is convex by definition. 
The same observation holds for the set of classical-classical states as well. 
\end{example} 
\begin{example}[Coherence in multiple bases]
\label{example:coherence}
In 3D magnetic field sensing \cite{Baumgratz2016}, one would need states with coherence in a specific set of bases, \textit{e.g.}, $x$, $y$, and $z$ bases for a qubit. 
In this case, the set of free states should be the union of the set of incoherent states with respect to the given bases, namely, 
\begin{equation}
    \bigcup_{\alpha = x,y,z} \left\{\rho: \text{$\rho$ is diagonal in $\alpha$ basis} \right\},
\end{equation}
covering the three main axes of the Bloch sphere, while any other state would be regarded as a resource. {\blue We analyze this scenario in detail in Sec.~\ref{subsec:example_coherence_multi-bases}.}
\end{example}

\begin{example}[Thermodynamics]
    In another context, thermodynamical machines such as engines or refrigerators operating between different thermal baths can be characterized by a free set given by the union of the convex subsets corresponding to each equilibrium temperature \cite{Correa2014, Korzekwa2023}. 

    For example, consider two thermal baths with the same Hamiltonian $H$ at inverse temperatures $\beta_1$ and $\beta_2$, respectively. 
    The thermal state corresponding to each thermal bath is given as $\gamma_{\beta_i} \coloneqq \mathrm{e}^{-\beta_i H} / \tr[\mathrm{e}^{-\beta_i H}]$ for $i=1,2$. 
    The set of free states should be given as 
    \begin{equation}
        \{\gamma_{\beta_1} \} \cup \{\gamma_{\beta_2} \}. 
    \end{equation}
\end{example}

\begin{example}[Multipartite Entanglement]
    In the context of multipartite entanglement, multipartite states in the union of the sets of partition-separable states with respect to different bipartitions cannot have genuine multipartite entanglement~\cite{Yamasaki2022activationofgenuine,Palazuelos2022genuinemultipartite}. 
    For any state out of this non-convex set, one can obtain genuine multipartite entanglement from many copies of the state, which is known as genuine multipartite entanglement activation and characterizes this non-convex set from a resource perspective~\cite{Yamasaki2022activationofgenuine,Palazuelos2022genuinemultipartite}. 
\end{example}

We introduce a concept of \textit{worst-case} advantage of channel discrimination in general QRTs without convexity restriction. 
This notion of worst-case advantage arises when a QRT has multiple constraints as introduced above. If we do not know which constraints we eventually focus on, we have to take the worst-case scenario into account. 
Here, we consider conventional channel discrimination for single copies, where $m=1$ in the $m$-input channel discrimination.  
Instead of considering the maximum success probability, we fix the measurement strategy in this case; that is, when we quantify an advantage of a resource state, we use the same measurement for the resource state and the free states. 
Let $\{p_i,\Lambda_i\}_i$ be a channel ensemble, and $\{M_i\}_i$ be a given measurement. 
Then, the advantage of a resource state $\rho$ with respect to $\mathcal{F}_k(\mathcal{H})$ can be defined as 
\begin{equation}
    \label{eq:k-advantage}
    \dfrac{p_{\mathrm{succ}}(\rho,\{p_i,\Lambda_i\}_i,\{M_i\}_i)}{\max_{\sigma_k \in \mathcal{F}_k(\mathcal{H})} p_{\mathrm{succ}}(\sigma_k,\{p_i,\Lambda_i\}_i,\{M_i\}_i)}, 
\end{equation}
where $p_{\mathrm{succ}}$ denotes the success probability, which is obtained by taking $m = 1$ in Eq.~\eqref{eq:success_prob}. 
We consider the worst-case advantage of $\rho$ to be the infimum of Eq.~\eqref{eq:k-advantage} over $k$.  

We find that the worst-case maximum advantage for channel discrimination is quantified by the generalized robustness. 
\begin{theorem}[Theorem~4 in the companion Letter~\cite{PRL}]\label{thm:advantage worst case}
    Let $\mathcal{H}$ be a $d$-dimensional Hilbert space. 
    For any resource state $\rho \in \mathcal{F}(\mathcal{H})\backslash \mathcal{D}(\mathcal{H})$, 
    \begin{equation}
        \label{eq:worst_case_advantage_robustness}
        \begin{aligned}
            &\inf_{k} \max_{\{p_i,\Lambda_i\}_i,\{M_i\}_i} \dfrac{p_{\mathrm{succ}}(\rho,\{p_i,\Lambda_i\}_i,\{M_i\}_i)}{\max_{\sigma_k \in \mathcal{F}_k(\mathcal{H})} p_{\mathrm{succ}}(\sigma_k,\{p_i,\Lambda_i\}_i,\{M_i\}_i)} \\  
            &= 1 + R_{\mathcal{F}(\mathcal{H})}(\rho). 
        \end{aligned}
    \end{equation}
\end{theorem}
Thus, the generalized robustness can be interpreted as the worst-case maximum advantage under multiple sets of free states.
{\blue
Since the generalized robustness is faithful,~\eqref{eq:worst_case_advantage_robustness} implies that every resource state is useful in the sense of the worst-case maximum advantage for channel discrimination. 
}
Mathematically, this result does not depend on which decomposition of $\mathcal{F}(\mathcal{H})$ we pick. 
We may apply this result to whatever scenario we want to consider. 
When the set $\mathcal{F}(\mathcal{H})$ of free states happens to be convex, then the infimum over $k$ is omitted, and this result will naturally be reduced to the previous result for convex QRTs by Ref.~\cite{Takagi2019b}. 

\begin{figure}[htbp]
        \centering
        \includegraphics[keepaspectratio, width=\linewidth]{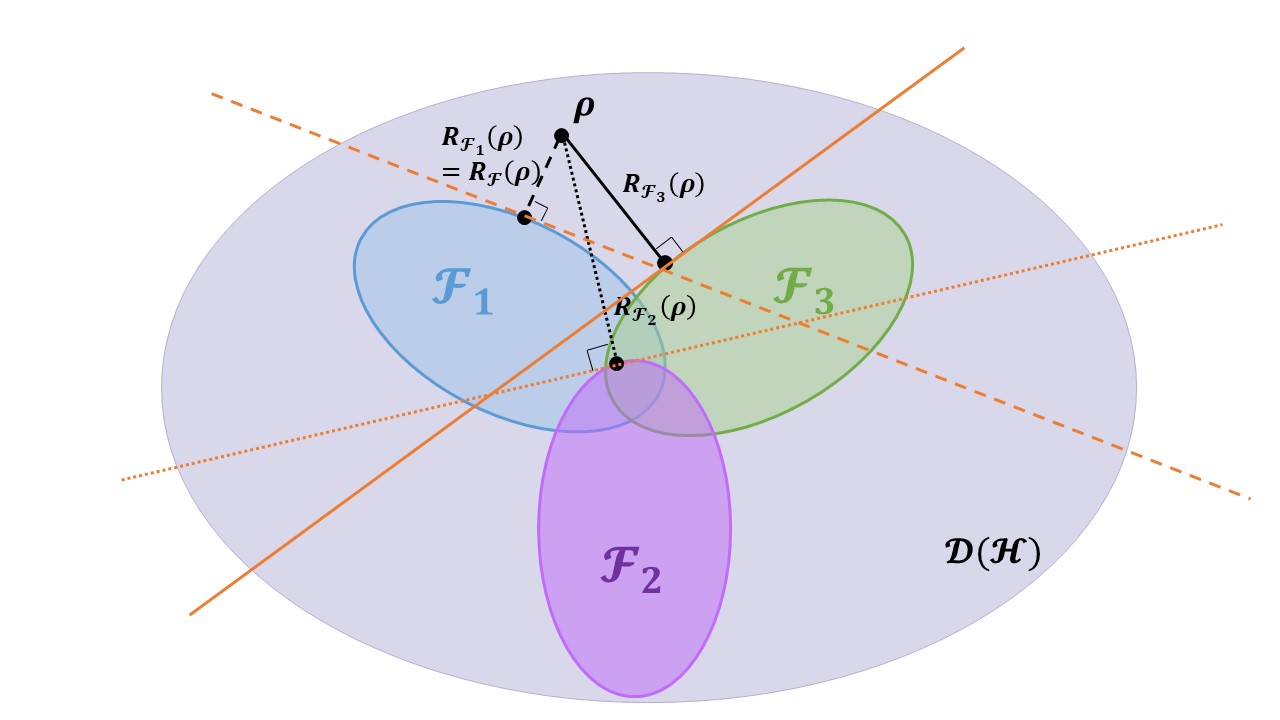}
        \caption{A diagram of the generalized robustness with respect to a non-convex set $\mathcal{F}$ of free states formed by three convex subsets $\mathcal{F}_1$, $\mathcal{F}_2$, and $\mathcal{F}_3$ with $\mathcal{F}=\bigcup_{k=1,2,3}\mathcal{F}_k$.  
        The dashed, dotted, and solid (orange) tangent lines 
        represent the hyperplane separating $\rho$ from $\mathcal{F}_1$, $\mathcal{F}_2$, and $\mathcal{F}_3$, respectively. 
        One can observe that these linear witnesses cannot separate the resource state $\rho$ from the non-convex set $\mathcal{F}$. 
        As illustrated in the figure, the generalized robustness $R_{\mathcal{F}}(\rho)$ is given as the minimum value among $R_{\mathcal{F}_1}(\rho)$, $R_{\mathcal{F}_2}(\rho)$, and $R_{\mathcal{F}_3}(\rho)$. 
        Since $\mathcal{F}_1$ is the closest to $\rho$ in this case, the value of $R_{\mathcal{F}_1}(\rho)$ will be employed as $R_{\mathcal{F}}(\rho)$. 
        }
        \label{fig:worst_case}
\end{figure}

To prove this theorem, we use the following lemma,  
showing that the generalized robustness with respect to the union of convex subsets is given by the infimum value of the generalized robustness with respect to each subset. 
We also show a diagram for an intuitive understanding of this lemma in FIG.~\ref{fig:worst_case}.

\begin{lemma}~\label{lem:worst_case_robustness}
    For any state $\rho \in \mathcal{D}(\mathcal{H})$, 
    \begin{equation}
        R_{\mathcal{F}(\mathcal{H})}(\rho) = \inf_{k} R_{\mathcal{F}_k(\mathcal{H})}(\rho). 
    \end{equation}
\end{lemma}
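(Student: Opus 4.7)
My plan is to prove the equality by establishing the two inequalities $R_{\mathcal{F}(\mathcal{H})}(\rho) \leq \inf_{k} R_{\mathcal{F}_k(\mathcal{H})}(\rho)$ and $R_{\mathcal{F}(\mathcal{H})}(\rho) \geq \inf_{k} R_{\mathcal{F}_k(\mathcal{H})}(\rho)$ separately, each of which follows almost immediately from unpacking the definition of the generalized robustness in Eq.~\eqref{eq:robustness}.

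For the first inequality, I would use the elementary set-theoretic monotonicity of the generalized robustness: if $\mathcal{A}\subseteq\mathcal{B}$, then any $(s,\tau)$ feasible for the minimization defining $R_{\mathcal{A}}(\rho)$ is also feasible for the one defining $R_{\mathcal{B}}(\rho)$, hence $R_{\mathcal{B}}(\rho)\leq R_{\mathcal{A}}(\rho)$. Applying this to $\mathcal{F}_k(\mathcal{H}) \subseteq \mathcal{F}(\mathcal{H})$ for every $k$ yields $R_{\mathcal{F}(\mathcal{H})}(\rho) \leq R_{\mathcal{F}_k(\mathcal{H})}(\rho)$ for all $k$, and taking the infimum over $k$ gives one direction.

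For the reverse inequality, I would exploit the fact that $\mathcal{F}(\mathcal{H})$ is assumed closed, so the minimum in Eq.~\eqref{eq:robustness} is attained: let $s^{*}\coloneqq R_{\mathcal{F}(\mathcal{H})}(\rho)$ be achieved by some $\tau^{*}\in\mathcal{D}(\mathcal{H})$, so that $\sigma^{*}\coloneqq\frac{\rho + s^{*}\tau^{*}}{1+s^{*}}\in\mathcal{F}(\mathcal{H})=\bigcup_k \mathcal{F}_k(\mathcal{H})$. By definition of a union, there is at least one index $k^{*}$ with $\sigma^{*}\in\mathcal{F}_{k^{*}}(\mathcal{H})$. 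This exhibits $(s^{*},\tau^{*})$ as a feasible pair for the minimization defining $R_{\mathcal{F}_{k^{*}}(\mathcal{H})}(\rho)$, giving $R_{\mathcal{F}_{k^{*}}(\mathcal{H})}(\rho)\leq s^{*}$ and hence $\inf_k R_{\mathcal{F}_k(\mathcal{H})}(\rho)\leq R_{\mathcal{F}(\mathcal{H})}(\rho)$.

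Combining the two inequalities completes the proof. I do not anticipate a substantive obstacle here: the argument is a routine consequence of the definition together with the trivial observation that membership in a union is equivalent to membership in one of its components. The only care points are (i) to note that closedness of $\mathcal{F}(\mathcal{H})$ is what lets us realize the infimum in $R_{\mathcal{F}(\mathcal{H})}(\rho)$ as a minimum so that the optimal $\sigma^{*}$ genuinely exists (if one preferred to avoid this, one could instead use an $\varepsilon$-approximate minimizer and let $\varepsilon \to 0$, which would also work without appealing to closedness of the individual $\mathcal{F}_k$), and (ii) to write $\inf_k$ rather than $\min_k$ on the right-hand side since the index set may be uncountable (as in Example~2) and the attained value on a particular $\mathcal{F}_k$ may never coincide exactly with the infimum.
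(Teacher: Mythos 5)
Your proof is correct and follows essentially the same route as the paper's: monotonicity of the robustness under set inclusion for one direction, and locating the optimal free state $\sigma^{*}$ inside some $\mathcal{F}_{k^{*}}$ of the union for the other. Your additional remarks on why the minimum is attained and on writing $\inf_k$ rather than $\min_k$ are sensible care points that the paper leaves implicit.
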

\begin{proof}
    Since $\mathcal{F}_k(\mathcal{H}) \subseteq \mathcal{F}(\mathcal{H})$ for all $k$, we have $R_{\mathcal{F}(\mathcal{H})}(\rho) \leq R_{\mathcal{F}_k(\mathcal{H})}(\rho)$ for all $k$ by the definition of the generalized robustness. 
    By taking the infimum over $k$ on both sides, we have 
    \begin{equation}
        R_{\mathcal{F}(\mathcal{H})}(\rho) \leq \inf_{k} R_{\mathcal{F}_k(\mathcal{H})}(\rho)
    \end{equation}
    On the other hand, let $\tau \in \mathcal{D}(\mathcal{H})$ be a free state achieving $R_{\mathcal{F}(\mathcal{H})}(\rho)$; that is, 
    \begin{equation}
        \dfrac{\rho + R_{\mathcal{F}(\mathcal{H})}(\rho)\tau}{1 + R_{\mathcal{F}(\mathcal{H})}(\rho)} \in \mathcal{F}(\mathcal{H}). 
    \end{equation}
    Since $\mathcal{F}(\mathcal{H}) = \bigcup_k \mathcal{F}_k(\mathcal{H})$, there exists $k$ such that 
    \begin{equation}
        \dfrac{\rho + R_{\mathcal{F}(\mathcal{H})}(\rho)\tau}{1 + R_{\mathcal{F}(\mathcal{H})}(\rho)} \in \mathcal{F}_k(\mathcal{H}). 
    \end{equation}
    This shows that $R_{\mathcal{F}(\mathcal{H})}(\rho)$ serves as a suboptimal value of $R_{\mathcal{F}_k(\mathcal{H})}(\rho)$. 
    Hence, 
    \begin{equation}
        R_{\mathcal{F}(\mathcal{H})}(\rho) 
        \geq R_{\mathcal{F}_k(\mathcal{H})}(\rho) \geq \inf_k R_{\mathcal{F}_k(\mathcal{H})}(\rho). 
    \end{equation}
\end{proof}
With this characterization, we can prove Theorem~\ref{thm:advantage worst case}. 
\begin{proof}[Proof of Theorem~\ref{thm:advantage worst case}]
For any $k$, since $\mathcal{F}_k(\mathcal{H})$ is convex, we can apply the analysis from Ref.~\cite{Takagi2019b} and get 
\begin{equation}
    \begin{aligned}
        &\max_{\{p_i,\Lambda_i\}_i,\{M_i\}_i} \dfrac{p_{\mathrm{succ}}(\rho,\{p_i,\Lambda_i\}_i,\{M_i\}_i)}{\max_{\sigma_k \in \mathcal{F}_k(\mathcal{H})} p_{\mathrm{succ}}(\sigma_k,\{p_i,\Lambda_i\}_i,\{M_i\}_i)}  \\ 
        &= 1 + R_{\mathcal{F}_k(\mathcal{H})}(\rho). 
    \end{aligned}
\end{equation}
Taking the infimum over $k$ on both sides, we have 
\begin{equation}
    \begin{aligned}
        &\inf_{k} \max_{\{p_i,\Lambda_i\}_i,\{M_i\}_i} \dfrac{p_{\mathrm{succ}}(\rho,\{p_i,\Lambda_i\}_i,\{M_i\}_i)}{\max_{\sigma_k \in \mathcal{F}_k(\mathcal{H})} p_{\mathrm{succ}}(\sigma_k,\{p_i,\Lambda_i\}_i,\{M_i\}_i)} \\
        &= 1 + \inf_{k} R_{\mathcal{F}_k(\mathcal{H})}(\rho).
    \end{aligned} 
\end{equation}
By Lemma~\ref{lem:worst_case_robustness}, we know $\inf_{k} R_{\mathcal{F}_k(\mathcal{H})}(\rho) = R_{\mathcal{F}(\mathcal{H})}(\rho)$. 
Hence, 
    \begin{equation}
        \begin{aligned}
            &\inf_{k} \max_{\{p_i,\Lambda_i\}_i,\{M_i\}_i} \dfrac{p_{\mathrm{succ}}(\rho,\{p_i,\Lambda_i\}_i,\{M_i\}_i)}{\max_{\sigma_k \in \mathcal{F}_k(\mathcal{H})} p_{\mathrm{succ}}(\sigma_k,\{p_i,\Lambda_i\}_i,\{M_i\}_i)} \\  
            &= 1 + R_{\mathcal{F}(\mathcal{H})}(\rho). 
        \end{aligned}
    \end{equation}
as desired. 
\end{proof}

\begin{figure*}[t]
    \centering 
    \includegraphics[width = 6.75in]{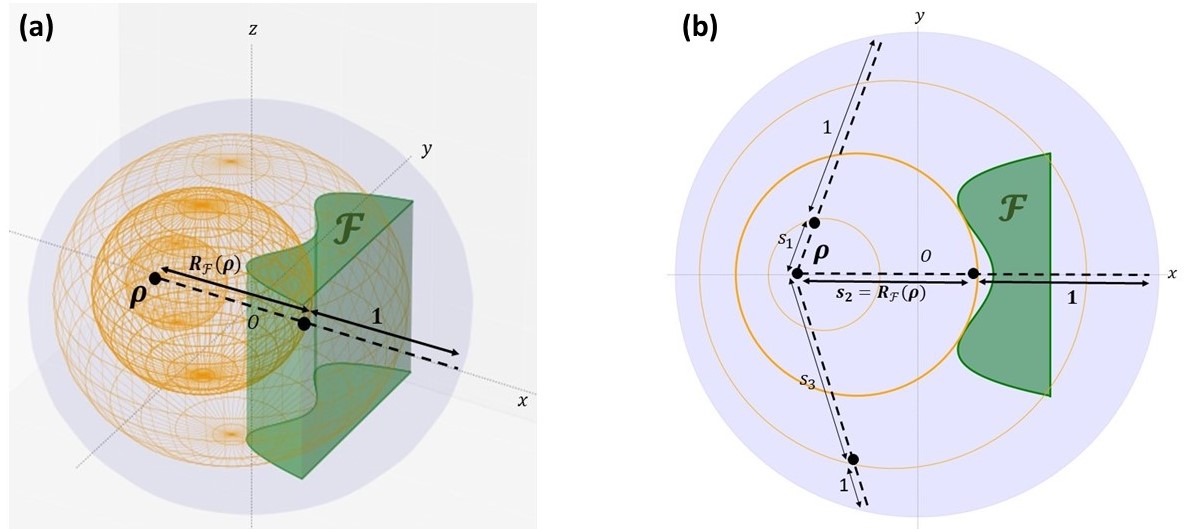}
    \caption{(a) A 3D diagram depicting curved surfaces representing the witness for the single-qubit case, i.e., $d = 2$. 
    In the diagram, the outermost blue ball represents the Bloch ball, that is, the set of single-qubit states. 
    The green non-convex shape represents the set $\mathcal{F}$ of free states. 
    Three orange meshed spheres correspond to curved surfaces $S_{2,\rho,s} = 0$ for different values of the parameter $s$, each of which is a set of states that internally divide the state $\rho$ and the Bloch sphere into the ratio $s:1$. 
    The middle orange sphere touches $\mathcal{F}$, and the parameter $s$ for this sphere corresponds to the generalized robustness $R_{\mathcal{F}}(\rho)$.
    (b) A diagram of the 2D plane obtained by cutting (a) at $z = 0$.
    The three orange circles corresponds to $S_{2,\rho,s} = 0$ with $s = s_1,s_2,s_3$, respectively. 
    The middle circle $S_{2,\rho,s_2} = 0$ touches $\mathcal{F}$, and thus $s_2 = R_{\mathcal{F}}(\rho)$.
    }
    \label{fig:robustness_witness}
\end{figure*}

{\blue \section{Example: Single-qubit case}
\label{sec:special_case_qubit}
In this section, we describe an illustrative example of our methods and results for single-qubit QRTs.}
In the single-qubit case, i.e., $d=2$, the construction of the multi-copy witness is largely simplified due to the clear geometric structure of the set of quantum states, and we here explicitly derive the description of the witness. 
{\blue 
In Sec.~\ref{subsec:general_derivation_single_qubit}, we show a general derivation of a $2$-copy witness for single-qubit QRTs, as shown in FIG.~\ref{fig:robustness_witness}.
In Sec.~\ref{subsec:example_coherence_multi-bases}, we give a more specific example of the construction of our multi-copy witness for a physically motivated choice of a single-qubit non-convex QRT, namely, the case of coherence in $x$, $y$, and $z$ bases introduced in Example~\ref{example:coherence} of Sec.~\ref{sec:worst-case}. 
}

{\blue
\subsection{General derivation of 2-copy witness for single-qubit QRTs}
\label{subsec:general_derivation_single_qubit}
In the case of single-qubit QRTs ($d=2$),
}
we have only to consider  
\begin{equation}
    \begin{aligned}
    S_2(\eta) 
    &= \frac{1}{2}\left(\tr[\eta]S_1(\eta) - \tr[\eta^2]S_0(\eta) \right) \\ 
    &= \frac{1}{2}\left(\tr[\eta]^2 - \tr[\eta^2] \right)\\
    &= \frac{1 - x_1^2 - x_2^2 - x_3^2}{4}
    \end{aligned}
\end{equation}
for Hermitian operator $\eta$ with Bloch vector $(x_1,x_2,x_3)$. 
One may observe that $S_2(\eta) \geq 0$ fully characterizes the Bloch ball, and $S_2(\eta) = 0$ corresponds to the surface of the Bloch ball (i.e., the Bloch sphere). 

Let us fix a single-qubit quantum state $\rho$ with Bloch vector $(r_1,r_2,r_3)$. 
Then, for any state $\eta$ with Bloch vector $(x_1,x_2,x_3)$, we have 
\begin{equation}
    \begin{aligned}
        S_{2,\rho,s}(\eta) 
        &= \frac{1}{4} - \frac{r_1^2+r_2^2+r_3^2}{4s^2} + \dfrac{(1+s)(r_1x_1 + r_2x_2 + r_3x_3)}{2s^2}\\ 
        & - \dfrac{(1+s)^2(x_1^2 + x_2^2 + x_3^2)}{4s^2}. 
    \end{aligned}
\end{equation}

We here describe how our results for the general $d$-dimensional case look like in this single-qubit case. 
We define  
    \begin{equation}
    \label{eq:quadratic_witness_qubit}
    \begin{aligned}
    W(\rho,s)
        &\coloneqq \left(1-\dfrac{r_1^2+r_2^2+r_3^2}{s^2}\right)I_2\otimes I_2 \\
    &- \dfrac{(1+s)^2}{s^2}\left[\sigma_1\otimes\sigma_1 + \sigma_2\otimes\sigma_2 + \sigma_3\otimes\sigma_3\right]\\
    &+\dfrac{1+s}{s^2}[r_1\left(I_2\otimes \sigma_1 + \sigma_1\otimes I_2\right) \\
    &\quad\quad\quad+r_2\left(I_2\otimes \sigma_2 + \sigma_2\otimes I_2\right) \\ 
    &\quad\quad\quad+r_3\left(I_2\otimes \sigma_3 + \sigma_3\otimes I_2\right)], 
    \end{aligned}
\end{equation}
with the 1-qubit identity operator $I_2$ and Pauli operators 
    \begin{equation}
        \sigma_1 \coloneqq \left(
        \begin{array}{cc}
            0 & 1 \\
            1 & 0
        \end{array}
        \right),\,\, 
        \sigma_2 \coloneqq \left(
        \begin{array}{cc}
            0 & -i \\
            i & 0
        \end{array}
        \right),\,\,
        \sigma_3 \coloneqq \left(
        \begin{array}{cc}
            1 & 0 \\
            0 & -1
        \end{array}
        \right). 
    \end{equation}
    Then, for a state 
    $\eta = \tfrac{1}{2}\left(I_2 + x_1\sigma_1 + x_2\sigma_2 + x_3\sigma_3\right)$    
    with Bloch vector $(x_1,x_2,x_3)$, 
    we have 
    \begin{equation}
    \begin{aligned}
    \tr\left[W(\rho,s)\eta^{\otimes2}\right] = S_2\left(\frac{1+s}{s}\eta - \frac{1}{s}\rho \right) = S_{2,\rho,s}(\eta),  
    \end{aligned}
    \end{equation} 
    which is shown in Appendix~\ref{app:witness_derivation_qubit} in detail.
    
    We show illustrative diagrams for the single-qubit case in FIG.~\ref{fig:robustness_witness}.
    As shown in FIG.~\ref{fig:robustness_witness}, 
    for each $s > 0$, the curve $S_{2,\rho,s}(\eta) = 0$ represents a three-dimensional sphere in the Bloch ball, and when $s < R_{\mathcal{F}(\mathcal{H})}$, this sphere separates resource state $\rho$ from the set $\mathcal{F}$ of free states. 
    In this way, the description here makes it possible to interpret our multi-copy ($2$-copy) witness geometrically in the Bloch ball in the single-qubit case. 
    Note that the corresponding interpretation may not be straightforward in the cases of higher dimensions, i.e., $d>2$ in general, due to the lack of suitable high-dimensional generalizations of the Bloch ball. 

{\blue 
\subsection{Multi-copy witness and generalized robustness for QRT of multi-basis coherence}
\label{subsec:example_coherence_multi-bases}

Here we give an explicit example of a multi-copy witness as constructed in Sec.~\ref{subsec:general_derivation_single_qubit}. 
We consider the QRT of multi-basis coherence, where we want coherent single-qubit states with respect to $x$, $y$, and $z$ bases; 
refer to  Example~\ref{example:coherence} in Sec.~\ref{sec:worst-case} for physical interpretation and motivation of this setup. 
This example concisely illustrates how our multi-copy witnesses are constructed in a relevant single-qubit non-convex QRT. 
Note that while the construction of a multi-copy witness in a general $d$-dimensional QRT also proceeds in a similar manner, the computation may be more complicated in such a general case due to the lack of the simple geometric interpretation given in Sec.~\ref{subsec:general_derivation_single_qubit}. 

Let $\mathcal{H}$ be a two-dimensional Hilbert space. 
For $\alpha = x, y, z$, define the set of incoherent states in $\alpha$ basis as 
\begin{equation}
    \begin{aligned}
        \mathcal{F}_{\alpha}(\mathcal{H}) \coloneqq \Bigg\{
        &\sigma \in \mathcal{D}(\mathcal{H}): \\ 
        &\sigma = p\ketbra{+_\alpha}{+_{\alpha}} + (1-p)\ketbra{-_\alpha}{-_{\alpha}},\\ 
        &0\leq p\leq 1 \Bigg\}, 
    \end{aligned}
\end{equation}
where $\{\ket{+_{\alpha}}, \ket{-_{\alpha}}\}$ denotes the orthonormal $\alpha$ basis, namely, 
\begin{alignat}{3}
    &\ket{+_x} \coloneqq \frac{\ket{0}+\ket{1}}{\sqrt{2}},&\quad\quad &\ket{-_x} \coloneqq \frac{\ket{0}-\ket{1}}{\sqrt{2}},& \\ 
    &\ket{+_y} \coloneqq \frac{\ket{0}+i\ket{1}}{\sqrt{2}},& \quad\quad&\ket{-_y} \coloneqq \frac{\ket{0}-i\ket{1}}{\sqrt{2}},&\\
    &\ket{+_z} \coloneqq \ket{0},& \quad\quad &\ket{-_z} \coloneqq \ket{1},&
\end{alignat}
with the computational basis $\{\ket{0},\ket{1}\}$. 
Define the set of free states as 
\begin{equation}
    \mathcal{F}(\mathcal{H}) = \bigcup_{\alpha = x,y,z} \mathcal{F}(\mathcal{H}_{\alpha}). 
\end{equation}
Suppose that we are given a resource state $\rho \in \mathcal{D}(\mathcal{H})\backslash \mathcal{F}(\mathcal{H})$ with the Bloch vector $(r_1,r_2,r_3)$. 
The condition that $\rho$ is a resource state gives a constraint on the Bloch vector that at least two of $r_1,r_2,r_3$ are not zero. 

In Sec.~\ref{subsec:general_derivation_single_qubit}, we already constructed the Hermitian operator $W(\rho,s)$ in~\eqref{eq:quadratic_witness_qubit}, which serves as a $2$-copy witness for {\blue $s < R_{\mathcal{F}(\mathcal{H})}$}. 
To see when $W(\rho,s)$ can discern $\rho$ from $\mathcal{F}(\mathcal{H})$ in the QRT of multi-basis coherence, 
here we compute the generalized robustness of $\rho$. 

For the computation of $R_{\mathcal{F}(\mathcal{H})}$, we use our analysis in Sec.~\ref{sec:worst-case}. In particular, using Lemma~\ref{lem:worst_case_robustness}, we have  
\begin{equation}
    \label{eq:robustness_multi_basis_coherence}
    R_{\mathcal{F}(\mathcal{H})}(\rho) = \min \Big\{R_{F_\alpha(\mathcal{H})}(\rho):\alpha = x,y,z\Big\}. 
\end{equation}
Hence, we can compute $R_{\mathcal{F}(\mathcal{H})}(\rho)$ by independently computing $R_{\mathcal{F}_\alpha(\mathcal{H})}$ for $\alpha = x,y,z$ and then taking the minimum. 
Let us focus on $\mathcal{F}_{x}(\mathcal{H})$, as the other cases proceed similarly.  
Recall that the region $S_{2,\rho,s}(\eta) \geq 0$ is expressed as  
\begin{equation}
    \left(\dfrac{1+s}{s}x_1 - \dfrac{1}{s}r_1\right)^2 + \left(\dfrac{1+s}{s}x_2 - \dfrac{1}{s}r_2\right)^2 + \left(\dfrac{1+s}{s}x_3 - \dfrac{1}{s}r_3\right)^2 \leq 1. 
\end{equation}
Considering the geometric analysis in Sec.~\ref{subsec:general_derivation_single_qubit}, we have $s = R_{\mathcal{F}_x(\mathcal{H})}(\rho)$ if and only if the surface of the region touches the set of free states, which is expressed by 
\begin{equation}
    \left\{(x_1,0,0): -1\leq x_1 \leq 1\right\}. 
\end{equation}
Hence, when $s = R_{\mathcal{F}_x(\mathcal{H})}(\rho)$, the equation 
\begin{equation}
    \left(\dfrac{1+s}{s}x_1 - \dfrac{1}{s}r_1\right)^2 + \dfrac{r_2^2}{s^2} + \dfrac{r_3^2}{s^2} = 1
\end{equation}
with respect to $x_1$ has a solution with multiplicity $2$. 
This occurs if and only if 
\begin{equation}
     \dfrac{r_2^2}{s^2} + \dfrac{r_3^2}{s^2} = 1; 
\end{equation}
that is, since $s > 0$,  
\begin{equation}
    s = \sqrt{r_2^2 + r_3^2}. 
\end{equation}
Hence, we have 
\begin{equation}
    R_{\mathcal{F}_{x}(\mathcal{H})}(\rho) = \sqrt{r_2^2 + r_3^2}. 
\end{equation}
Similarly, 
\begin{align}
    R_{\mathcal{F}_{y}(\mathcal{H})}(\rho) &= \sqrt{r_3^2 + r_1^2}, \\ 
    R_{\mathcal{F}_{z}(\mathcal{H})}(\rho) &= \sqrt{r_1^2 + r_2^2}. 
\end{align}
Thus, from~\eqref{eq:robustness_multi_basis_coherence}, 
\begin{equation}
    R_{\mathcal{F}(\mathcal{H})}(\rho) = \min\left\{\sqrt{r_2^2 + r_3^2}, \sqrt{r_3^2 + r_1^2}, \sqrt{r_1^2 + r_2^2} \right\}. 
\end{equation}
Without loss of generality, assume that $R_{\mathcal{F}(\mathcal{H})}(\rho) = \sqrt{r_1^2 + r_2^2}$. 
Then, with $s = \sqrt{r_1^2 + r_2^2}$, $W(\rho,s)$ constructed in Theorem~\ref{thm:high_order_witness_qudit}  is now given as 
\begin{equation}
    \begin{aligned}
    W(\rho, s)
        &\coloneqq \left(\dfrac{r_1^2+r_2^2+r_3^2}{r_1^2 + r_2^2} - 1\right)I_2\otimes I_2 \\ 
        &+ \dfrac{\left(1+\sqrt{r_1^2 + r_2^2}\right)^2}{r_1^2 + r_2^2}\left[\sigma_1\otimes\sigma_1 + \sigma_2\otimes\sigma_2 + \sigma_3\otimes\sigma_3\right]\\
        &-\dfrac{1+\sqrt{r_1^2 + r_2^2}}{r_1^2 + r_2^2}\Bigg[r_1\left(I_2\otimes \sigma_1 + \sigma_1\otimes I_2\right) \\ 
        &+r_2\left(I_2\otimes \sigma_2 + \sigma_2\otimes I_2\right) 
        +r_3\left(I_2\otimes \sigma_3 + \sigma_3\otimes I_2\right)\Bigg].  
    \end{aligned}
\end{equation}
}

\section{Analysis of weight-based measure}
\label{sec:extension_weight}
In this section, we characterize another resource measure, the weight of resource \cite{Lewenstein1998,Skrzypczyk2014,Pusey2015,Cavalcanti2016,Kaifeng2018,Ducuara2020,Uola2020}, which was also conventionally investigated in convex QRTs. 
In Sec.~\ref{subsec:construction_multi-copy_witness_weight}, we give another construction of multi-copy witness based on the weight of resource. 
In Sec.~\ref{subsec:operational_advantage_weight}, we show that every quantum state is useful for some channel exclusion task based on the multi-copy witnesses that we construct in Sec.~\ref{subsec:construction_multi-copy_witness_weight}. 
In Sec.~\ref{subsec:worstcase_advantage_weight}, we show that the weight of resource can be operationally interpreted as the worst-case advantage in channel exclusion. 

Here, we recall the definition of the weight of resource. 
\begin{definition}
    Let $\rho \in \mathcal{D}(\mathcal{H})$ be a quantum state. Then, the weight of resource $\mathrm{WoR}_{\mathcal{F}(\mathcal{H})}(\rho)$ of $\rho$ with respect to the set $\mathcal{F}(\mathcal{H})$ of free states is defined by 
    \begin{equation}
        \begin{aligned}
        \label{eq:weight}
        \mathrm{WoR}_{\mathcal{F}(\mathcal{H})}(\rho) = \min_{\sigma \in \mathcal{F}(\mathcal{H}), \tau \in \mathcal{D}(\mathcal{H})}\Bigg\{s\geq 0 : \
        &\rho = s\tau + (1-s)\sigma\Bigg\}. 
        \end{aligned}
    \end{equation}
\end{definition}
Although the form in the definition resembles the one in the generalized robustness, these two resource measures behave differently and provide distinct operational constraints in the context of resource distillation~\cite{Regula2021_fundamentallimitation}.
Note that in the definition of the weight of resource, we may always pick $s = 1$ by taking $\tau = \rho$. 
Thus, for all states $\rho$, $\mathrm{WoR}_{\mathcal{F}(\mathcal{H})}(\rho) \leq 1$. 

\subsection{Another construction of multi-copy witness}
\label{subsec:construction_multi-copy_witness_weight}
In this section, we give another construction of multi-copy witnesses in QRTs without convexity restriction, based on the weight of resource. 
For the construction, we use the following lemma, which characterizes quantum states inspired by the definition of the weight of resource. 
We give the proof in Appendix~\ref{app:proof_witness_characterization_weight}. 
\begin{lemma}~\label{lem:witness_characterization_weight}
    Let $\rho$ be a quantum state, and let $0 < s < 1$. 
    Consider a quantum state $\eta$.  
    Then, 
    $\eta$ can be expressed as 
    \begin{equation}
        \eta = \frac{\rho - s' \tau}{1 - s'}
    \end{equation}
    by using some positive number $0 < s' \leq s$ and some state $\tau \in \mathcal{D}(\mathcal{H})$ 
    if and only if it holds for all $m = 1,2,\ldots,d$ that
    \begin{equation}
        S_m\left(\frac{1}{s}\rho - \frac{1-s}{s} \eta\right) \geq 0,
    \end{equation}
     with $S_m$ given in Lemma~\ref{lem:bloch_ball}. 
\end{lemma}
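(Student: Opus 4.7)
The plan is to mirror the proof of Lemma~\ref{lem:witness_characterization}, replacing the mixing-in of $\tau$ by an unmixing (since the weight asks whether one can remove a free component from $\rho$ rather than add noise to it). First, I will eliminate $\tau$ from the equation $\eta = (\rho - s'\tau)/(1 - s')$ by solving for it, which yields
\begin{equation}
    \tau = \frac{1}{s'}\rho - \frac{1 - s'}{s'}\eta =: A(s').
\end{equation}
Because $\tr[\rho] = \tr[\eta] = 1$, the operator $A(s')$ automatically has unit trace, so the existence of a valid state $\tau$ at parameter $s' \in (0, s]$ is equivalent to the existence of some $s' \in (0, s]$ for which $A(s') \geq 0$.

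Next I would invoke the Byrd--Khaneja characterization (Lemma~\ref{lem:bloch_ball}): since $A(s)$ is Hermitian with unit trace, the positivity $A(s) \geq 0$ is equivalent to $S_m(A(s)) \geq 0$ for all $m = 1, \ldots, d$. Hence the stated Bloch-type condition in the lemma is precisely $A(s) \geq 0$ in disguise, and the proof reduces to showing that existence of some $s' \in (0, s]$ with $A(s') \geq 0$ is equivalent to $A(s) \geq 0$.

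The only piece with real content is this latter equivalence. The direction $A(s) \geq 0 \Rightarrow$ existence is immediate by taking $s' = s$. For the converse, I would use the algebraic identity
\begin{equation}
    A(s) = \frac{s'}{s}\,A(s') + \frac{s - s'}{s}\,\eta,
\end{equation}
which follows at once by comparing the two expressions $\rho = s A(s) + (1 - s)\eta = s' A(s') + (1 - s')\eta$ for $\rho$. Since the two coefficients are nonnegative and sum to one, this exhibits $A(s)$ as a convex combination of the positive semidefinite operators $A(s')$ and $\eta$ (both states), whence $A(s) \geq 0$.

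I do not expect serious obstacles: Lemma~\ref{lem:bloch_ball} does all the heavy lifting in translating positivity into the $S_m$ conditions, and the substantive content reduces to the one-line convex-combination identity above. The only care needed is for boundary cases, notably $\eta = \rho$ (handled by taking any $s' \in (0, s]$ with $\tau = \rho$) and respecting the strict positivity $s' > 0$ when $\eta \neq \rho$; neither affects the main argument.
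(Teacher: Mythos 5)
Your proposal is correct and follows essentially the same route as the paper's proof: both reduce the statement via Lemma~\ref{lem:bloch_ball} to positive semidefiniteness of the unit-trace Hermitian operator $\frac{1}{s}\rho-\frac{1-s}{s}\eta$, handle one direction by taking $s'=s$ with $\tau$ equal to that operator, and handle the other by exhibiting it as a nonnegative combination of two states (you combine $A(s')=\tau$ with $\eta$, the paper combines $\rho$ with $\tau$ — a cosmetic difference). No gaps.
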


Now, we show the construction of our multi-copy witnesses. 
\begin{theorem}
    \label{thm:high_order_witness_qudit_weight}
     Let $\mathcal{H}$ be a $d$-dimensional Hilbert space. 
    Let $\rho \in \mathcal{D}(\mathcal{H})\backslash\mathcal{F}(\mathcal{H})$ be a resource state.  
    Then, for $0 < s < 1$, we can construct a family $\widetilde{\mathcal{W}}^{(\mathrm{WoR})}_{\rho,s} \coloneqq (\widetilde{W}^{(\mathrm{WoR})}_{m}(\rho,s) \in \Herm(\mathcal{H}^{\otimes m}): m = 2,3,\ldots,d)$ of Hermitian operators with the following property:   
    \begin{align}
    \label{eq:conventional_witness_resource_weight}
    &\max_{m = 2,3,\ldots,d}\tr\left[\widetilde{W}^{(\mathrm{WoR})}_{m}(\rho,s)\rho^{\otimes m}\right] < 0, \\ 
    \label{eq:conventional_witness_free_weight}
    &\max_{m = 2,3,\ldots,d} \tr\left[\widetilde{W}^{(\mathrm{WoR})}_{m}(\rho,s)\sigma^{\otimes m}\right] \geq 0, \,\,\forall \sigma \in \mathcal{F}(\mathcal{H})
\end{align}
    if and only if
    \begin{equation}
        s < \mathrm{WoR}_{\mathcal{F}(\mathcal{H})}(\rho).
    \end{equation}
\end{theorem}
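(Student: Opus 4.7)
The plan is to mirror the two-step proof strategy already used for the generalized robustness (Proposition~\ref{thm:high_order_witness_qudit} plus the $\Delta_m$ shift that converts it into Theorem~\ref{cor:high_order_witness_qudit_conventional}), with the only substantive change being that the argument of $S_m$ is replaced by the combination dictated by Lemma~\ref{lem:witness_characterization_weight}. Concretely, I would first prove the intermediate statement: there is a family $(W_m^{(\mathrm{WoR})}(\rho,s))_{m=2}^d$ of Hermitian operators satisfying (i) $\tr[W_m^{(\mathrm{WoR})}(\rho,s)\rho^{\otimes m}]\geq 0$ for every $m$ and every $0<s<1$, and (ii) for each $\sigma\in\mathcal{F}(\mathcal{H})$ there exists an $m\in\{2,\ldots,d\}$ with $\tr[W_m^{(\mathrm{WoR})}(\rho,s)\sigma^{\otimes m}]<0$ if and only if $s<\mathrm{WoR}_{\mathcal{F}(\mathcal{H})}(\rho)$.

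The heart of the construction is the identity
\begin{equation}
    \tr\bigl[W_m^{(\mathrm{WoR})}(\rho,s)\,\eta^{\otimes m}\bigr] \;=\; S_{m,\rho,s}^{(\mathrm{WoR})}(\eta) \;\coloneqq\; S_m\!\left(\tfrac{1}{s}\rho - \tfrac{1-s}{s}\eta\right).
\end{equation}
Since $S_m$ is a degree-$m$ polynomial in its argument, $S_{m,\rho,s}^{(\mathrm{WoR})}(\eta)$ is a real polynomial of degree at most $m$ in the generalized Bloch components $(x_j)_{j=1}^{d^2-1}$ of $\eta$. I would expand it monomial by monomial exactly as in Eq.~\eqref{eq:construction_witness_fraction} and set $W_m^{(\mathrm{WoR})}(\rho,s)$ equal to the obvious sum of tensor products of generalized Gell-Mann matrices and identities on $\mathcal{H}^{\otimes m}$, with the modified coefficients $c^{(l,m)}_{n_1,\ldots,n_{d^2-1}}$ arising from the new argument of $S_m$. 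The verification that this representation reproduces the polynomial under the trace $\tr[\,\cdot\,\eta^{\otimes m}]$ is the analogue of Eq.~\eqref{eq:witness_derivation} and is word-for-word the same calculation as in Appendix~\ref{app:witness_derivation}.

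Given this identity, claim (i) is immediate because $\tfrac{1}{s}\rho - \tfrac{1-s}{s}\rho = \rho\succeq 0$ implies $S_{m,\rho,s}^{(\mathrm{WoR})}(\rho)=S_m(\rho)\geq 0$ by Lemma~\ref{lem:bloch_ball}. For claim (ii), Lemma~\ref{lem:witness_characterization_weight} states that $\eta$ admits a representation $\eta=(\rho-s'\tau)/(1-s')$ with $0<s'\leq s$ and $\tau\in\mathcal{D}(\mathcal{H})$ precisely when $S_{m,\rho,s}^{(\mathrm{WoR})}(\eta)\geq 0$ for all $m$. When $s<\mathrm{WoR}_{\mathcal{F}(\mathcal{H})}(\rho)$, no free $\sigma$ can be written this way with $s'\leq s$ by the definition of the weight, so each $\sigma\in\mathcal{F}(\mathcal{H})$ must fail the inequality for at least one $m$; conversely, when $s\geq\mathrm{WoR}_{\mathcal{F}(\mathcal{H})}(\rho)$, a free state attaining the optimum has $s'=\mathrm{WoR}_{\mathcal{F}(\mathcal{H})}(\rho)\leq s$ and therefore satisfies $S_{m,\rho,s}^{(\mathrm{WoR})}(\sigma)\geq 0$ for every $m$.

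Finally, I would upgrade the intermediate proposition to the theorem statement by the same trick employed in the proof of Theorem~\ref{cor:high_order_witness_qudit_conventional}: define
\begin{equation}
    \widetilde{W}_m^{(\mathrm{WoR})}(\rho,s) \coloneqq -C\!\left(W_m^{(\mathrm{WoR})}(\rho,s) + \Delta_m^{(\mathrm{WoR})}(\rho,s)\,I_d^{\otimes m}\right),
\end{equation}
with $\Delta_m^{(\mathrm{WoR})}(\rho,s)\coloneqq\min_{\sigma\in\mathcal{F}(\mathcal{H})}|\tr[W_m^{(\mathrm{WoR})}(\rho,s)\sigma^{\otimes m}]|$ and a suitable normalization $C>0$. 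This shift reverses the roles of the non-strict and strict inequalities, converting $\geq 0$ at $\rho$ into $<0$ and the existence of a violating $m$ at each free $\sigma$ into the existence of an $m$ with $\geq 0$, yielding Eqs.~\eqref{eq:conventional_witness_resource_weight}--\eqref{eq:conventional_witness_free_weight}. I expect the main obstacle to be administrative rather than conceptual: carrying out the monomial bookkeeping that establishes $\tr[W_m^{(\mathrm{WoR})}(\rho,s)\eta^{\otimes m}]=S_{m,\rho,s}^{(\mathrm{WoR})}(\eta)$ with the new coefficients, and handling the boundary case where $\Delta_m^{(\mathrm{WoR})}$ could in principle vanish for some $m$, which is resolved by restricting attention to those $m$ that witness a given $\sigma$ (guaranteed nonempty by the intermediate proposition).
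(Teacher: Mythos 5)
Your proposal is correct and follows essentially the same route as the paper's own proof: the paper likewise defines $\widetilde{S}_{m,\rho,s}(\eta)=S_m\bigl(\tfrac{1}{s}\rho-\tfrac{1-s}{s}\eta\bigr)$, builds $W_m^{(\mathrm{WoR})}(\rho,s)$ via the same Gell-Mann monomial expansion used for Proposition~\ref{thm:high_order_witness_qudit}, invokes Lemma~\ref{lem:witness_characterization_weight} to get the if-and-only-if with $\mathrm{WoR}_{\mathcal{F}(\mathcal{H})}(\rho)$, and then applies the identical $-C\bigl(W_m^{(\mathrm{WoR})}+\Delta_m^{(\mathrm{WoR})}I^{\otimes m}\bigr)$ shift to reach the stated inequalities. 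Your remark about the possible vanishing of $\Delta_m^{(\mathrm{WoR})}$ is a fair caveat that the paper itself glosses over, but it does not change the argument's structure.
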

\begin{proof}

For a state $\eta \in \mathcal{D}(\mathcal{H})$, define 
\begin{equation}
    \widetilde{S}_{m,\rho,s}(\eta) \coloneqq S_m\left(\frac{1}{s}\rho - \frac{1-s}{s} \eta \right) 
\end{equation}
for $m = 1,2,\ldots,d$.
Note that 
\begin{equation}
    \widetilde{S}_{1,\rho,s}(\eta) =  \frac{1}{s}\tr[\rho] - \frac{1-s}{s}\tr[\eta] = 1\geq 0, 
\end{equation}
and thus, we will focus on $m=2,3,\ldots,d$ in the following.

It trivially follows that $\widetilde{S}_{m,\rho,s}(\rho) = S_m(\rho) \geq 0$ for all $m$ since $\rho$ is a state, in particular, a positive semidefinite operator. 
When $s < R_{\mathcal{F}(\mathcal{H})}(\rho)$, by the definition of the weight of resource, for any free state $\sigma \in \mathcal{F}(\mathcal{H})$, there is no positive number $s' \leq  s$ such that 
\begin{equation}
    \sigma = \frac{\rho - s'\tau}{1-s'}
\end{equation}
with some $\tau \in \mathcal{D}(\mathcal{H})$.
By Lemma~\ref{lem:witness_characterization_weight}, in this case, there exists $m\in\{2,\ldots,d\}$ such that $\widetilde{S}_{m,\rho,s}(\sigma) < 0$
On the other hand, when $s \geq \mathrm{WoR}_{\mathcal{F}(\mathcal{H})}(\rho)$, there exists $0 \leq s' \leq s$ such that 
\begin{equation}
    \sigma = \frac{\rho - s'\tau}{1-s'}
\end{equation}
with some $\tau \in \mathcal{D}(\mathcal{H})$.
Indeed, we can take a free state $\sigma$ achieving $\mathrm{WoR}_{\mathcal{F}(\mathcal{H})}(\rho)$ with $s' = \mathrm{WoR}_{\mathcal{F}(\mathcal{H})}(\rho)$. 
In this case, for all $m = 2,3,\ldots, d$, we have 
$\widetilde{S}_{m,\rho,s}(\sigma)  \geq 0$ for such $\sigma$.
With a similar procedure as in the proof of Proposition~\ref{thm:high_order_witness_qudit}, for all $2\leq m\leq d$ and $s>0$, 
we can construct $W_m(\rho,s)$ satisfying 
\begin{equation}
    \tr\left[W^{(\mathrm{WoR})}_{m}(\rho,s) \eta^{\otimes n}\right] = \widetilde{S}_{m,\rho,s}(\eta)
\end{equation}
for $\eta \in \mathcal{D}(\mathcal{H})$. 

Now, based on $W^{(\mathrm{WoR})}_m(\rho,s)$, we define 
\begin{equation}
    \widetilde{W}^{(\mathrm{WoR})}_{m}(\rho,s) \coloneqq -C\left(W(\rho,s) + \Delta^{(\mathrm{WoR})}_{m}(\rho,s) I^{\otimes m}\right), 
\end{equation}
with 
\begin{equation}
    \Delta^{(\mathrm{WoR})}_{m}(\rho,s) \coloneqq \min_{\sigma \in \mathcal{F}(\mathcal{H})} 
    \left|\tr[W^{(\mathrm{WoR})}_m(\rho,s)\sigma^{\otimes m}]\right| (> 0), 
\end{equation}
where $C>0$ is any suitable constant for normalization. 
With the same argument used to show Theorem~\ref{cor:high_order_witness_qudit_conventional} from Proposition~\ref{thm:high_order_witness_qudit}, this construction satisies Eqs.~\eqref{eq:conventional_witness_resource_weight} and \eqref{eq:conventional_witness_free_weight}. 
\end{proof}

\subsection{Operational advantage in multi-input channel exclusion task}
\label{subsec:operational_advantage_weight}
In this section, we show that every resource is useful in a task called {\em channel exclusion} \cite{Ducuara2020,Uola2020}, based on the multi-copy witnesses constructed in the previous section. 

We consider an $m$-input channel exclusion task. 
In the $m$-input channel exclusion task, one aims to choose a channel that did not take place. 
Let $\rho \in \mathcal{D}(\mathcal{H})$ be a state. 
Let $\{p_i,\Lambda_i^{(m)}\}_i$ be a channel ensemble, where $\Lambda_i^{(m)}$ are channels acting on $\mathcal{D}(\mathcal{H}^{\otimes m})$. 
We are given a state $\Lambda^{(m)}(\rho^{\otimes m})$, where $\Lambda_i^{(m)}$ is randomly sampled with probability $p_i$ from the ensemble.
The goal of this task is to choose a channel that was not applied to $\rho^{\otimes m}$. 
For this purpose, we perform a quantum measurement $\{M_i\}_i$, where the measurement result $i$ indicates that the channel $\Lambda^{(m)}_i$ took place. 
The error probability of this task is defined by 
\begin{equation}
    p_{\mathrm{err}}(\rho,\{p_i,\Lambda^{(m)}_i\}_i,\{M_i\}_i) \coloneqq \sum_{i} p_i \tr[M_i \Lambda^{(m)}(\rho^{\otimes m})]. 
\end{equation}
In the task, we would like to minimize the error probability by appropriately choosing a measurement strategy, so we consider 
\begin{equation}
    \min_{\{M_i\}_i} p_{\mathrm{err}}(\rho,\{p_i,\Lambda^{(m)}_i\}_i,\{M_i\}_i). 
\end{equation}
We show that for any resource state $\rho \in \mathcal{D}(\mathcal{H})$, we can design an $m$-input channel exclusion task based on the multi-copy witnesses constructed in the previous section so that $\rho$ has an advantage over any free states. 

\begin{theorem}
\label{thm:advantage_exclusion}
Let $\mathcal{H}$ be a $d$-dimensional Hilbert space. 
For any resource state $\rho \in \mathcal{D}(\mathcal{H})\backslash\mathcal{F}(\mathcal{H})$, there exists a family of channel ensembles $\left(\{p_i,\Lambda^{(m)}_i\}_i\right)_{m=2}^{d}$ such that 
\begin{equation}
    \begin{aligned}
        \max_{\sigma \in \mathcal{F}(\mathcal{H})} \min_{m = 2,3,\ldots,d} \frac{\displaystyle \min_{\{M_i\}_i} p_{\mathrm{err}}(\{p_i,\Lambda^{(m)}_i\}_i,\{M_i\}_i,\rho^{\otimes m})}{\displaystyle \min_{\{M_i\}_i} p_{\mathrm{err}}(\{p_i,\Lambda^{(m)}_i\}_i,\{M_i\}_i,\sigma^{\otimes m})}
        < 1. 
    \end{aligned} 
\end{equation}
\end{theorem}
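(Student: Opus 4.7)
The plan is to mirror the proof of Theorem~\ref{thm:advantage_multicopy}, replacing the robustness-based multicopy witness with the weight-based one from Theorem~\ref{thm:high_order_witness_qudit_weight}, and translating binary channel discrimination into binary channel exclusion. First, I would fix some $0 < s < \mathrm{WoR}_{\mathcal{F}(\mathcal{H})}(\rho)$, invoke Theorem~\ref{thm:high_order_witness_qudit_weight} to obtain $(\widetilde{W}^{(\mathrm{WoR})}_m(\rho,s))_{m=2}^d$, and renormalize by
\begin{equation}
W_m \coloneqq I_d^{\otimes m} - \widetilde{W}^{(\mathrm{WoR})}_m(\rho,s)/\|\widetilde{W}^{(\mathrm{WoR})}_m(\rho,s)\|_\infty
\end{equation}
exactly as in the proof of Theorem~\ref{thm:advantage_multicopy}. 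This yields $\tr[W_m \rho^{\otimes m}] > 1$ for every $m$, and for each $\sigma \in \mathcal{F}(\mathcal{H})$ at least one witnessing index $m_\sigma \in \{2,\ldots,d\}$ with $\tr[W_{m_\sigma} \sigma^{\otimes m_\sigma}] \in [0,1]$. Using the same qubit-output channel pair $\Lambda_1^{(m)}, \Lambda_2^{(m)}$ and uniform two-element ensemble $\{\tfrac{1}{2},\Lambda_i^{(m)}\}_{i=1}^2$ as in the proof of Theorem~\ref{thm:advantage_multicopy}, the trace-norm computation carried out there gives $\|\Lambda_1^{(m)}(\eta^{\otimes m}) - \Lambda_2^{(m)}(\eta^{\otimes m})\|_1 = 2\tr[W_m \eta^{\otimes m}]/\|W_m\|_\infty$ for any state $\eta$.

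Second, I would derive the binary-exclusion error formula via a Helstrom-type argument (parametrize $M_2 = I - M_1$ and take $M_1$ as the projector onto the positive part of $\Lambda_1^{(m)}(\eta^{\otimes m}) - \Lambda_2^{(m)}(\eta^{\otimes m})$), giving
\begin{equation}
\min_{\{M_i\}} p_{\mathrm{err}}(\eta^{\otimes m}, \{\tfrac{1}{2},\Lambda_i^{(m)}\}_i, \{M_i\}) = \tfrac{1}{2}\bigl(1 - \tfrac{1}{2}\|\Lambda_1^{(m)}(\eta^{\otimes m}) - \Lambda_2^{(m)}(\eta^{\otimes m})\|_1\bigr),
\end{equation}
i.e.\ the discrimination formula with a reversed sign, so that a larger trace distance now decreases the error. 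Plugging in the bounds from the previous paragraph, $\min_{\{M_i\}} p_{\mathrm{err}}(\rho^{\otimes m}) < \tfrac{1}{2}(1 - 1/\|W_m\|_\infty)$ holds at every $m$ (using $\tr[W_m \rho^{\otimes m}] > 1$), while $\min_{\{M_i\}} p_{\mathrm{err}}(\sigma^{\otimes m_\sigma}) \geq \tfrac{1}{2}(1 - 1/\|W_{m_\sigma}\|_\infty)$ at the witnessing $m_\sigma$ (using $\tr[W_{m_\sigma} \sigma^{\otimes m_\sigma}] \leq 1$).

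Dividing, the pointwise ratio at $m = m_\sigma$ is controlled by the $\sigma$-independent constant
\begin{equation}
c_{m_\sigma} \coloneqq \frac{1 - \tr[W_{m_\sigma} \rho^{\otimes m_\sigma}]/\|W_{m_\sigma}\|_\infty}{1 - 1/\|W_{m_\sigma}\|_\infty} < 1,
\end{equation}
so that $\min_m (\mathrm{ratio}) \leq c_{m_\sigma} \leq \max_{m \in \{2,\ldots,d\}} c_m < 1$ for every $\sigma$, and hence the outer $\max_\sigma$ is strictly less than $1$ as claimed. I expect the main subtlety to lie precisely in this last step: the strict inequality must survive the operation $\max_\sigma \min_m$. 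It is essential that the bounding constants $c_m$ depend only on $(\rho,s,m)$ and not on $\sigma$, and that $m$ ranges over the finite set $\{2,\ldots,d\}$, so that the maximum of finitely many numbers strictly below $1$ is itself strictly below $1$; without both ingredients, the strict ratio bound could in principle degrade to $1$ in the supremum over $\mathcal{F}(\mathcal{H})$.
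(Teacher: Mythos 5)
Your proposal is correct and follows essentially the same route as the paper's proof in Appendix~\ref{app:proof_advantage_exclusion}: the same weight-based witnesses from Theorem~\ref{thm:high_order_witness_qudit_weight}, the same renormalization $W_m = I_d^{\otimes m} - \widetilde{W}^{(\mathrm{WoR})}_m(\rho,s)/\|\widetilde{W}^{(\mathrm{WoR})}_m(\rho,s)\|_\infty$, the same binary channel ensemble, and the same exclusion-error formula, with your $\sigma$-independent constants $c_m$ merely making explicit the uniformity over $\mathcal{F}(\mathcal{H})$ that the paper's final $\max_\sigma$ step leaves implicit. One trivial slip: to minimize the exclusion error you should take $M_1$ to be the projector onto the \emph{negative} (not positive) part of $\Lambda^{(m)}_1(\eta^{\otimes m}) - \Lambda^{(m)}_2(\eta^{\otimes m})$, but the formula $\tfrac{1}{2}\bigl(1 - \tfrac{1}{2}\|\Lambda^{(m)}_1(\eta^{\otimes m}) - \Lambda^{(m)}_2(\eta^{\otimes m})\|_1\bigr)$ that you actually use is correct.
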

The proof is in a similar line with that of Theorem~\ref{thm:advantage_multi-copy}. 
The detail of the proof is provided in Appendix~\ref{app:proof_advantage_exclusion}. 

\subsection{Worst-case advantage in channel exclusion task}
\label{subsec:worstcase_advantage_weight}
In this section, we give an operational characterization of the weight of resource in general QRTs without convexity restriction in the same way as Sec.~\ref{sec:worst-case}. 

We consider once more that the set $\mathcal{F}(\mathcal{H})$ of free states can be expressed as a union  
    \begin{equation}
        \mathcal{F}(\mathcal{H}) = \bigcup_{k} \mathcal{F}_k(\mathcal{H}), 
    \end{equation}
where $\mathcal{F}_k(\mathcal{H})$ denotes a closed convex set for all $k$. 

We introduce a concept of \textit{worst-case} advantage of channel exclusion in general QRTs without convexity restriction. 
We consider a single-copy channel channel exclusion, where $m=1$ in the $m$-input channel discrimination.  
We fix the measurement strategy in this case; that is, when we quantify an advantage of a resource state, we use the same measurement for the resource state and the free states. 
Let $\{p_i,\Lambda_i\}_i$ be a channel ensemble, and $\{M_i\}_i$ be a given measurement. 
We define the error ratio of a resource state $\rho$ with respect to $\mathcal{F}_k(\mathcal{H})$ as 
\begin{equation}
    \label{eq:k-advantage exclusion}
    \dfrac{p_{\mathrm{err}}(\rho,\{p_i,\Lambda_i\}_i,\{M_i\}_i)}{\min_{\sigma_k \in \mathcal{F}_k(\mathcal{H})} p_{\mathrm{err}}(\sigma_k,\{p_i,\Lambda_i\}_i,\{M_i\}_i)}. 
\end{equation}
The worst-case error ratio of $\rho$ is the supremum of Eq.~\eqref{eq:k-advantage exclusion} over $k$. 
Since we would like to minimize the error probability in channel exclusion, we introduce the minimum error ratio as 
\begin{equation}
    \min_{\sigma_k \in \mathcal{F}_k(\mathcal{H})}\dfrac{p_{\mathrm{err}}(\rho,\{p_i,\Lambda_i\}_i,\{M_i\}_i)}{\min_{\sigma_k \in \mathcal{F}_k(\mathcal{H})} p_{\mathrm{err}}(\sigma_k,\{p_i,\Lambda_i\}_i,\{M_i\}_i)}. 
\end{equation}
The worst-case advantage of $\rho$ in this case is defined as 
\begin{equation}
    1 - \sup_{k} \min_{\sigma_k \in \mathcal{F}_k(\mathcal{H})} \dfrac{p_{\mathrm{err}}(\rho,\{p_i,\Lambda_i\}_i,\{M_i\}_i)}{\min_{\sigma_k \in \mathcal{F}_k(\mathcal{H})} p_{\mathrm{err}}(\sigma_k,\{p_i,\Lambda_i\}_i,\{M_i\}_i)}
\end{equation}

We find that the worst-case advantage for channel exclusion is quantified by the generalized robustness. 
\begin{theorem}
\label{thm:advantage worst case exclusion}
    For any resource state $\rho \in \mathcal{F}(\mathcal{H})\backslash \mathcal{D}(\mathcal{H})$, 
    \begin{equation}
        \begin{aligned}
            &1 - \sup_{k} \min_{\{p_i,\Lambda_i\}_i,\{M_i\}_i} \dfrac{p_{\mathrm{err}}(\rho,\{p_i,\Lambda_i\}_i,\{M_i\}_i)}{\min_{\sigma_k \in \mathcal{F}_k(\mathcal{H})} p_{\mathrm{err}}(\sigma_k,\{p_i,\Lambda_i\}_i,\{M_i\}_i)} \\  
            &= \mathrm{WoR}_{\mathcal{F}(\mathcal{H})}(\rho). 
        \end{aligned}
    \end{equation}
\end{theorem}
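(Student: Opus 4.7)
The plan is to mirror the proof of Theorem~\ref{thm:advantage worst case} line by line, replacing robustness-based ingredients with their weight-based analogs. Two inputs are needed: (i) an analog of Lemma~\ref{lem:worst_case_robustness} expressing $\mathrm{WoR}_{\mathcal{F}(\mathcal{H})}(\rho)$ as an infimum of $\mathrm{WoR}_{\mathcal{F}_k(\mathcal{H})}(\rho)$ over the convex pieces, and (ii) the convex-case operational characterization of the weight of resource in terms of channel exclusion, due to Refs.~\cite{Ducuara2020,Uola2020}, which applies separately to each convex $\mathcal{F}_k(\mathcal{H})$.

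First, I would prove the lemma
\begin{equation}
    \mathrm{WoR}_{\mathcal{F}(\mathcal{H})}(\rho) = \inf_k \mathrm{WoR}_{\mathcal{F}_k(\mathcal{H})}(\rho).
\end{equation}
The inequality ``$\leq$'' follows because any feasible decomposition $\rho = s\tau + (1-s)\sigma_k$ with $\sigma_k\in\mathcal{F}_k(\mathcal{H})\subseteq\mathcal{F}(\mathcal{H})$ is feasible for the minimization defining $\mathrm{WoR}_{\mathcal{F}(\mathcal{H})}(\rho)$, so $\mathrm{WoR}_{\mathcal{F}(\mathcal{H})}(\rho)\leq\mathrm{WoR}_{\mathcal{F}_k(\mathcal{H})}(\rho)$ for each $k$, and we take the infimum on the right. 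For ``$\geq$'', take an optimizer $(\sigma^\star,\tau^\star,s^\star = \mathrm{WoR}_{\mathcal{F}(\mathcal{H})}(\rho))$; since $\mathcal{F}(\mathcal{H})=\bigcup_k\mathcal{F}_k(\mathcal{H})$, the free state $\sigma^\star$ lies in some $\mathcal{F}_{k_0}(\mathcal{H})$, which witnesses $\mathrm{WoR}_{\mathcal{F}_{k_0}(\mathcal{H})}(\rho)\leq s^\star$.

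Next, for each fixed $k$, the subset $\mathcal{F}_k(\mathcal{H})$ is convex and closed, so the convex-QRT result of Refs.~\cite{Ducuara2020,Uola2020} applies directly and yields
\begin{equation}
    \min_{\{p_i,\Lambda_i\}_i,\{M_i\}_i} \dfrac{p_{\mathrm{err}}(\rho,\{p_i,\Lambda_i\}_i,\{M_i\}_i)}{\min_{\sigma_k \in \mathcal{F}_k(\mathcal{H})} p_{\mathrm{err}}(\sigma_k,\{p_i,\Lambda_i\}_i,\{M_i\}_i)} = 1 - \mathrm{WoR}_{\mathcal{F}_k(\mathcal{H})}(\rho).
\end{equation}
Taking the supremum over $k$, using $\sup_k(1-x_k) = 1 - \inf_k x_k$, and substituting the lemma gives
\begin{equation}
    \sup_k \min_{\{p_i,\Lambda_i\}_i,\{M_i\}_i} \dfrac{p_{\mathrm{err}}(\rho,\{p_i,\Lambda_i\}_i,\{M_i\}_i)}{\min_{\sigma_k \in \mathcal{F}_k(\mathcal{H})} p_{\mathrm{err}}(\sigma_k,\{p_i,\Lambda_i\}_i,\{M_i\}_i)} = 1 - \mathrm{WoR}_{\mathcal{F}(\mathcal{H})}(\rho),
\end{equation}
and rearranging produces the stated identity.

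The only non-routine point is the lemma itself, and it is essentially identical in structure to Lemma~\ref{lem:worst_case_robustness}; the only subtlety I expect is confirming that the optimizer in the definition of $\mathrm{WoR}_{\mathcal{F}(\mathcal{H})}(\rho)$ is attained (so that the infimum is really realized by a single $\mathcal{F}_{k_0}(\mathcal{H})$), which follows from compactness of $\mathcal{F}(\mathcal{H})$ under the standing closedness assumption. Thus, the main obstacle is not technical, but rather the bookkeeping of checking that the convex-case operational characterization invoked in the second step is stated in Refs.~\cite{Ducuara2020,Uola2020} with sufficient generality (any convex closed free set inside $\mathcal{D}(\mathcal{H})$ on a finite-dimensional Hilbert space), which it is.
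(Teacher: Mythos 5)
Your proposal is correct and follows essentially the same route as the paper: it establishes the weight analog of Lemma~\ref{lem:worst_case_robustness} by the same two-inequality argument, applies the convex-case channel-exclusion characterization of Refs.~\cite{Ducuara2020,Uola2020} to each convex piece $\mathcal{F}_k(\mathcal{H})$, and then takes the extremum over $k$ and rearranges. No substantive differences from the paper's proof.
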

The proof is similar to that of Theorem~\ref{thm:advantage worst case}. 
We give the detail of the proof in Appendix~\ref{app:proof_worst_case_exclusion}. 

\section{Generalized robustness of quantum channels}
\label{sec:extension_channel}
In this section, we characterize an extension of our results to QRTs of quantum channels. 
Here, we will present the analysis of the generalized robustness for quantum channels, but the same extension should also be applicable to the weight of resource using the procedure of Sec.~\ref{sec:extension_weight} in place of those of Secs.~\ref{sec:multi-copy_witness},~\ref{sec:advantage_multi-copy}, and~\ref{sec:worst-case}. 

Let $\mathcal{H}_1$ and $\mathcal{H}_2$ be Hilbert spaces with dimensions $d_1$ and $d_2$, respectively. 
We consider a subset $\mathcal{O}_{\textup{F}}(\mathcal{H}_1\to \mathcal{H}_2)$ of quantum channels from $\mathcal{H}_1$ to $\mathcal{H}_2$, which we call the set of free channels. 
For brevity, we write $\mathcal{O}_{\textup{F}}$ when the Hilbert spaces are obvious from the context.  
The generalized robustness $R_{\mathcal{O}_{\textup{F}}}$ for quantum channels is defined in a similar way we defined the generalized robustness for quantum states~\cite{Takagi2019a}. 
    \begin{definition}
    The generalized robustness $R_{\mathcal{O}_{\textup{F}}}$ is defined as 
    \begin{equation}
        \begin{aligned}
        R_{\mathcal{O}_{\textup{F}}}(\Lambda) \coloneqq \min\Bigg\{s\geq 0 : 
        &\frac{\Lambda + s \Theta}{1+s} \in \mathcal{O}_{\textup{F}}(\mathcal{H}_1\to \mathcal{H}_2),\\ 
        &\Theta \in \mathcal{O}(\mathcal{H}_1\to \mathcal{H}_2) \Bigg\}
        \end{aligned}
    \end{equation}
    for all channels $\Lambda \in \mathcal{O}(\mathcal{H}_1\to \mathcal{H}_2)$. 
\end{definition}

\subsection{Multi-copy witness for QRT of channels}
Here, we show the construction of multi-copy witnesses for quantum channels, which act on multiple copies of Choi operators. 

To construct a multi-copy witness for quantum channels, we rewrite the generalized robustness using Choi operators. 
\begin{lemma}
    The generalized robustness can be expressed as 
    \begin{equation}
        \begin{aligned}
        \label{eq:robustness_choi}
        R_{\mathcal{O}_{\textup{F}}}(\Lambda) = \min\Bigg\{s\geq 0 : 
        &\frac{J_{\Lambda} + s J}{1+s} \in \mathcal{O}_{\textup{F}}^{J}(\mathcal{H}_1\to \mathcal{H}_2),\\ 
        &J \in \mathcal{O}^{J}(\mathcal{H}_1\to \mathcal{H}_2) \Bigg\}
        \end{aligned}
    \end{equation}
    for all quantum channels $\Lambda$, where $\mathcal{O}^{J}$ and $\mathcal{O}_{\textup{F}}^{J}$ denote the sets of Choi operators as in Eq.~\eqref{eq:O_J}
\end{lemma}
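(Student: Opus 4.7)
The plan is to use the Choi--Jamio\l{}kowski isomorphism, which is a linear bijection between the space of linear maps from $\Lin(\mathcal{H}_1)$ to $\Lin(\mathcal{H}_2)$ and the space of operators on $\mathcal{H}_1 \otimes \mathcal{H}_2$, given explicitly by $\Lambda \mapsto J_{\Lambda}$. This isomorphism identifies the subset $\mathcal{O}(\mathcal{H}_1 \to \mathcal{H}_2)$ of quantum channels with $\mathcal{O}^J(\mathcal{H}_1 \to \mathcal{H}_2)$ and, by definition in Eq.~\eqref{eq:O_J}, identifies $\mathcal{O}_{\textup{F}}$ with $\mathcal{O}_{\textup{F}}^{J}$.

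First I would observe that, because $\Lambda \mapsto J_{\Lambda}$ is linear, for any $s \geq 0$ and any linear map $\Theta$ we have the identity
\begin{equation}
J_{(\Lambda + s\Theta)/(1+s)} = \frac{J_{\Lambda} + s J_{\Theta}}{1+s}.
\end{equation}
Consequently, the constraint $(\Lambda + s\Theta)/(1+s) \in \mathcal{O}_{\textup{F}}(\mathcal{H}_1 \to \mathcal{H}_2)$ appearing in the definition of $R_{\mathcal{O}_{\textup{F}}}(\Lambda)$ holds if and only if $(J_{\Lambda} + s J_{\Theta})/(1+s) \in \mathcal{O}_{\textup{F}}^{J}(\mathcal{H}_1 \to \mathcal{H}_2)$, and the constraint $\Theta \in \mathcal{O}(\mathcal{H}_1 \to \mathcal{H}_2)$ holds if and only if $J_{\Theta} \in \mathcal{O}^{J}(\mathcal{H}_1 \to \mathcal{H}_2)$.

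Next I would verify the equality of the two minimization problems. Given any feasible $\Theta \in \mathcal{O}(\mathcal{H}_1 \to \mathcal{H}_2)$ and $s \geq 0$ for the original program, the assignment $J \coloneqq J_{\Theta}$ is feasible for the Choi-based program with the same value $s$. Conversely, given any feasible pair $(s, J)$ for the Choi-based program, the surjectivity of the Choi map onto $\mathcal{O}^J$ produces $\Theta \in \mathcal{O}(\mathcal{H}_1 \to \mathcal{H}_2)$ with $J_{\Theta} = J$, and this $\Theta$ is feasible for the original program at the same value of $s$. Taking the minimum on both sides yields Eq.~\eqref{eq:robustness_choi}.

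This proof is essentially bookkeeping and I expect no real obstacle; the only thing to be careful about is the explicit identification of the sets $\mathcal{O}^J$ and $\mathcal{O}_{\textup{F}}^{J}$ as the Choi images of $\mathcal{O}$ and $\mathcal{O}_{\textup{F}}$, which is exactly the definition in Eq.~\eqref{eq:O_J}. Therefore the argument consists of invoking linearity of the Choi map together with the bijective correspondence it induces between channels and Choi operators.
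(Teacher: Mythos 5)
Your proposal is correct and follows essentially the same route as the paper: both arguments rest on the linearity of the map $\Lambda \mapsto J_{\Lambda}$, giving $J_{(\Lambda+s\Theta)/(1+s)} = (J_{\Lambda}+sJ_{\Theta})/(1+s)$, together with the one-to-one correspondence between channels and their Choi operators to transfer feasible points between the two minimization problems in both directions. The paper merely spells out the linearity computation term by term on $\sum_{i,j}\ketbra{i}{j}\otimes(\cdot)(\ketbra{i}{j})$, which you invoke abstractly; the content is identical.
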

\begin{proof}
    Recall that there is a one-to-one correspondence between channels and Choi operators; that is, two channels $\Lambda_1$ and $\Lambda_2$ are the same if and only if $J_{\Lambda_1} = J_{\Lambda_2}$. 

    First, suppose that we have an optimal decomposition 
    \begin{equation}
        \Xi = \frac{\Lambda + s\Theta}{1+s}
    \end{equation}
    achieving $s = R_{\mathcal{O}_{\textup{F}}}(\Lambda)$, where $\Theta \in \mathcal{O}(\mathcal{H}_1 \to \mathcal{H}_2)$ and $\Xi \in \mathcal{O}_{\textup{F}}$. 
    By applying $\mathcal{I}_{\mathcal{H}_1}\otimes \Xi$ to $\sum_{i,j = 1}^{d_1} \ketbra{i}{j}\otimes\ketbra{i}{j}$, we have 
    \begin{equation}
        \begin{aligned}
            J_{\Xi} 
            &= \sum_{i,j = 1}^{d_1} \ketbra{i}{j}\otimes \left(\frac{\Lambda+s\Theta}{1+s}\right)(\ketbra{i}{j}) \\ 
            &= \sum_{i,j = 1}^{d_1} \ketbra{i}{j}\otimes \left(\frac{1}{1+s}\Lambda(\ketbra{i}{j}) +\frac{s}{1+s}\Theta(\ketbra{i}{j})\right) \\ 
            &= \frac{1}{1+s} \sum_{i,j = 1}^{d_1} \ketbra{i}{j}\otimes \Lambda(\ketbra{i}{j}) + \frac{s}{1+s}\sum_{i,j = 1}^{d_1} \ketbra{i}{j}\otimes \Theta(\ketbra{i}{j}) \\ 
            &= \frac{J_{\Lambda} + s J_{\Theta}}{1+s}. 
        \end{aligned}
    \end{equation}
    Thus, we have 
    \begin{equation}
        \begin{aligned}
        R_{\mathcal{O}_{\textup{F}}}(\Lambda) \leq \min\Bigg\{s\geq 0 : 
        &\frac{J_{\Lambda} + s J}{1+r} \in \mathcal{O}_{\textup{F}}^{J}(\mathcal{H}_1\to \mathcal{H}_2),\\ 
        &J \in \mathcal{O}^{J}(\mathcal{H}_1\to \mathcal{H}_2) \Bigg\}. 
        \end{aligned}
    \end{equation}
    Next, suppose that we have an optimal decomposition 
    \begin{equation}
        J^\prime = \frac{J_{\Lambda} + sJ}{1+s} 
    \end{equation}
    achieving the minimum in the right-hand side of Eq.~\eqref{eq:robustness_choi}, where $J \in \mathcal{O}^{J}(\mathcal{H}_1 \to \mathcal{H}_2)$ and $J^{\prime} \in \mathcal{O}^{J}_{\textup{F}}$. 
    There exist a channel $\Theta \in \mathcal{O}(\mathcal{H}_1 \to \mathcal{H}_2)$ and a free channel $\Xi \in \mathcal{O}_{\textup{F}}$ such that $J = J_{\Theta}$ and $J^{\prime} = J_{\Xi}$, respectively, and we have 
    \begin{equation}
        \begin{aligned}
        J_{\Xi}
        & = \frac{J_{\Lambda} + sJ_{\Theta}}{1+s}\\
        &= \frac{1}{1+s} \sum_{i,j = 1}^{d_1} \ketbra{i}{j}\otimes \Lambda(\ketbra{i}{j}) + \frac{s}{1+s}\sum_{i,j = 1}^{d_1} \ketbra{i}{j}\otimes \Theta(\ketbra{i}{j}) \\ 
        &=\sum_{i,j = 1}^{d_1} \ketbra{i}{j}\otimes \left(\frac{\Lambda+s\Theta}{1+s}\right)(\ketbra{i}{j}) \\ 
        &= J_{\tfrac{\Lambda + s\Theta}{1+s}}.  
        \end{aligned}
    \end{equation}
    Therefore, due to the one-to-one correspondence between channels and Choi operators, we have 
    \begin{equation}
        \Xi = \frac{\Lambda + s\Theta}{1+s}; 
    \end{equation}
    thus, we have 
    \begin{equation}
        \begin{aligned}
        R_{\mathcal{O}_{\textup{F}}}(\Lambda) \geq \min\Bigg\{s\geq 0 : 
        &\frac{J_{\Lambda} + s J}{1+s} \in \mathcal{O}_{\textup{F}}^{J}(\mathcal{H}_1\to \mathcal{H}_2),\\ 
        &J \in \mathcal{O}^{J}(\mathcal{H}_1\to \mathcal{H}_2) \Bigg\}, 
        \end{aligned}
    \end{equation}
    which completes the proof. 
\end{proof}
We consider a structure of $\mathcal{O}^{J}(\mathcal{H}_1\to\mathcal{H}_2)$. 
Let $J \in \mathcal{O}^{J}(\mathcal{H}_1\to\mathcal{H}_2)$ be a Choi operator of some quantum channel; that is, $J$ is a positive operator on $\mathcal{H}_1 \otimes \mathcal{H}_2$ and satisfies $\tr_{\mathcal{H}_2}[J] = I_{d_1}$~\cite{watrous_2018}. 
Write 
\begin{equation}
    J = \sum_{i=0}^{d_1^2-1}\sum_{j = 0}^{d_2^2-1} c_{i,j} \lambda^{(1)}_i \otimes \lambda^{(2)}_j
\end{equation}
using the $d_1 \times d_1$ generalized Gell-Mann matrices $\{\lambda^{(1)}_i\}_{i=0}^{d_1^2-1}$ and the $d_2 \times d_2$ generalized Gell-Mann matrices $\{\lambda^{(2)}_j\}_{j=0}^{d_2^2-1}$, where we define $\lambda^{(1)}_{0} \coloneqq I_{d_1}$ and $\lambda^{(2)}_{0} \coloneqq I_{d_2}$.
Indeed, the set $\{\lambda_i^{(1)}\otimes \lambda_j^{(2)}: i = 0,1,\ldots,d_1^2, j = 0,1,\ldots, d_2^2\}$ serves as an orthogonal basis for the vector space of Hermitian operators on $\mathcal{H}_1\otimes \mathcal{H}_2$ because 
the cardinality of the set $\{\lambda_i^{(1)}\otimes \lambda_j^{(2)}: i = 0,1,\ldots,d_1^2, j = 0,1,\ldots, d_2^2\}$ is $(d_1d_2)^2$ and 
\begin{equation}
    \tr[(\lambda_{i^\prime}^{(1)}\otimes \lambda_{j^\prime}^{(2)})^\dagger(\lambda_i^{(1)}\otimes \lambda_j^{(2)})] = 0
\end{equation}
for all $(i^{\prime},j^{\prime}) \neq (i,j)$. 
Since $J$ is a positive semidefinite operator, $c_{i,j}$ are real numbers. 
Now, using the relation $\tr_{\mathcal{H}_2}[J] = I_{d_1}$, we have 
\begin{equation}
    \tr_{\mathcal{H}_2}[J] = \sum_{i=0}^{d_1^2 -1} d_2c_{i,0} \lambda^{(1)}_i = I_{d_1}. 
\end{equation}
Hence, 
\begin{align}
    &c_{0,0} = \frac{1}{d_2}, \\ 
    &c_{i,0} = 0, \,\,i = 1,2,\ldots, d_1^2-1. 
\end{align}
Therefore, the form of $J$ is restricted to 
\begin{equation}
    J = \frac{I_{d_1}\otimes I_{d_2}}{d_2} + \sum_{i = 0}^{d_1^2-1}\sum_{j=1}^{d_2^2-1} c_{i,j} \lambda^{(1)}_i\otimes \lambda^{(2)}_j. 
\end{equation}
That is, $J$ can be characterized by using $1 + d_1^2(d_2^2-1)$ real numbers $c_{i,j}$. 

Now, we show the construction of our multi-copy witnesses for quantum channels. 
In the construction, we use the following lemma, which is a Choi-operator version of Lemma~\ref{lem:witness_characterization}. 
We show the detail of the proof in Appendix~\ref{app:proof_witness_characterization_channel}. 

\begin{lemma}~\label{lem:witness_characterization_channel}
    Let $\Lambda \in \mathcal{O}(\mathcal{H}_1\to \mathcal{H}_2)$ be a quantum channel, and let $s > 0$. 
    Consider a quantum channel $\Xi \in \mathcal{O}(\mathcal{H}_1\to \mathcal{H}_2)$.  
    Then, 
    $J_\Xi$ can be expressed as 
    \begin{equation}
        J_\Xi = \frac{J_\Lambda + s' J_\Theta}{1 + s'}
    \end{equation}
    by using some positive number $0 < s' \leq s$ and some channel $\Theta \in \mathcal{O}(\mathcal{H}_1\to \mathcal{H}_2)$ 
    if and only if 
    \begin{equation}
        S_m\left(\frac{1+s}{s} J_\Xi - \frac{1}{s}J_\Lambda \right) \geq 0
    \end{equation}
    for all $m = 1,2,\ldots,d_1d_2$. 
\end{lemma}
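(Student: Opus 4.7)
The plan is to reduce Lemma~\ref{lem:witness_characterization_channel} to its state counterpart Lemma~\ref{lem:witness_characterization} by regarding Choi operators, after appropriate rescaling, as density matrices on the enlarged Hilbert space $\mathcal{H}_1 \otimes \mathcal{H}_2$ of dimension $d_1 d_2$.

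First I would observe that for any quantum channel $\Lambda \in \mathcal{O}(\mathcal{H}_1 \to \mathcal{H}_2)$, trace preservation gives $\tr[J_\Lambda] = d_1$ and complete positivity gives $J_\Lambda \geq 0$, so $J_\Lambda / d_1$ is a valid density operator on $\mathcal{H}_1 \otimes \mathcal{H}_2$; the same holds for $J_\Xi / d_1$ and $J_\Theta / d_1$. Dividing the Choi equation by $d_1$ turns it into
\begin{equation*}
\frac{J_\Xi}{d_1} = \frac{(J_\Lambda / d_1) + s' (J_\Theta / d_1)}{1+s'},
\end{equation*}
which is exactly the mixing relation that appears in Lemma~\ref{lem:witness_characterization}. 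Applying that lemma on $\mathcal{H}_1 \otimes \mathcal{H}_2$ (with $d = d_1 d_2$), setting $\rho \mapsto J_\Lambda / d_1$ and $\eta \mapsto J_\Xi / d_1$, yields the equivalence between the existence of some $0 < s' \leq s$ together with a density matrix $\tau$ realizing this decomposition and the nonnegativity $S_m\!\left(\tfrac{1+s}{s}\tfrac{J_\Xi}{d_1} - \tfrac{1}{s}\tfrac{J_\Lambda}{d_1}\right) \geq 0$ for all $m = 1, 2, \ldots, d_1 d_2$. Because each $S_m$ is homogeneous of degree $m$ in its argument (it is built from $\tr[X^l]$ with $l \leq m$ via the recursion in Lemma~\ref{lem:bloch_ball}), this condition is equivalent to $S_m\!\left(\tfrac{1+s}{s} J_\Xi - \tfrac{1}{s} J_\Lambda\right) \geq 0$ for the same range of $m$.

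The step I expect to be the main subtlety is showing that the density matrix $\tau$ produced by Lemma~\ref{lem:witness_characterization} can be promoted to the Choi operator of a genuine quantum channel, i.e., that $J_\Theta \coloneqq d_1 \tau$ satisfies the trace-preservation constraint $\tr_{\mathcal{H}_2}[J_\Theta] = I_{d_1}$ in addition to positivity. This turns out to be automatic: the decomposition equation uniquely forces
\begin{equation*}
J_\Theta = \frac{(1+s') J_\Xi - J_\Lambda}{s'},
\end{equation*}
and taking the partial trace over $\mathcal{H}_2$ gives $\tr_{\mathcal{H}_2}[J_\Theta] = \tfrac{(1+s') I_{d_1} - I_{d_1}}{s'} = I_{d_1}$, using that both $\Lambda$ and $\Xi$ are trace-preserving. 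Conversely, in the forward direction one starts from an existing channel $\Theta$ and the rescaling immediately produces the required density matrix. Thus the only nontrivial constraint is the positivity of $J_\Theta$, and this is precisely what the $S_m$-conditions control through Lemma~\ref{lem:bloch_ball}, completing the equivalence claimed in Lemma~\ref{lem:witness_characterization_channel}.
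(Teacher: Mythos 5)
Your proposal is correct, but it proceeds by reduction rather than by the paper's direct argument, so a comparison is worthwhile. The paper does not invoke Lemma~\ref{lem:witness_characterization} at all: it reruns that lemma's proof at the level of Choi operators, using Lemma~\ref{lem:bloch_ball} to equate the $S_m$ conditions with positivity of $J \coloneqq \frac{1+s}{s}J_\Xi - \frac{1}{s}J_\Lambda$, then checking $\tr_{\mathcal{H}_2}[J] = I_{d_1}$ (from trace preservation of $\Xi$ and $\Lambda$) so that a positive $J$ is itself the Choi operator of a channel achieving the decomposition with $s'=s$, and, for the converse, writing $J$ as a nonnegative combination of $J_\Lambda$ and $J_\Theta$ when a decomposition with $s' \leq s$ exists. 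You instead rescale all Choi operators by $1/d_1$ so they become density operators on the $d_1 d_2$-dimensional space, apply the state lemma as a black box, and translate back using the degree-$m$ homogeneity of $S_m$; the only channel-specific point is then that the state $\tau$ returned by the lemma automatically has $\tr_{\mathcal{H}_2}[d_1\tau] = I_{d_1}$, which you correctly obtain from the fact that the decomposition equation uniquely forces $J_\Theta = \frac{(1+s')J_\Xi - J_\Lambda}{s'}$. Both routes rest on the same two facts (Choi operators are exactly the positive operators with partial trace $I_{d_1}$, and the $S_m \geq 0$ conditions detect positivity), so neither is deeper than the other; your reduction buys brevity and reuse of an existing lemma, and it also handles a normalization point more carefully than the paper, since Lemma~\ref{lem:bloch_ball} is stated for unit-trace operators while the paper applies it to $J$, whose trace is $d_1$ (harmless, because the $S_m$ are the elementary symmetric polynomials of the eigenvalues, but your rescaling-plus-homogeneity step makes this explicit). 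The paper's direct proof, in exchange, exhibits the channel $\Theta$ without any rescaling detour.
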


Fix a resource channel $\Lambda \in \mathcal{O}(\mathcal{H}_1\to \mathcal{H}_2) \backslash \mathcal{O}_{\textup{F}}(\mathcal{H}_1\to \mathcal{H}_2)$. 
For a channel $\Xi \in \mathcal{O}(\mathcal{H}_1\to \mathcal{H}_2)$, define 
\begin{equation}
    S_{m,J_\Lambda,s}(J_\Xi) \coloneqq S_m\left(\frac{1+s}{s} J_\Xi - \frac{1}{s}J_\Lambda \right) 
\end{equation}
for $m = 1,2,\ldots,d$.
Note that 
\begin{equation}
    S_{1,\rho,s}(J_\Xi) = \frac{1+s}{s}\tr[J_\Xi] - \frac{1}{s}\tr[J_\Lambda] = d_1\geq 0, 
\end{equation}
and thus we focus on $m\geq 2$.

It trivially follows that $S_{m,J_\Lambda,s}(J_\Lambda) = S_m(J_\Lambda) \geq 0$ for all $m$ since $J_\Lambda$ is a Choi operator, in particular, positive. 
When $s < R_{\mathcal{O}_{\textup{F}}}(\Lambda)$, by the definition of the generalized robustness, for any free channel $\Upsilon \in \mathcal{O}_{\textup{F}}(\mathcal{H}_1\to \mathcal{H}_2)$, there is no positive number $s' \leq  s$ such that 
\begin{equation}
    J_\Upsilon = \frac{J_\Lambda + s'J_\Theta}{1+s'}
\end{equation}
with some channel $\Theta \in \mathcal{O}(\mathcal{H}_1\to \mathcal{H}_2)$.
By Lemma~\ref{lem:witness_characterization_channel}, in this case, there exists $2 \leq m \leq d_1d_2$ such that $S_{m,J_\Lambda,s}(J_\Upsilon) < 0$
On the other hand, when $s \geq R_{\mathcal{O}_{\textup{F}}}(\Lambda)$, there exists $0 \leq s' \leq s$ such that 
\begin{equation}
    J_\Upsilon = \frac{J_\Lambda + s'J_\Theta}{1+s'}
\end{equation}
with some channel $\Theta \in \mathcal{O}(\mathcal{H}_1\to \mathcal{H}_2)$.
Indeed, we can take a free channel $\Upsilon \in \mathcal{O}_{\textup{F}}(\mathcal{H}_1\to \mathcal{H}_2)$ achieving $R$ with $s' = R$. 
In this case, for all $m = 2,3,\ldots, d_1d_2$, 
we have $S_{m,J_\Lambda,s}(J_\Upsilon)  \geq 0$ for such $\Upsilon$. 

Then, for all $2\leq m\leq d_1 d_2$ and $s>0$, we construct $W^{(\mathcal{O})}_m(\Lambda,s)$ satisfying 
\begin{equation}
    \tr\left[W^{(\mathcal{O})}_m(\Lambda,s) J_\Xi^{\otimes m}\right] = S_{m,\Lambda,s}(J_\Xi)
\end{equation}
for $\Xi \in \mathcal{O}(\mathcal{H}_1\to \mathcal{H}_2)$. 

Thus, with a similar construction in Theorem~\ref{cor:high_order_witness_qudit_conventional} by replacing $W_m$ with $W_m^{(\mathcal{O})}$, we have a Hermitian operator $W_{\mathcal{O}}(\Lambda,s)$ such that 
\begin{align}
        &\tr[W_{\mathcal{O}}(\Lambda,s) J_{\Lambda}^{\otimes d_1d_2}] < 0, \\
        &\tr[W_{\mathcal{O}}(\Lambda,s) J_{\Xi}^{\otimes d_1d_2}] \geq 0, \,\,\forall \Xi \in \mathcal{O}_{\textup{F}},
    \end{align}
if and only if $s < R_{\mathcal{O}_{\textup{F}}}(\Lambda)$. 

\subsection{Worst-case advantage of channel in state discrimination}
In this section, we prove that the generalized robustness for quantum channels is interpreted as a worst-case advantage in a \textit{state} discrimination task. 
We consider the following task. 
Let $\{p_i,\eta_i\}_i$ be an ensemble of quantum states, where $\eta_i \in \mathcal{D}(\mathcal{H}_1^{\otimes 2})$. 
We are given a state $\eta_i$ sampled from this ensemble with probability $p_i$. 
Our goal is to identify which state we are given with an application of $\mathcal{I}_{\mathcal{H}_1}\otimes \Lambda$, where $\Lambda \in \mathcal{O}(\mathcal{H}_1\to\mathcal{H}_2)$ is a given channel. 
For the identification, we choose a measurement $\{M_i\}_i$, whose label $i$ corresponds to the label of the state $\eta_i$. 
That is, if the measurement result is $i$, then we conclude that the state was $\eta_i$. 
Thus, the success probability of this state discrimination task with $\Lambda$, $\{p_i,\eta_i\}_i$, and $\{M_i\}_i$ is defined by 
\begin{equation}
    p_{\mathrm{succ}}(\Lambda, \{p_i,\eta_i\}_i, \{M_i\}_i) \coloneqq \sum_{i} p_i\tr[M_i (\mathcal{I}_{\mathcal{H}_1}\otimes \Lambda) (\eta_i)]. 
\end{equation}
Now, similar to the argument in Sec.~\ref{sec:worst-case}, we consider a situation where the set $\mathcal{O}_{\textup{F}}$ of free channels is given as a union 
\begin{equation}
    \mathcal{O}_{\textup{F}} \coloneqq \bigcup_k \mathcal{O}_{\textup{F}_k}, 
\end{equation}
where for all $k$, $\mathcal{O}_{\textup{F}_k}$ is a convex set. 
Let us consider the case where the ensemble $\{p_i,\eta_i\}_i$ and the measurement strategy $\{M_i\}_i$ are fixed. 
In this setup, the advantage of a resource channel $\Lambda$ in this state discrimination task with respect to the set $\mathcal{O}_{\textup{F}_k}$ is defined as 
\begin{equation}
    \label{eq:k_succ_prob_channel}
    \dfrac{p_{\mathrm{succ}}(\Lambda, \{p_i,\eta_i\}_i, \{M_i\}_i)}{\max_{\Xi_k \in \mathcal{O}_{\textup{F}_k}}p_{\mathrm{succ}}(\Xi_k, \{p_i,\eta_i\}_i, \{M_i\}_i)}. 
\end{equation}
The worst-case advantage is given as the infimum of Eq.~\eqref{eq:k_succ_prob_channel}. 
Similar to Theorem~\ref{thm:advantage worst case}, we find that the worst-case maximum advantage of state discrimination is quantified by the generalized robustness of quantum channels. 
The proof is given in Appendix~\ref{app:proof_worst_case_channel}. 
\begin{theorem}\label{thm:advantage worst case channel}
    For any resource channel $\Lambda$, 
    \begin{equation}
        \begin{aligned}
            &\inf_{k} \max_{\{p_i,\eta_i\}_i,\{M_i\}_i} \dfrac{p_{\mathrm{succ}}(\Lambda,\{p_i,\eta_i\}_i,\{M_i\}_i)}{\max_{\Xi_k \in \mathcal{O}_{\textup{F}_k}} p_{\mathrm{succ}}(\Xi_k,\{p_i,\eta_i\}_i,\{M_i\}_i)} \\  
            &= 1 + R_{\mathcal{O}_{\textup{F}}}(\Lambda). 
        \end{aligned}
    \end{equation}
\end{theorem}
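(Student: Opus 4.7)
The plan is to follow the template used for the state-case Theorem~\ref{thm:advantage worst case}, namely to (i) invoke the known convex-case characterization of the generalized robustness for channels on each subset $\mathcal{O}_{\textup{F}_k}$, and (ii) prove a channel analog of Lemma~\ref{lem:worst_case_robustness} that reduces $R_{\mathcal{O}_{\textup{F}}}$ to an infimum of the convex robustnesses $R_{\mathcal{O}_{\textup{F}_k}}$. Combining the two ingredients gives the desired equality.

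First, since each $\mathcal{O}_{\textup{F}_k}$ is assumed to be a closed convex subset of $\mathcal{O}(\mathcal{H}_1\to\mathcal{H}_2)$, I would apply the convex-case result of Ref.~\cite{Takagi2019a} (the channel analog of the state result of Ref.~\cite{Takagi2019b} used in the proof of Theorem~\ref{thm:advantage worst case}) to obtain, for every $k$,
\begin{equation}
    \max_{\{p_i,\eta_i\}_i,\{M_i\}_i}\dfrac{p_{\mathrm{succ}}(\Lambda,\{p_i,\eta_i\}_i,\{M_i\}_i)}{\max_{\Xi_k\in\mathcal{O}_{\textup{F}_k}}p_{\mathrm{succ}}(\Xi_k,\{p_i,\eta_i\}_i,\{M_i\}_i)}=1+R_{\mathcal{O}_{\textup{F}_k}}(\Lambda).
\end{equation}
Taking $\inf_{k}$ on both sides yields
\begin{equation}
    \inf_{k}\max_{\{p_i,\eta_i\}_i,\{M_i\}_i}\dfrac{p_{\mathrm{succ}}(\Lambda,\{p_i,\eta_i\}_i,\{M_i\}_i)}{\max_{\Xi_k\in\mathcal{O}_{\textup{F}_k}}p_{\mathrm{succ}}(\Xi_k,\{p_i,\eta_i\}_i,\{M_i\}_i)}=1+\inf_{k}R_{\mathcal{O}_{\textup{F}_k}}(\Lambda),
\end{equation}
so the theorem reduces to identifying $\inf_{k}R_{\mathcal{O}_{\textup{F}_k}}(\Lambda)$ with $R_{\mathcal{O}_{\textup{F}}}(\Lambda)$.

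Second, I would prove the channel version of Lemma~\ref{lem:worst_case_robustness}, namely $R_{\mathcal{O}_{\textup{F}}}(\Lambda)=\inf_{k}R_{\mathcal{O}_{\textup{F}_k}}(\Lambda)$, by a direct two-sided inclusion argument mirroring the state case. The inequality $R_{\mathcal{O}_{\textup{F}}}(\Lambda)\leq R_{\mathcal{O}_{\textup{F}_k}}(\Lambda)$ for every $k$ is immediate from the definition of the generalized robustness since $\mathcal{O}_{\textup{F}_k}\subseteq\mathcal{O}_{\textup{F}}$ (any feasible mixture witnessing $R_{\mathcal{O}_{\textup{F}_k}}$ is also feasible for $R_{\mathcal{O}_{\textup{F}}}$); taking the infimum over $k$ gives one direction. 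For the reverse inequality, pick a channel $\Theta\in\mathcal{O}(\mathcal{H}_1\to\mathcal{H}_2)$ approximately achieving $R_{\mathcal{O}_{\textup{F}}}(\Lambda)$, so that $(\Lambda+s\Theta)/(1+s)\in\mathcal{O}_{\textup{F}}$ for $s$ arbitrarily close to $R_{\mathcal{O}_{\textup{F}}}(\Lambda)$; since $\mathcal{O}_{\textup{F}}=\bigcup_{k}\mathcal{O}_{\textup{F}_k}$, this mixture lies in some specific $\mathcal{O}_{\textup{F}_{k_0}}$, showing $R_{\mathcal{O}_{\textup{F}_{k_0}}}(\Lambda)\leq s$ and hence $\inf_{k}R_{\mathcal{O}_{\textup{F}_k}}(\Lambda)\leq R_{\mathcal{O}_{\textup{F}}}(\Lambda)$. (If the optimum is attained I would use the attaining decomposition; otherwise I would pass through a limiting sequence.)

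Putting the two pieces together gives the claimed identity. I do not expect any serious obstacle: the content is essentially bookkeeping because Ref.~\cite{Takagi2019a} already supplies the convex case, and the union-to-infimum lemma for channels is a verbatim transcription of the state proof via Choi operators. The only mildly delicate point is ensuring that the optimum in $R_{\mathcal{O}_{\textup{F}}}(\Lambda)$ is achieved (so that the optimal mixture indeed belongs to at least one $\mathcal{O}_{\textup{F}_{k_0}}$); this follows from closedness of each $\mathcal{O}_{\textup{F}_k}$ and compactness of the set of channels between finite-dimensional spaces, exactly as in the state case, and one can alternatively handle it with a limiting argument without changing the final equality.
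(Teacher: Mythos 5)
Your proposal is correct and follows essentially the same route as the paper: apply the convex-case channel result of Ref.~\cite{Takagi2019a} to each convex subset $\mathcal{O}_{\textup{F}_k}$, take the infimum over $k$, and identify $\inf_k R_{\mathcal{O}_{\textup{F}_k}}(\Lambda)$ with $R_{\mathcal{O}_{\textup{F}}}(\Lambda)$ by the channel analog of Lemma~\ref{lem:worst_case_robustness}. The only (harmless) difference is that you spell out the two-sided proof of that union-to-infimum identity, whereas the paper simply invokes the state-case argument of Lemma~\ref{lem:worst_case_robustness} verbatim for channels.
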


\subsection{Generalized robustness of quantum instrument}
In this section, we show that a similar analysis also applies to the case of quantum instruments. 
Let $\mathcal{H}_1$ and $\mathcal{H}_2$ be Hilbert spaces with dimensions $d_1$ and $d_2$, respectively. 
Quantum instruments are defined in the following manner.  

\begin{definition}
    Let $m$ be a positive integer. 
    A family of completely positive maps $(\mathcal{E}_i \in \mathcal{O}^{(\mathrm{CP})}(\mathcal{H}_1\to \mathcal{H}_2): i = 0,1,\ldots,m-1)$ is called a \textit{quantum instrument} if 
    \begin{equation}
        \sum_{i = 1}^m \mathcal{E}_i \in \mathcal{O}(\mathcal{H}_1\to \mathcal{H}_2). 
    \end{equation}
    The set of quantum instruments from $\mathcal{H}_1$ to $\mathcal{H}_2$ with $m$ elements is denoted by $\mathcal{O}^{(m)}(\mathcal{H}_1\to \mathcal{H}_2)$.
\end{definition}

We introduce a framework of QRTs for quantum instruments. 
The general framework of dynamical resource theories that encompass the settings involving ensembles of quantum processes has recently been established~\cite{Regula2021one-shot}.
Along with this line of research, we consider a subset $\mathcal{O}^{(\mathrm{CP})}_\textup{F}(\mathcal{H}_1\to \mathcal{H}_2)$ of $\mathcal{O}^{(\mathrm{CP})}(\mathcal{H}_1\to \mathcal{H}_2)$ of instruments to be free. 
We introduce the generalized robustness $R_{\mathcal{O}^{(\mathrm{CP})}_\textup{F}}$ for quantum instruments. 
    \begin{definition}
    The generalized robustness $R_{\mathcal{O}^{(\mathrm{CP})}_\textup{F}}$ is defined as 
    \begin{equation}
        \begin{aligned}
        R_{\mathcal{O}_\textup{F}^{(\mathrm{CP})}}((\mathcal{E}_i)_i) \coloneqq \min\Bigg\{s\geq 0 : &\forall i,\,\frac{\mathcal{E}_i + s \mathcal{G}_i}{1+s} \in \mathcal{O}_{\textup{F}}^{(\mathrm{CP})}(\mathcal{H}_1\to \mathcal{H}_2),\\ 
        &(\mathcal{G}_i)_i \in \mathcal{O}^{(m)}(\mathcal{H}_1\to \mathcal{H}_2) \Bigg\}
        \end{aligned}
    \end{equation}
    for all quantum instruments $(\mathcal{E}_i)_i \in \mathcal{O}^{(m)}(\mathcal{H}_1\to \mathcal{H}_2)$. 
\end{definition}

In this framework, we characterize the generalized robustness of quantum instruments, using our results on the generalized robustness of quantum channels in the previous section. 
In particular, the following theorem shows that the results on the generalized robustness of quantum channels extend to quantum instruments by considering the channels corresponding to the instruments.
These results verify that the observation in Ref.~\cite{Regula2021one-shot}, about the equivalence between the generalized robustness explicitly defined for ensembles of quantum processes and that for corresponding quantum channels, generally holds for quantum instruments.

\begin{theorem}
    Let $(\mathcal{E}_i)_i \in \mathcal{O}^{(m)}(\mathcal{H}_1\to \mathcal{H}_2)$ be a quantum instrument. 
    Let $\mathcal{H}_3$ be an $m$-dimensional Hilbert space, and let $\{\ket{i}\}_{i=0}^{m-1}$ be an orthonormal basis of $\mathcal{H}_3$. 
    We define a quantum channel $\Lambda_{(\mathcal{E}_i)_i} \in \mathcal{O}(\mathcal{H}_1 \to \mathcal{H}_3\otimes\mathcal{H}_2)$ by 
    \begin{equation}
       \Lambda_{(\mathcal{E}_i)_i} (\rho) \coloneqq \sum_{i=0}^{m-1} \ketbra{i}{i}\otimes \mathcal{E}_i(\rho)
    \end{equation}
    for $\rho \in \mathcal{D}(\mathcal{H}_1)$. 
    Define a subset $\mathcal{O}_{\textup{F}} \subseteq \mathcal{O}(\mathcal{H}_1 \to \mathcal{H}_3\otimes\mathcal{H}_2)$ by 
    \begin{equation}
        \begin{aligned}
            \mathcal{O}_{\textup{F}} \coloneqq \Bigg\{\Lambda_{(\mathcal{E}_i)_i}: &\forall i,\, \mathcal{E}_{i}\in \mathcal{O}^{(\mathrm{CP})}_{\textup{F}}(\mathcal{H}_1\to \mathcal{H}_2),\,  \\
            &\sum_{i=0}^{m-1} \mathcal{E}_i \in \mathcal{O}(\mathcal{H}_1\to \mathcal{H}_2) \Bigg\}.
        \end{aligned}
    \end{equation}
    Then, we have 
    \begin{equation}
        R_{\mathcal{O}_\textup{F}^{(\mathrm{CP})}}((\mathcal{E}_i)_i) = R_{ \mathcal{O}_{\textup{F}}}(\Lambda_{(\mathcal{E}_i)_i}). 
    \end{equation}
\end{theorem}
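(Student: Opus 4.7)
The plan is to prove the equality by showing the two inequalities $R_{\mathcal{O}_\textup{F}^{(\mathrm{CP})}}((\mathcal{E}_i)_i) \geq R_{\mathcal{O}_{\textup{F}}}(\Lambda_{(\mathcal{E}_i)_i})$ and $R_{\mathcal{O}_\textup{F}^{(\mathrm{CP})}}((\mathcal{E}_i)_i) \leq R_{\mathcal{O}_{\textup{F}}}(\Lambda_{(\mathcal{E}_i)_i})$, each by converting a feasible mixture on one side into a feasible mixture on the other. The key structural observation is that maps of the form $\Lambda_{(\cdot)}$ live in the ``classically labeled'' subspace picked out by the dephasing channel $\Pi$ on $\mathcal{H}_3$ (with $\Pi(X)=\sum_i\ket{i}\!\bra{i}X\ket{i}\!\bra{i}$), and conversely any channel in that subspace is of the form $\Lambda_{(\mathcal{G}_i)_i}$ for a unique instrument $(\mathcal{G}_i)_i$ obtained by $\mathcal{G}_i(\rho)=\tr_{\mathcal{H}_3}[(\ket{i}\!\bra{i}\otimes I)\,\Theta(\rho)]$.

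For the direction $R_{\mathcal{O}_\textup{F}^{(\mathrm{CP})}} \geq R_{\mathcal{O}_{\textup{F}}}$, I would take an instrument $(\mathcal{G}_i)_i\in\mathcal{O}^{(m)}(\mathcal{H}_1\to\mathcal{H}_2)$ achieving the minimum $s=R_{\mathcal{O}_\textup{F}^{(\mathrm{CP})}}((\mathcal{E}_i)_i)$, so that $\mathcal{F}_i\coloneqq(\mathcal{E}_i+s\mathcal{G}_i)/(1+s)\in \mathcal{O}_{\textup{F}}^{(\mathrm{CP})}$ for every $i$ and $\sum_i\mathcal{F}_i$ is a channel. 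Setting $\Theta\coloneqq\Lambda_{(\mathcal{G}_i)_i}$, linearity of $\Lambda_{(\cdot)}$ gives
\begin{equation}
\frac{\Lambda_{(\mathcal{E}_i)_i}+s\,\Theta}{1+s}=\Lambda_{(\mathcal{F}_i)_i}\in\mathcal{O}_{\textup{F}},
\end{equation}
which is a feasible point for $R_{\mathcal{O}_{\textup{F}}}(\Lambda_{(\mathcal{E}_i)_i})$ with parameter $s$, yielding the inequality.

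For the converse direction, I would start with an optimal $\Theta\in\mathcal{O}(\mathcal{H}_1\to\mathcal{H}_3\otimes\mathcal{H}_2)$ and $\Lambda_{(\mathcal{F}_i)_i}\in\mathcal{O}_{\textup{F}}$ satisfying $\Lambda_{(\mathcal{E}_i)_i}+s\Theta=(1+s)\Lambda_{(\mathcal{F}_i)_i}$ with $s=R_{\mathcal{O}_{\textup{F}}}(\Lambda_{(\mathcal{E}_i)_i})$. The main technical step is to argue that $\Theta$ may be chosen block-diagonal, i.e.\ of the form $\Lambda_{(\mathcal{G}_i)_i}$. To do so, apply $(\Pi\otimes\mathcal{I}_{\mathcal{H}_2})$ to both sides: since both $\Lambda_{(\mathcal{E}_i)_i}$ and $\Lambda_{(\mathcal{F}_i)_i}$ are invariant under this dephasing, one obtains $\Lambda_{(\mathcal{E}_i)_i}+s\,\Theta'=(1+s)\Lambda_{(\mathcal{F}_i)_i}$, where $\Theta'\coloneqq(\Pi\otimes\mathcal{I}_{\mathcal{H}_2})\circ\Theta$ is again a channel and, by construction, equals $\Lambda_{(\mathcal{G}_i)_i}$ for $\mathcal{G}_i(\rho)\coloneqq\tr_{\mathcal{H}_3}[(\ket{i}\!\bra{i}\otimes I)\Theta(\rho)]$. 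Each $\mathcal{G}_i$ is CP (as a compression of a CP map) and $\sum_i\mathcal{G}_i$ is trace-preserving (as $\tr_{\mathcal{H}_3}\circ\Theta'$), so $(\mathcal{G}_i)_i\in\mathcal{O}^{(m)}(\mathcal{H}_1\to\mathcal{H}_2)$.

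Finally, extracting the $i$-th diagonal block of both sides of $\Lambda_{(\mathcal{E}_i)_i}+s\Lambda_{(\mathcal{G}_i)_i}=(1+s)\Lambda_{(\mathcal{F}_i)_i}$ via $\mathcal{E}_i(\rho)=\tr_{\mathcal{H}_3}[(\ket{i}\!\bra{i}\otimes I)\Lambda_{(\mathcal{E}_i)_i}(\rho)]$ produces $\mathcal{E}_i+s\mathcal{G}_i=(1+s)\mathcal{F}_i$ for every $i$, so $(\mathcal{E}_i+s\mathcal{G}_i)/(1+s)=\mathcal{F}_i\in\mathcal{O}^{(\mathrm{CP})}_{\textup{F}}$, demonstrating that $s$ is feasible for $R_{\mathcal{O}_\textup{F}^{(\mathrm{CP})}}((\mathcal{E}_i)_i)$. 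The main obstacle in the whole argument is precisely the reduction to block-diagonal $\Theta$ in this direction: without the dephasing trick one could not guarantee that the perturbation channel corresponds to an instrument, and hence could not extract the per-outcome constraints needed to bound the instrument-based robustness. Combining the two inequalities gives the desired equality.
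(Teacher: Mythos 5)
Your proof is correct and takes essentially the same route as the paper: both directions convert a feasible decomposition for one robustness into one for the other, and the key step in the hard direction—that the noise channel can be taken block-diagonal, i.e.\ of the form $\Lambda_{(\mathcal{G}_i)_i}$ for an instrument $(\mathcal{G}_i)_i$—is exactly the point the paper asserts from the structure of $\mathcal{O}_{\textup{F}}$. Your dephasing argument with $\Pi\otimes\mathcal{I}_{\mathcal{H}_2}$ merely makes that assertion explicit (the paper instead reads the block form off from $s\Theta=(1+s)\Lambda_{(\mathcal{F}_i)_i}-\Lambda_{(\mathcal{E}_i)_i}$), so the two proofs coincide in substance.
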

\begin{proof}
    First, suppose that $(\mathcal{G}_i)_i \in \mathcal{O}^{(\mathrm{CP})}(\mathcal{H}_1\to \mathcal{H}_2)$ achieves $R_{\mathcal{O}_\textup{F}^{(\mathrm{CP})}}((\mathcal{E}_i)_i)$; that is, for all $i=0,1,\ldots,m-1$, 
    \begin{equation}
       \frac{\mathcal{E}_i + R_{\mathcal{O}_\textup{F}^{(\mathrm{CP})}}((\mathcal{E}_i)_i) \mathcal{G}_i}{1 + R_{\mathcal{O}_\textup{F}^{(\mathrm{CP})}}((\mathcal{E}_i)_i)} \in \mathcal{O}^{(\mathrm{CP})}_{\textup{F}}(\mathcal{H}_1\to \mathcal{H}_2). 
    \end{equation}
    Since $\sum_{i=0}^{m-1} \mathcal{E}_i$ and $\sum_{i=0}^{m-1} \mathcal{G}_i$ are both quantum channels, 
    \begin{equation}
        \sum_{i=0}^{m-1}\frac{\mathcal{E}_i + R_{\mathcal{O}_\textup{F}^{(\mathrm{CP})}}((\mathcal{E}_i)_i) \mathcal{G}_i}{1 + R_{\mathcal{O}_\textup{F}^{(\mathrm{CP})}}((\mathcal{E}_i)_i)} 
    \end{equation}
    is also a quantum channel, and hence 
    \begin{equation}
        \sum_{i=0}^{m-1} \ketbra{i}{i} \otimes \frac{\mathcal{E}_i + R_{\mathcal{O}_\textup{F}^{(\mathrm{CP})}}((\mathcal{E}_i)_i) \mathcal{G}_i}{1 + R_{\mathcal{O}_\textup{F}^{(\mathrm{CP})}}((\mathcal{E}_i)_i)} 
    \end{equation}
    is contained in $\mathcal{O}_{\textup{F}}$. 
    By linearity, we have
    \begin{equation}
        \frac{\Lambda_{(\mathcal{E}_i)_i} + R_{\mathcal{O}_\textup{F}^{(\mathrm{CP})}}((\mathcal{E}_i)_i) \Lambda_{(\mathcal{G}_i)_i}}{1 + R_{\mathcal{O}_\textup{F}^{(\mathrm{CP})}}((\mathcal{E}_i)_i)} \in \mathcal{O}_{\textup{F}}. 
    \end{equation}
    Therefore, by definition, 
    \begin{equation}
R_{\mathcal{O}_\textup{F}^{(\mathrm{CP})}}((\mathcal{E}_i)_i) \geq R_{ \mathcal{O}_{\textup{F}}}(\Lambda_{(\mathcal{E}_i)_i}). 
    \end{equation}
    
    On the other hand, suppose that $\Lambda \in \mathcal{O}(\mathcal{H}_1 \to \mathcal{H}_3 \otimes \mathcal{H}_2)$ achieves $R_{ \mathcal{O}_{\textup{F}}}(\Lambda_{(\mathcal{E}_i)_i})$; that is, 
    \begin{equation}
        \frac{\Lambda_{(\mathcal{E}_i)_i} + R_{\mathcal{O}_{\textup{F}}}(\Lambda_{(\mathcal{E}_i)_i}) \Lambda}{1 + R_{ \mathcal{O}_{\textup{F}}}(\Lambda_{(\mathcal{E}_i)_i})} \in \mathcal{O}_{\textup{F}}. 
    \end{equation}
    Considering the form of $\Lambda_{(\mathcal{E}_i)_i}$ and channels in $\mathcal{O}_{\textup{F}}$, $\Lambda$ can be written as $\Lambda_{(\mathcal{G}_i)_i}$ using some quantum instrument $(\mathcal{G}_i)_i \in \mathcal{O}^{(m)}(\mathcal{H}_1 \to \mathcal{H}_2)$. 
    Therefore, we have 
    \begin{equation}
    \sum_{i=0}^{m-1} \ketbra{i}{i} \otimes \frac{\mathcal{E}_i + R_{\mathcal{O}_{\textup{F}}}((\mathcal{E}_i)_i) \mathcal{G}_i}{1 + R_{\mathcal{O}_{\textup{F}}}((\mathcal{E}_i)_i)} \in \mathcal{O}_{\textup{F}}, 
    \end{equation}
    and thus for all $i = 0,1,\ldots,m-1$, 
    \begin{equation}
        \frac{\mathcal{E}_i + R_{\mathcal{O}_{\textup{F}}}((\mathcal{E}_i)_i) \mathcal{G}_i}{1 + R_{\mathcal{O}_{\textup{F}}}((\mathcal{E}_i)_i)} \in \mathcal{O}^{(\mathrm{CP})}_{\textup{F}}(\mathcal{H}_1\to \mathcal{H}_2). 
    \end{equation}
    Therefore, we have 
    \begin{equation}
R_{\mathcal{O}_\textup{F}^{(\mathrm{CP})}}((\mathcal{E}_i)_i) \leq R_{ \mathcal{O}_{\textup{F}}}(\Lambda_{(\mathcal{E}_i)_i}). 
    \end{equation}
    
\end{proof}

\section{Conclusion}
\label{sec:conclusion}
In this paper, to address the fundamental question of when general quantum resources are actually useful, we gave characterizations of the generalized robustness and the weight of resource in general QRTs without convexity restriction. 

First, complementing the work presented in the companion Letter \cite{PRL}, we established \textit{multi-copy} witnesses characterized by the generalized robustness. 
We constructed a family of Hermitian operators characterized by a single real-valued parameter.
We showed that the Hermitian operators in this family serve as the multi-copy witnesses of quantum resources if and only if the parameter characterizing the family is smaller than the generalized robustness.
We also found that the witness discerns more resource states from free states for a larger parameter. 
With these multi-copy witnesses, we designed a family of multi-input channel discrimination tasks, for which every resource state shows an advantage over free states. 
Next, inspired by a situation where multiple constraints form a non-convex set of free states, we considered the set of free states to be written as a union of several convex sets, each of which represents a different constraint. 
Under this description, we considered an advantage of resources over each convex set. 
We showed the generalized robustness can be interpreted as \textit{worst-case advantage} of a given resource for channel discrimination and found that every quantum resource shows an advantage over free states in this scenario as well. {\blue We illustrated our results with a particular example of non-convex resource theory related to coherence in multiple bases in a single qubit.}

We then observed that a similar analysis is also applicable to the weight of resource. 
We indeed gave another construction of multi-copy witnesses based on the weight of resource and showed that every quantum resource is useful in channel exclusion using these multi-copy witnesses. 
Moreover, we found that the weight of resource quantifies the worst-case advantage for channel exclusion. 
We further showed that our results can be extended to the \textit{generalized robustness for quantum channels} in dynamical resource theories. 
In particular, the concept of the multi-copy witness can be generalized to the Choi operators of quantum channels, and every resource channel is useful for a state discrimination task.  
We further showed that this analysis for quantum channels generalizes to quantum instruments as well. 

Thus, we demonstrated that {\em all} quantum states, channels, and instruments are useful in some operational tasks even in general QRTs without convexity restriction, by introducing new characterizations of the generalized robustness and the weight of resource.  
Our results offer a more general insight than the previous research on QRTs in that we did not impose any convexity assumption on all of our results. 
We believe this leads to a better understanding of quantum mechanical properties and their utilization in the framework of QRTs. 

We conclude this paper by stating several possible future directions for our results. 
While we constructed multi-copy witnesses characterized by the generalized robustness and the weight of resource, the number of copies needed to separate a resource state from free states depends on which free state we focus on. It would be nice to look for an $m$-copy witness for some fixed positive integer $m$ that can be used for all free states. 
It would also be nice to investigate if there exists a multi-copy witness where the number of copies is constant so that it could be applied to a QRT on an infinite-dimensional state space. 
Additionally, it would be interesting to see if such a multi-copy witness has a direct connection to an operational meaning of the generalized robustness or the weight of resource. 
We also wonder if there is another task for which the generalized robustness or the weight of resource has an operational meaning without considering decomposition into convex sets. A further generalization of our results to quantum betting tasks \cite{PRXQuantum.3.020366}, which interpolate between discrimination and exclusion tasks, could also be performed.
Moreover, we are also interested in extending our analysis to QRTs of sets of states, which were previously considered in the context of coherence~\cite{Brunner2021}. 
{\blue 
In addition, it would be interesting to see if our analysis can be extended to the framework of pseudo-resources~\cite{haug2023pseudorandom}, \textit{e.g.}, pseudo-entanglement~\cite{aaronson2023quantum,bouland2023publickey,arnonfriedman2023computational} and pseudo-magic~\cite{gu2023little}. 
In these setups, pseudo-random states~\cite{Zhengfeng2018} that do not contain plenty of resources but cannot be efficiently distinguished from highly resourceful states are considered as ``pseudo-resources''. 
It would be worthwhile to determine if the operational advantages of such pseudo-resources based on the generalized robustness or the weight of resource could also be shown. 
}
Finally, it may be interesting to assess whether one can design a task based on a multi-copy witness for Choi operators in which every resource channel shows some advantage. \\

\begin{acknowledgments}
We are grateful to Bartosz Regula and Pauli Jokinen for discussions and feedback. 
K.K. was supported by a Mike and Ophelia Lazaridis Fellowship, a Funai Overseas Scholarship, and a Perimeter Residency Doctoral Award. 
R.T. acknowledges the support of JSPS KAKENHI Grant Number JP23K19028, and JST, CREST Grant Number JPMJCR23I3, Japan. 
G.A. acknowledges support by the UK Research and Innovation (UKRI) under BBSRC Grant No.~BB/X004317/1 and EPSRC Grant No.~EP/X010929/1.
H.Y. acknowledges JST PRESTO Grant Number JPMJPR201A, JPMJPR23FC, and MEXT Quantum Leap Flagship Program (MEXT QLEAP) JPMXS0118069605, JPMXS0120351339\@. A part of this work was carried out at the workshop ``Quantum resources: from mathematical foundations to operational characterisation'' held in Singapore in December 2022. 
\end{acknowledgments}

\bibliography{robustness}
\bibliographystyle{apsrmp4-2}

\clearpage
\onecolumngrid
\appendix

\renewcommand{\thetheorem}{\Alph{section}.\arabic{theorem}}
\renewcommand{\theconjecture}{\Alph{section}.\arabic{conjecture}}

{\blue
\setcounter{theorem}{0}
\setcounter{conjecture}{0}
\section{Lower bound of generalized robustness inspired by duality}
\label{appendix:generalized_robustness_lower_bound}

In this section, we derive a lower bound of the generalized robustness in general QRTs without convexity assumption, inspired by the relation between the generalized robustness and a linear resource witness in convex QRTs. 
In particular, we show the following relation. 
\begin{proposition}
    Let $\mathcal{H}$ be a $d$-dimensional quantum system. 
    For any state $\rho \in \mathcal{D}(\mathcal{H})$, we have 
    \begin{equation}
        \label{eq:robustness_lower_bound_actual}
        R_{\mathcal{F}(\mathcal{H})}(\rho) \geq \max\Big\{\tr\left[W\rho\right] - 1 : W\geq 0,\,\, \tr\left[W\sigma\right] \leq 1, \forall \sigma \in \mathcal{F}(\mathcal{H})\Big\}.
    \end{equation}
\end{proposition}
This relation was previously shown for convex QRTs~\cite{Brandao2005,Regula2017,Napoli2016,Piani2016,Takagi2019a,Takagi2019b}, and actually, the equality in~\eqref{eq:robustness_lower_bound_actual} holds for convex QRTs. 
Here, we give a proof for general QRTs without convexity assumption. 

\begin{proof} 
    Consider the convex hull of the set of free states, denoted by $\mathrm{conv}\left(\mathcal{F}(\mathcal{H})\right)$. 
    Since $\mathcal{F}(\mathcal{H}) \subseteq \mathrm{conv}\left(\mathcal{F}(\mathcal{H})\right)$ by definition, we have 
    \begin{equation}
        \label{eq:robustness_original_conv}
        R_{\mathrm{conv}\left(\mathcal{F}(\mathcal{H})\right)}(\rho) \leq R_{\mathcal{F}(\mathcal{H})}(\rho). 
    \end{equation}
    Hence, to prove~\eqref{eq:robustness_lower_bound_actual}, we will show 
    \begin{equation}
        \label{eq:robustness_conv_witness}
        R_{\mathrm{conv}\left(\mathcal{F}(\mathcal{H})\right)}(\rho) = \max\Big\{\tr\left[W\rho\right] - 1 : W\geq 0,\,\, \tr\left[W\sigma\right] \leq 1, \forall \sigma \in \mathcal{F}(\mathcal{H})\Big\}. 
    \end{equation}
    Since $\mathrm{conv}\left(\mathcal{F}(\mathcal{H})\right)$ is convex, we can apply the analysis for convex QRTs in Refs.~\cite{Regula2017,Takagi2019b} to $\mathrm{conv}\left(\mathcal{F}(\mathcal{H})\right)$, and have 
    \begin{equation}
        \label{eq:conv_robustness_duality}
        R_{\mathrm{conv}\left(\mathcal{F}(\mathcal{H})\right)}(\rho) = \max\Big\{\tr\left[W\rho\right] - 1 : W\geq 0,\,\, \tr\left[W\sigma\right] \leq 1, \forall \sigma \in \mathrm{conv}\left(\mathcal{F}(\mathcal{H})\right)\Big\}. 
    \end{equation}
    Here, it trivially follows that 
    \begin{equation}
        \tr\left[W\sigma\right] \leq 1, \forall \sigma \in \mathrm{conv}\left(\mathcal{F}(\mathcal{H})\right) \Rightarrow \tr\left[W\sigma\right] \leq 1, \forall \sigma \in \mathcal{F}(\mathcal{H}). 
    \end{equation}
    On the other hand, when $\tr\left[W\sigma\right] \leq 1$ holds for any $\sigma \in \mathcal{F}(\mathcal{H})$, for any convex combination 
    \begin{equation}
        \tilde{\sigma} = \sum_{i} p_i \sigma_i, \,\sigma_i \in \mathcal{F}(\mathcal{H}), 
    \end{equation}
    we have 
    \begin{equation}
        \tr\left[W\tilde{\sigma} \right] = \sum_{i}p_i  \tr\left[W\sigma_i\right] \leq \sum_{i}p_i = 1. 
    \end{equation}
    Therefore, we also have 
    \begin{equation}
         \tr\left[W\sigma\right] \leq 1, \forall \sigma \in \mathcal{F}(\mathcal{H}) \Rightarrow \tr\left[W\sigma\right] \leq 1, \forall \sigma \in \mathrm{conv}\left(\mathcal{F}(\mathcal{H})\right). 
    \end{equation}
    Thus, we have the equivalence relation 
    \begin{equation}
          \tr\left[W\sigma\right] \leq 1, \forall \sigma \in \mathrm{conv}\left(\mathcal{F}(\mathcal{H})\right)\Leftrightarrow \tr\left[W\sigma\right] \leq 1, \forall \sigma \in \mathcal{F}(\mathcal{H}), 
    \end{equation}
    and thus, 
    \begin{equation}
        \label{eq:equivalence_witness_original_conv}
        \max\Big\{\tr\left[W\rho\right] - 1 : W\geq 0,\,\, \tr\left[W\sigma\right] \leq 1, \forall \sigma \in \mathrm{conv}\left(\mathcal{F}(\mathcal{H})\right)\Big\} = \max\Big\{\tr\left[W\rho\right] - 1 : W\geq 0,\,\, \tr\left[W\sigma\right] \leq 1, \forall \sigma \in \mathcal{F}(\mathcal{H})\Big\}
    \end{equation}
    From~\eqref{eq:conv_robustness_duality} and~\eqref{eq:equivalence_witness_original_conv}, we have~\eqref{eq:robustness_conv_witness} as desired. 
\end{proof}

\begin{remark}
    When $\mathcal{F}(\mathcal{H})$ is convex, we have $ R_{\mathrm{conv}\left(\mathcal{F}(\mathcal{H})\right)}(\rho) = R_{\mathcal{F}(\mathcal{H})}(\rho)$, and thus the equality in~\eqref{eq:robustness_lower_bound_actual} naturally holds. 
    
\end{remark}

The above proof mostly relies on the previous analysis of the generalized robustness in convex QRTs. 
For self-containedness, we also give an alternative proof for~\eqref{eq:robustness_lower_bound_actual} that is independent of the previous analysis. 
This proof is also largely inspired by the duality in convex optimization~\cite{boyd_vandenberghe_2004}, but here it should be kept in mind that the set $\mathcal{F}(\mathcal{H})$ might not be convex any longer. 
\begin{proof}[Another proof for~\eqref{eq:robustness_lower_bound_actual}]
    In this proof, without otherwise being noted, any operator is assumed to be Hermitian. 
    Recall that the generalized robustness of $\rho$ can be written as the following optimization problem. 
    \begin{align}
        \label{eq:original_opt_robustness_1}
            \mathrm{minimize} \quad &s \\
            \label{eq:original_opt_robustness_2}
            \mathrm{subject\,\,to} \quad  &\frac{\rho + s\tau}{1 + s} = \sigma, \\
            \label{eq:original_opt_robustness_3}
            & \tau \in \mathcal{D}(\mathcal{H}), \\
            \label{eq:original_opt_robustness_4}
            &\sigma \in \mathcal{F}(\mathcal{H}), \\
            \label{eq:original_opt_robustness_5}
            &s \geq 0.
    \end{align}
    By replacing $s\tau$ with a positive semidefinite operator $Z$, we can reformulate the optimization problem~\eqref{eq:original_opt_robustness_1}-\eqref{eq:original_opt_robustness_5} as 
    \begin{align}
        \label{eq:opt_robustness_1}
            \mathrm{minimize} \quad &\tr\left[Z\right] \\
            \label{eq:opt_robustness_2}
            \mathrm{subject\,\,to} \quad  &\rho + Z \in \mathcal{F}_c(\mathcal{H}), \\
            \label{eq:opt_robustness_3}
            & Z \geq 0, 
    \end{align}
    where $\mathcal{F}_c(\mathcal{H})$ is a cone generated by $\mathcal{F}(\mathcal{H})$, defined as 
    \begin{equation}
        \mathcal{F}_c(\mathcal{H}) \coloneqq \left\{\lambda \sigma : \lambda>0,\,\sigma \in \mathcal{F}(\mathcal{H})\right\}. 
    \end{equation}
Now, consider the following Lagrangian constructed from~\eqref{eq:opt_robustness_1}-\eqref{eq:opt_robustness_3}. 
\begin{equation}
    L(\rho,Z,V,W) \coloneqq \tr\left[Z\right] - \tr\left[V(\rho+Z)\right] - \tr\left[WZ\right], 
\end{equation}
where $V \in \mathcal{F}_c^*(\mathcal{H})$ and $W \geq 0$ with the dual cone 
\begin{equation}
    \mathcal{F}_c^*(\mathcal{H}) \coloneqq \left\{X \in \Herm\left(\mathcal{H}\right): \tr[XY] \geq 0,\,\, \forall Y\in \mathcal{F}_c(\mathcal{H}) \right\}. 
\end{equation}
Considering the reformulation of the generalized robustness given in~\eqref{eq:opt_robustness_1}-\eqref{eq:opt_robustness_3}, we have 
\begin{equation}
    \label{eq:lagrangian_robustness}
    R_{\mathcal{F}(\mathcal{H})}(\rho) = \inf_{Z\geq 0,\,\rho + Z \in \mathcal{F}_c(\mathcal{H})} \sup_{V\in\mathcal{F}_c^*(\mathcal{H}),\, W\geq 0} L(\rho,Z,V,W) 
\end{equation}
because under the condition that $\rho + Z \in \mathcal{F}_c(\mathcal{H})$ and $Z\geq 0$, we have 
\begin{equation}
    \sup_{V\in\mathcal{F}_c^*(\mathcal{H}),\, W\geq 0} L(\rho,Z,V,W) = \tr\left[Z\right]. 
\end{equation}
To lower-bound $R_{\mathcal{F}(\mathcal{H})}$, we introduce a slightly different optimization problem 
\begin{equation}
    \label{eq:lagrangian_primal}
    p \coloneqq \inf_{Z} \sup_{V\in\mathcal{F}_c^*(\mathcal{H}),\, W\geq 0} L(\rho,Z,V,W), 
\end{equation}
where the condition on $Z$ in the infimum of~\eqref{eq:lagrangian_robustness} is now dropped. 
By definition, we have 
\begin{equation}
    \label{eq:relation_robustness_primal}
    p \leq R_{\mathcal{F}(\mathcal{H})}(\rho). 
\end{equation}
To have the desired lower bound, we consider the dual problem 
\begin{equation}
    d \coloneqq  \sup_{V\in\mathcal{F}_c^*(\mathcal{H}),\, W\geq 0} \inf_{Z}L(\rho,Z,V,W). 
\end{equation}
By the max-min inequality~\cite{boyd_vandenberghe_2004}, we have 
\begin{equation}
    \label{eq:weal_duality}
    d \leq p, 
\end{equation}
which is usually referred to as the weak duality. 
To analyze the value of $d$, we further rewrite the Lagrangian $L(\rho,Z,V,W)$ as 
\begin{equation}
    L(\rho,Z,V,W) = -\tr\left[V\rho\right] + \tr\left[(I_d - V - W)Z\right].  
\end{equation}
Thus, 
\begin{equation}
    \inf_{Z} L(\rho,Z,V,W) 
    =\begin{cases}
        -\tr\left[V\rho\right] & I_d - V - W = 0,\\
        -\infty & \mathrm{otherwise}. 
    \end{cases}
\end{equation}
Therefore, we have 
\begin{equation}
    \begin{aligned}
    \label{eq:dual_intermediate}
    d 
    &=  \sup_{V\in\mathcal{F}_c^*(\mathcal{H}),\, W\geq 0} \inf_{Z}L(\rho,Z,V,W) \\ 
    &= \sup\Bigg\{-\tr\left[V\rho\right]: V\in\mathcal{F}_c^*(\mathcal{H}),\, W\geq 0,\, I_d - V - W = 0\Bigg\}\\
    &= \sup\Bigg\{-\tr\left[V\rho\right]:  W\geq 0,\,I_d - W\in\mathcal{F}_c^*(\mathcal{H}),\, \Bigg\} \\ 
    &= \sup\Bigg\{\tr\left[W\rho\right] - 1: W\geq 0,\,\tr\left[(I_d-W)Y\right] \geq 0,\,\forall Y \in \mathcal{F}_c(\mathcal{H}) \Bigg\}\\ 
    &=\sup\Bigg\{\tr\left[W\rho\right] - 1: W\geq 0,\,\tr\left[W\sigma\right] \leq 1,\,\forall \sigma \in \mathcal{F}(\mathcal{H}) \Bigg\}. 
    \end{aligned}
\end{equation}
We claim that the domain  
\begin{equation} 
    \label{eq:domain_dual}
    \Big\{W\geq 0:\tr\left[W\sigma\right] \leq 1,\,\forall \sigma \in \mathcal{F}(\mathcal{H}) \Big\}
\end{equation}
of the supremum is compact to replace $\sup$ with {$\max$} in~\eqref{eq:dual_intermediate}. 
Indeed, due to the continuity of the trace,~\eqref{eq:domain_dual} is closed. 
Moreover, by taking a full-rank free state $\sigma_{\mathrm{full}} \in \mathcal{F}(\mathcal{H})$ with minimum eigenvalue $\lambda_{\min} > 0$, for any $W$ in~\eqref{eq:domain_dual}, we have 
\begin{equation}
    1 \geq \tr\left[W\sigma_{\mathrm{full}} \right] \geq \lambda_{\min} \tr\left[W\right] = \lambda_{\min}\|W\|_1, 
\end{equation}
that is, 
\begin{equation}
    \|W\|_1 \leq \frac{1}{\lambda_{\min}}, 
\end{equation}
which implies that~\eqref{eq:domain_dual} is bounded. 
Therefore,~\eqref{eq:domain_dual} is closed and bounded, \textit{i.e.,} compact, and we have 
\begin{equation}
\label{eq:relation_lagrangian_dual}
    d = \max\Bigg\{\tr\left[W\rho\right] - 1: W\geq 0,\,\tr\left[W\sigma\right] \leq 1,\,\forall \sigma \in \mathcal{F}(\mathcal{H}) \Bigg\}.
\end{equation}

Combining~\eqref{eq:relation_robustness_primal},~\eqref{eq:weal_duality}, and~\eqref{eq:relation_lagrangian_dual}, we have the desired inequality~\eqref{eq:robustness_lower_bound_actual}. 
\end{proof}
}

\setcounter{theorem}{0}
\setcounter{conjecture}{0}
\section{Characterization of quantum states based on the generalized robustness}
\label{app:proof_witness_characterization}
In this section, we prove  Lemma~\ref{lem:witness_characterization}, which is a key ingredient of the multi-copy witness based on the generalized robustness shown in Theorem~\ref{cor:high_order_witness_qudit_conventional}.  
Here, we restate the lemma for readability. 

\begin{lemma}[Lemma~\ref{lem:witness_characterization} in main text]
    Let $\rho$ be a quantum state, and let $s > 0$. 
    Consider a quantum state $\eta$.  
    Then, 
    $\eta$ can be expressed as 
    \begin{equation}
        \label{eq:eta_characterization}
        \eta = \frac{\rho + s' \tau}{1 + s'}
    \end{equation}
    by using some positive number $0 < s' \leq s$ and some state $\tau \in \mathcal{D}(\mathcal{H})$ 
    if and only if it holds for all $m = 1,2,\ldots,d$ that
    \begin{equation}
        \label{eq:positivity_condition}
        S_m\left(\frac{1+s}{s} \eta - \frac{1}{s}\rho \right) \geq 0,
    \end{equation}
     with $S_m$ given in Lemma~\ref{lem:bloch_ball}. 
\end{lemma}

\begin{proof}
    By Lemma~\ref{lem:bloch_ball}, the condition in Eq.~\eqref{eq:positivity_condition} is equivalent to positivity of Hermitian operator
    \begin{equation}
        \zeta \coloneqq \frac{1+s}{s} \eta - \frac{1}{s}\rho. 
    \end{equation}
    
    First, suppose that $\eta$ cannot be expressed as 
    \begin{equation}
        \eta = \frac{\rho + s' \tau}{1 + s'}
    \end{equation}
    with $0 < s' \leq s$ and $\tau \in \mathcal{D}(\mathcal{H})$.
    By way of contradiction, suppose that 
    \begin{equation}
        \zeta = \frac{1+s}{s} \eta - \frac{1}{s}\rho 
    \end{equation}
    is positive. 
    Since 
    \begin{equation}
        \tr[\zeta] = \frac{1+s}{s} \tr[\eta] - \frac{1}{s}\tr[\rho] = \frac{1+s}{s} - \frac{1}{s} = 1,  
    \end{equation}
    $\zeta$ is a state. 
    In addition, we have 
    \begin{equation}
        \eta = \frac{\rho + s\zeta}{1 + s}. 
    \end{equation}
    However, this contradicts the fact that $\eta$ cannot be written in the form in Eq.~\eqref{eq:eta_characterization} using $s' \leq s$. 
    Thus, in this case, $\zeta$ is not positive, and thus there exists $2 \leq m\leq d$ such that 
    \begin{equation}
        S_m\left(\frac{1+s}{s} \eta - \frac{1}{s}\rho \right) < 0.
    \end{equation}
    
    On the other hand, suppose that we have 
    \begin{equation}
        \eta = \frac{\rho + s' \tau}{1 + s'}
    \end{equation}
    with $0 < s' \leq s$ and $\tau \in \mathcal{D}(\mathcal{H})$.
    Then,  
    \begin{align}
        \zeta 
        &= \frac{1+s}{s} \eta - \frac{1}{s}\rho \\ 
        &= \frac{1+s}{s}\left(\frac{\rho + s' \tau}{1 + s'}\right) - \frac{1}{s}\rho \\ 
        &= \left(\frac{1 + s}{1 + s'} - 1\right)\frac{1}{s} \rho + \frac{(1+s)s'}{s(1+s')}\tau. 
    \end{align}
     Since $0 < s'\leq s$, we have $(1 + s)/(1 + s') \geq 1$, so $\zeta$ is positive. 
     Therefore, 
     \begin{equation}
        S_m\left(\frac{1+s}{s} \eta - \frac{1}{s}\rho \right) \geq 0
    \end{equation}
    for all $m = 1,2,\ldots,d$. 
\end{proof}

In the statement of Lemma~\ref{lem:witness_characterization}, we consider a characterization 
\begin{equation}
    \eta = \frac{\rho + q\tau}{1 + q}
\end{equation}
of a state $\eta$ with positive number $q > 0$ and some state $\tau$, inspired by the definition of the generalized robustness. 
We discuss the condition for $q$ and $\tau$ to characterize $\eta$. 

\begin{proposition}
\label{prop:expression_qudit}
Let $\mathcal{H}$ be a $d$-dimensional Hilbert space. 
Let $\rho \in \mathcal{D}(\mathcal{H})$ be a quantum state with Bloch vector $(r_j)_j$, 
and let $\eta \in \mathcal{D}(\mathcal{H})$ be any quantum state with Bloch vector $(x_j)_j$ with $\eta \neq \rho$. 
Then, for a state $\tau \in \mathcal{D}(\mathcal{H})$ with Bloch vector $(t_j)_j$, there exists $q > 0$ such that  
\begin{equation}
    \label{eq:char_prop_qudit}
        \eta = \dfrac{\rho + q\tau}{1+q} 
    \end{equation}
if and only if $\tau$ is on the ray
\begin{equation}
    \label{eq:ray_qudit}
    l: \dfrac{t_j - r_j}{x_j - r_j} = u,\,\,\forall j\,\,\,\,(u > 1).
\end{equation}
\end{proposition}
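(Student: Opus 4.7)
The plan is to exploit the affinity of the Bloch-vector parametrization: the map sending a density operator to its Bloch vector is an affine bijection between $\mathcal{D}(\mathcal{H})$ and a subset of $\mathbb{R}^{d^2-1}$, so the equation $\eta = (\rho + q\tau)/(1+q)$ holds as an operator identity if and only if it holds componentwise on Bloch vectors. I would first rewrite the target equation to isolate $\tau$ as $\tau = \bigl((1+q)\eta - \rho\bigr)/q$, and then translate this relation directly into Bloch-vector coordinates, obtaining
\begin{equation}
    t_j \;=\; \frac{(1+q)\,x_j - r_j}{q}, \qquad j = 1,2,\ldots,d^2-1.
\end{equation}

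From there, a one-line algebraic manipulation gives $t_j - r_j = \tfrac{1+q}{q}\,(x_j - r_j)$, so that setting $u \coloneqq (1+q)/q = 1 + 1/q$ yields exactly the ray condition \eqref{eq:ray_qudit}, and moreover $u>1$ is equivalent to $q>0$. This establishes the ``only if'' direction, and in particular shows that $u$ must be independent of the index $j$. For the ``if'' direction, one starts from the assumption that $\tau$ lies on the ray $l$ with some fixed parameter $u>1$, defines $q \coloneqq 1/(u-1) > 0$, and verifies by reversing the computation that the Bloch vector of $(\rho + q\tau)/(1+q)$ agrees with $(x_j)_j$; by the affine bijection, this forces $(\rho + q\tau)/(1+q) = \eta$.

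The one subtlety I would be careful about is the case in which $x_j = r_j$ for some but not all indices $j$: here the ratio $(t_j - r_j)/(x_j - r_j)$ is formally undefined, but the hypothesis that the ratio equals a common value $u$ is naturally interpreted as the linear condition $t_j - r_j = u(x_j - r_j)$ for every $j$, which remains well-defined (and forces $t_j = r_j$ in the degenerate components). The assumption $\eta \neq \rho$ guarantees that at least one coordinate is non-degenerate so that $u$ is genuinely pinned down. No step requires verifying that $\tau$ is a valid state, because $\tau \in \mathcal{D}(\mathcal{H})$ is assumed from the outset; in particular, once the algebraic identity in Bloch coordinates holds, the operator identity follows automatically. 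Thus the main content of the proof is essentially the affine translation between operators and Bloch vectors, and the only mild obstacle is the bookkeeping of the degenerate coordinates mentioned above.
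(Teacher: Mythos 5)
Your proposal is correct and follows essentially the same route as the paper's own proof: both translate the operator identity $\eta = (\rho + q\tau)/(1+q)$ into Bloch coordinates and use the correspondence $u = 1 + 1/q$, $q = 1/(u-1)$ to pass between the decomposition parameter and the ray parameter in the two directions. Your extra remark about reading the ray condition as the linear relation $t_j - r_j = u\,(x_j - r_j)$ when some coordinates satisfy $x_j = r_j$ is a small refinement of rigor that the paper's proof leaves implicit, but it does not change the argument.
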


\begin{proof}
First, suppose that the state $\tau$ satisfies 
\begin{equation}
    \label{eq:ray}
    \dfrac{t_j - r_j}{x_j - r_j} = u 
\end{equation}
for some $u>1$.
We may rewrite Eq.~\eqref{eq:ray} as 
\begin{align}
    x_j &= \frac{t_j + (u-1)r_j}{u} = \dfrac{r_j + \tfrac{1}{u -1}t_j}{1 + \tfrac{1}{u - 1}}. 
\end{align}
Therefore, we have  
\begin{equation}
    \label{eq:char}
    \eta = \dfrac{\rho + \tfrac{1}{u -1}\tau}{1 + \tfrac{1}{u - 1}} \eqqcolon \dfrac{\rho + q\tau}{1 + q}, 
\end{equation}
where we introduce $q \coloneqq \tfrac{1}{u-1}$. 
Therefore, $\sigma$ can be represented in the form of Eq.~\eqref{eq:char} if $\tau$ is on the ray~\eqref{eq:ray}.  
On the other hand, suppose that there exists $\tau$ such that 
\begin{equation}
        \eta = \dfrac{\rho + q\tau}{1+q}. 
    \end{equation}
Then, the Bloch vector $(t_j)_j$ of $\tau$ must satisfy 
\begin{align}
    \dfrac{t_j - r_j}{x_j - r_j} = \frac{1}{q} + 1; 
\end{align}
that is, $\tau$ must be on the ray~\eqref{eq:ray_qudit} with $u = \tfrac{1}{q} + 1$. 
\end{proof}

\begin{remark}
When $\eta = \rho$, by taking $\tau = \rho$, for any $q > 0$, we can write 
\begin{equation}
    \eta = \frac{\rho + q\tau}{1 + q}. 
\end{equation}
\end{remark}

\setcounter{theorem}{0}
\setcounter{conjecture}{0}
\section{Derivation of the operator used for constructing the multi-copy witness}
\label{app:witness_derivation}
In this section, we show the derivation of Eq.~\eqref{eq:witness_derivation} in detail. 
Indeed, with the construction 
\begin{equation}
    \begin{aligned}
        W^{(l,m)}_{n_1,n_2,\ldots,n_{d^2-1}} = \left(\dfrac{d}{2(d-1)}\right)^{\tfrac{l}{2}}c^{(l,m)}_{n_1,n_2,\ldots,n_{d^2-1}} \left(\lambda_1\right)^{\otimes n_1} \otimes\left(\lambda_2\right)^{\otimes n_2} \cdots \otimes \left(\lambda_{d^2-1}\right)^{\otimes n_{d^2-1}}\otimes I^{\otimes (m-l)}, 
    \end{aligned}
\end{equation}
recalling that the state $\eta$ is written as 
\begin{equation}
    \eta = \frac{1}{d}\left(I + \sqrt{\dfrac{d(d-1)}{2}} \sum_{j=1}^{d^2-1} x_j\lambda_j \right)
\end{equation}
with generalized Bloch vector $(x_j)_j$, 
we have 
\begin{equation}
    \begin{aligned}
        &\tr\left[W^{(l,m)}_{n_1,n_2,\ldots,n_{d^2-1}}\eta^{\otimes m}\right] \\
        &= \tr\left[\left(\left(\dfrac{d}{2(d-1)}\right)^{\tfrac{l}{2}}c^{(l,m)}_{n_1,n_2,\ldots,n_{d^2-1}} \left(\lambda_1\right)^{\otimes n_1} \otimes\left(\lambda_2\right)^{\otimes n_2} \cdots \otimes \left(\lambda_{d^2-1}\right)^{\otimes n_{d^2-1}}\otimes I^{\otimes (m-l)}\right) \cdot \left(\frac{1}{d}\left(I + \sqrt{\dfrac{d(d-1)}{2}} \sum_{j=1}^{d^2-1} x_j\lambda_j \right)\right)^{\otimes m}\right]\\ 
        &= \left(\dfrac{d}{2(d-1)}\right)^{\tfrac{l}{2}}c^{(l,m)}_{n_1,n_2,\ldots,n_{d^2-1}} \times \dfrac{1}{d^m}\left(\dfrac{d(d-1)}{2}\right)^{\tfrac{l}{2}} \tr\left[\left(x_1\lambda_1^2\right)^{\otimes n_1}\otimes \cdots \otimes \left(x_{d^2-1}\lambda_{d^2-1}^2\right)^{\otimes n_{d^2-1}}\otimes I^{\otimes (m-l)} \right] + \tr\left[\mathrm{Traceless\,\,terms}\right]\\
        &= \left(\dfrac{d}{2(d-1)}\right)^{\tfrac{l}{2}}c^{(l,m)}_{n_1,n_2,\ldots,n_{d^2-1}} \times \dfrac{1}{d^m}\left(\dfrac{d(d-1)}{2}\right)^{\tfrac{l}{2}} \times 2^l d^{m-l} \times \prod_{j=1}^{d^2-1} \left(x_j\right)^{n_j}\\\ 
        &= \left(c^{(l,m)}_{n_1,n_2,\ldots,n_{d^2-1}} \prod_{j=1}^{d^2-1} \left(x_j\right)^{n_j}\right). 
    \end{aligned} 
\end{equation}

\setcounter{theorem}{0}
\setcounter{conjecture}{0}
\section{Derivation of 2-copy witness for single-qubit states}
\label{app:witness_derivation_qubit}
In Sec.~\ref{sec:special_case_qubit}, we claim that for a resource state $\rho$, a Hermitian operator 
\begin{equation}
    \begin{aligned}
    W(\rho,s)
        &\coloneqq \left(1-\dfrac{r_1^2+r_2^2+r_3^2}{s^2}\right)I_2\otimes I_2 
    - \dfrac{(1+s)^2}{s^2}\left[\sigma_1\otimes\sigma_1 + \sigma_2\otimes\sigma_2 + \sigma_3\otimes\sigma_3\right]\\
    &+\dfrac{1+s}{s^2}[r_1\left(I_2\otimes \sigma_1 + \sigma_1\otimes I_2\right) 
    +r_2\left(I_2\otimes \sigma_2 + \sigma_2\otimes I_2\right) 
    +r_3\left(I_2\otimes \sigma_3 + \sigma_3\otimes I_2\right)], 
    \end{aligned}
\end{equation}
serves as a 2-copy witness for single-qubits states if (and only if) $s < R_{\mathcal{F}(\mathcal{H})}$. 
Here, we show the claim. Indeed, recalling that a single-qubit state $\eta$ with Bloch vector $(x_1,x_2,x_3)$ is written as 
\begin{equation}
    \eta = \dfrac{1}{2}\left(I + x_1\sigma_1 + x_2\sigma_2 + x_3\sigma_3\right), 
\end{equation}
we have 
    \begin{equation}
        \label{eq:derivation_qubit}
        \begin{aligned}
        &\tr\left[W(\rho,s)\eta^{\otimes2}\right] \\  
        &= \tr\Bigg[\Bigg( \left(1 - \dfrac{r_1^2+r_2^2+r_3^2}{s^2}\right)I_2\otimes I_2 - \dfrac{(1+s)^2}{s^2}\left[\sigma_1\otimes\sigma_1 + \sigma_2\otimes\sigma_2 + \sigma_3\otimes\sigma_3\right]\\
        &+\dfrac{1+s}{s^2}\left[r_1\left(I_2\otimes \sigma_1 + \sigma_1\otimes I_2\right) 
        +r_2\left(I_2\otimes \sigma_2 + \sigma_2\otimes I_2\right) 
        +r_3\left(I_2\otimes \sigma_3 + \sigma_3\otimes I_2\right)\right]\Bigg)\cdot\left(\dfrac{1}{2}\left(I_2 + x_1\sigma_1 + x_2\sigma_2 + x_3\sigma_3\right)\right)^{\otimes 2}\Bigg] \\ 
        &= \frac{1}{4}\tr\Bigg[\Bigg( \left(1-\dfrac{r_1^2+r_2^2+r_3^2}{s^2}\right)I_2\otimes I_2 - \frac{(1+s)^2}{s^2}\sigma_1\otimes\sigma_1 - \frac{(1+s)^2}{s^2} \sigma_2\otimes\sigma_2 -  \frac{(1+s)^2}{s^2}\sigma_3\otimes\sigma_3 \\
        &+\dfrac{1+s}{s^2}r_1\left(I_2\otimes \sigma_1 + \sigma_1\otimes I_2\right) 
        +\dfrac{1+s}{s^2}r_2\left(I_2\otimes \sigma_2 + \sigma_2\otimes I_2\right) +\dfrac{1+s}{s^2}\left(I_2\otimes \sigma_3 + \sigma_3\otimes I_2\right) \Bigg) \\ 
        &\cdot \Bigg(I_2\otimes I_2 + x_1(I_2\otimes \sigma_1 +\sigma_1\otimes I_2) + x_2(I_2\otimes \sigma_2 +\sigma_2\otimes I_2) + x_3(I_2\otimes \sigma_3 +\sigma_3\otimes I_2) + x_1^2\sigma_1\otimes \sigma_1 + x_2^2\sigma_2\otimes \sigma_2 + x_3^2\sigma_3\otimes \sigma_3\Bigg) \Bigg] \\  
        &=\frac{1}{4}\tr\Bigg[\Bigg(1 - \dfrac{r_1^2+r_2^2+r_3^2}{s^2} + \dfrac{2(1+s)}{s^2}r_1x_1 +\dfrac{2(1+s)}{s^2}r_2x_2 +\dfrac{2(1+s)}{s^2}r_3x_3 -
        \frac{(1+s)^2}{s^2}x_1^2 - \frac{(1+s)^2}{s^2}x_2^2 - \frac{(1+s)^2}{s^2}x_3^2\Bigg)I_2\otimes I_2\Bigg] \\ 
        &\quad + \frac{1}{4}\tr\Bigg[\mathrm{Terms\,\,with\,\,}I_2\otimes \sigma_1, \sigma_1\otimes I_2,  
        I_2\otimes \sigma_2, \sigma_2\otimes I_2,  I_2\otimes \sigma_3, \sigma_3\otimes I_2, \\ 
        &\quad\quad\quad\quad\quad\quad 
        \sigma_1\otimes \sigma_2,  \sigma_2\otimes \sigma_1,  \sigma_1\otimes \sigma_3, 
        \sigma_3\otimes \sigma_1,  \sigma_2\otimes \sigma_3,  \sigma_3\otimes \sigma_2, 
        \sigma_1\otimes \sigma_1,  \sigma_2\otimes \sigma_2,  \sigma_3\otimes \sigma_3\Bigg] \\ 
        &=\frac{1}{4}\tr\Bigg[\dfrac{s^2 - (r_1^2+r_2^2+r_3^2)  +2(1+s)(r_1x_1 + r_2x_2 + r_3x_3) - (1+s)^2(x_1^2 + x_2^2 + x_3^2)}{s^2}I_2\otimes I_2\Bigg]\\
        &= S_2\left(\frac{1+s}{s}\eta - \frac{1}{s}\rho \right) \\ 
        &= S_{2,\rho,s}(\eta). 
        \end{aligned}
    \end{equation} 

\setcounter{theorem}{0}
\setcounter{conjecture}{0}
\section{Characterization of quantum states based on the weight of resource}
\label{app:proof_witness_characterization_weight}
In this section, we prove  Lemma~\ref{lem:witness_characterization_weight}.
The proof follows in a similar way to Lemma~\ref{lem:witness_characterization}, but we show the proof for completeness. 
Here, we restate the lemma. 

\begin{lemma}[Lemma~\ref{lem:witness_characterization_weight} in main text]
    Let $\rho$ be a quantum state, and let $0 < s < 1$. 
    Consider a quantum state $\eta$.  
    Then, 
    $\eta$ can be expressed as 
    \begin{equation}
        \label{eq:eta_characterization_weight}
        \eta = \frac{\rho - s' \tau}{1 - s'}
    \end{equation}
    by using some positive number $0 < s' \leq s$ and some state $\tau \in \mathcal{D}(\mathcal{H})$ 
    if and only if 
    \begin{equation}
        \label{eq:positivity_condition_weight}
        S_m\left(\frac{1}{s}\rho - \frac{1-s}{s} \eta\right) \geq 0
    \end{equation}
    for all $m = 1,2,\ldots,d$. 
\end{lemma}

\begin{proof}
    By Lemma~\ref{lem:bloch_ball}, the condition in Eq.~\eqref{eq:positivity_condition_weight} is equivalent to positivity of Hermitian operator
    \begin{equation}
        \zeta \coloneqq \frac{1}{s}\rho- \frac{1+s}{s} \eta. 
    \end{equation}
    
    First, suppose that $\eta$ cannot be expressed as 
    \begin{equation}
        \eta = \frac{\rho - s' \tau}{1 - s'}
    \end{equation}
    with $0 < s' \leq s < 1$ and $\tau \in \mathcal{D}(\mathcal{H})$.
    By way of contradiction, suppose that 
    \begin{equation}
        \zeta =\frac{1}{s}\rho - \frac{1-s}{s} \eta 
    \end{equation}
    is positive. 
    Since 
    \begin{equation}
        \tr[\zeta] =  \frac{1}{s}\tr[\rho] -\frac{1-s}{s} \tr[\eta] = \frac{1}{s}-\frac{1-s}{s} = 1,  
    \end{equation}
    $\zeta$ is a state. 
    In addition, we have 
    \begin{equation}
        \eta = \frac{\rho - s\zeta}{1 - s}. 
    \end{equation}
    However, this contradicts the fact that $\eta$ cannot be written in the form in Eq.~\eqref{eq:eta_characterization_weight} using $s' \leq s$. 
    Thus, in this case, $\zeta$ is not positive, and thus there exists $2 \leq m\leq d$ such that 
    \begin{equation}
        S_m\left(\frac{1}{s}\rho - \frac{1+s}{s} \eta \right) < 0
    \end{equation}
    
    On the other hand, suppose that we have 
    \begin{equation}
        \eta = \frac{\rho - s' \tau}{1 - s'}
    \end{equation}
    with $0 < s' \leq s$ and $\tau \in \mathcal{D}(\mathcal{H})$.
    Then,  
    \begin{align}
        \zeta 
        &=  \frac{1}{s}\rho - \frac{1-s}{s} \eta\\ 
        &= \frac{1}{s}\rho - \frac{1-s}{s}\left(\frac{\rho - s' \phi}{1 - s'}\right)\\ 
        &= \left(1 - \frac{1 - s}{1 - s'}\right)\frac{1}{s} \rho + \frac{(1-s)s'}{s(1-s')}\phi. 
    \end{align}
     Since $0 < s'\leq < 1$, we have $1 - (1 - s)/(1 - s') \geq 1$, so $\zeta$ is positive. 
     Therefore, 
     \begin{equation}
        S_m\left(\frac{1}{s}\rho - \frac{1-s}{s} \eta \right) \geq 0
    \end{equation}
    for all $m = 1,2,\ldots,d$. 
\end{proof}

In the statement of Lemma~\ref{lem:witness_characterization_weight}, we consider a characterization 
\begin{equation}
    \eta = \frac{\rho - q\tau}{1 - q}
\end{equation}
of a state $\eta$ with positive number $0<q<1$ and some state $\tau$, inspired by the definition of the weight of resource. 
We discuss the condition for $q$ and $\tau$ to characterize $\eta$. 

\begin{proposition}
\label{prop:expression_qudit_weight}
Let $\mathcal{H}$ be a $d$-dimensional Hilbert space. 
Let $\rho \in \mathcal{D}(\mathcal{H})$ be a given quantum state with Bloch vector $(r_j)_j$, 
and let $\eta  \in \mathcal{D}(\mathcal{H})$ be any quantum state with Bloch vector $(x_j)_j$ such that $\eta \neq \rho$. 
Then, for a state $\tau$ with Bloch vector $(t_j)_j$, there exists $0<q<1$ such that  
\begin{equation}
    \label{eq:char_prop_qudit_weight}
        \rho  = q\tau + (1-q)\eta
    \end{equation}
if and only if $\tau$ is on the ray
\begin{equation}
    \label{eq:ray_qudit_weight}
    l: \dfrac{t_j - r_j}{x_j - r_j} = u,\,\,\forall j\,\,\,\,(u < 0).
\end{equation}
\end{proposition}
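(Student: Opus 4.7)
The plan is to reduce the statement to a one-parameter linear algebra identity in the generalized Bloch vector space. Since the map sending a Hermitian trace-one operator to its Bloch vector is affine, the convex combination $\rho = q\tau + (1-q)\eta$ is equivalent, component-wise, to $r_j = q t_j + (1-q) x_j$ for all $j$. This will be the only tool needed: I just need to manipulate this linear relation and check that the sign constraints on $q$ and the ray parameter $u$ match up.

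For the forward direction, I assume there exist $0 < q < 1$ and a state $\tau$ with $\rho = q\tau + (1-q)\eta$. Reading this in Bloch components gives $t_j - r_j = -\tfrac{1-q}{q}(x_j - r_j)$ for every $j$, so setting $u \coloneqq -\tfrac{1-q}{q}$ produces a common ratio independent of $j$; moreover $0 < q < 1$ forces $u < 0$. Because $\eta \neq \rho$ we have some $x_j \neq r_j$, so the quotient defining the ray is well-defined on at least one component and the remaining components satisfy the same relation by linearity, so $\tau$ lies on the ray $l$ as claimed.

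For the converse, I assume $\tau$ has Bloch vector satisfying $t_j - r_j = u(x_j - r_j)$ for every $j$ with $u < 0$, and define $q \coloneqq \tfrac{1}{1-u}$. Since $u < 0$ we have $1 - u > 1$, so $0 < q < 1$. A direct substitution shows
\begin{equation}
q t_j + (1-q) x_j = q(r_j + u(x_j - r_j)) + (1-q) x_j = q(1-u) r_j + (qu + 1 - q) x_j = r_j,
\end{equation}
using $q(1-u) = 1$ and $qu + 1 - q = 1 - q(1-u) = 0$. This recovers $\rho = q\tau + (1-q)\eta$ as required.

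There is no real obstacle here; the proof is essentially a check that the affine correspondence $q \leftrightarrow u = -(1-q)/q$ is a bijection between $(0,1)$ and $(-\infty, 0)$, and that it matches the two parameterizations of the line through $\rho$ and $\eta$ in Bloch space. The one point to handle cleanly is the assumption $\eta \neq \rho$: it guarantees that the ray $l$ is well-defined (at least one denominator $x_j - r_j$ is nonzero) and that the equation $t_j - r_j = u(x_j - r_j)$ determines $u$ uniquely, so the forward and backward constructions are genuinely inverse to each other.
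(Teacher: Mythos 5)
Your proof is correct and follows essentially the same route as the paper: both directions rest on reading the convex combination componentwise in Bloch coordinates and using the bijection $q=\tfrac{1}{1-u}\leftrightarrow u=-\tfrac{1-q}{q}$ between $(0,1)$ and $(-\infty,0)$. Your explicit handling of the case $x_j=r_j$ for some components (interpreting the ray condition as $t_j-r_j=u(x_j-r_j)$) is a minor clarification the paper leaves implicit, but the substance is identical.
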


\begin{proof}
The proof follows similarly to the proof of Proposition~\ref{prop:expression_qudit}. 
First, suppose that the state $\tau$ satisfies 
\begin{equation}
    \label{eq:ray_weight}
    l: \dfrac{t_j - r_j}{x_j - r_j} = u,\,\,\forall j
\end{equation}
for some $u < 0$. 
We may rewrite Eq.~\eqref{eq:ray_weight} as 
\begin{align}
    \label{eq:ray_weight_rewritten}
    r_j = \frac{1}{1-u}t_j + \left(1 - \frac{1}{1-u}\right)x_j. 
\end{align}
Therefore, we have 
\begin{equation}
    \label{eq:char_weight}
    \rho = \frac{1}{1-u}\tau + \left(1 - \frac{1}{1-u}\right)\eta = q\tau + (1-q)\eta,
\end{equation}
where we introduce $q \coloneqq \tfrac{1}{1-u}$. 
Note that since $u < 0$, we have $0 < q < 1$. 
Therefore, $\eta \neq \rho$ can be represented in the form of Eq.~\eqref{eq:char_weight} if $\tau$ is on the ray~\eqref{eq:ray_qudit_weight}.  
On the other hand, suppose that there exists $\tau$ such that 
\begin{equation}
        \rho = q\tau + (1-q)\eta
\end{equation}
with $0 < q < 1$. 
Then, the Bloch vector $(t_j)_j$ of $\tau$ must satisfy 
\begin{align}
    \dfrac{t_j - r_j}{x_j - r_j} = 1 - \frac{1}{q}; 
\end{align}
that is, $\tau$ must be on the ray~\eqref{eq:ray_qudit_weight} with $u = 1 - \tfrac{1}{q}$. 
Note that since $0 < q < 1$, we have $u < 0$. 
\end{proof}

\begin{remark}
When $\eta = \rho$, by taking $\tau = \rho$, for any $0 < q < 1$, we can write 
\begin{equation}
   \rho = q\tau + (1-q)\eta
\end{equation}
\end{remark}

\setcounter{theorem}{0}
\setcounter{conjecture}{0}
\section{Proof of operational advantage in multi-input channel exclusion task}
\label{app:proof_advantage_exclusion}
In this section, we prove Theorem~\ref{thm:advantage_exclusion}.
We restate the theorem for readability. 

\begin{theorem}[Theorem~\ref{thm:advantage_exclusion} in main text]
Let $\mathcal{H}$ be a $d$-dimensional Hilbert space. 
For any resource state $\rho \in \mathcal{D}(\mathcal{H})\backslash\mathcal{F}(\mathcal{H})$, there exists a family of channel ensembles $\left(\{p_i,\Lambda^{(m)}_i\}_i\right)_{m=2}^{d}$ such that 
\begin{equation}
    \begin{aligned}
        \max_{\sigma \in \mathcal{F}(\mathcal{H})} \min_{m = 2,3,\ldots,d} \frac{\displaystyle \min_{\{M_i\}_i} p_{\mathrm{err}}(\{p_i,\Lambda^{(m)}_i\}_i,\{M_i\}_i,\rho^{\otimes m})}{\displaystyle \min_{\{M_i\}_i} p_{\mathrm{err}}(\{p_i,\Lambda^{(m)}_i\}_i,\{M_i\}_i,\sigma^{\otimes m})}
        < 1. 
    \end{aligned} 
\end{equation}
\end{theorem}

\begin{proof}
Let $\{W^{(\mathrm{WoR})}_m\}_{m=2}^d$ be a family of Hermitian operators such that every free state $\sigma\in\mathcal{F}(\mathcal{H})$ comes with an integer $m\in\{2,\dots, d\}$ satisfying
\begin{align}
        &\tr[W^{(\mathrm{WoR})}_m\rho^{\otimes m}] > 1, \\ 
        &0\leq\tr[W^{(\mathrm{WoR})}_m\sigma^{\otimes m}] \leq 1.  
        \label{eq:modified witness free weight}
    \end{align}
    The existence of such a family of Hermitian operators is guaranteed by Theorem~\ref{thm:high_order_witness_qudit_weight}.
Indeed, it ensures that when $s < R_{\mathcal{F}(\mathcal{H})}(\rho)$, for every $\sigma\in\mathcal{F}(\mathcal{H})$ there exists $m\in\{2,\ldots, d\}$ satisfying
    \begin{align}
        \label{eq:witness_resource_weight}
        &\tr[\widetilde{W}^{(\mathrm{WoR})}_m(\rho,s)\rho^{\otimes m}] < 0, \\ 
        &\tr[\widetilde{W}^{(\mathrm{WoR})}_m(\rho,s)\sigma^{\otimes m}] \geq 0 
    \end{align}
    for the family $\widetilde{\mathcal{W}}^{\mathrm{WoR}}_{\rho,s} = (\widetilde{W}^{(\mathrm{WoR})}_m(\rho,s) \in \Herm(\mathcal{H}^{\otimes m}): m = 2,3,\ldots,d)$ of Hermitian operators.
    Then, we may take $W^{(\mathrm{WoR})}_m \coloneqq I_d^{\otimes m} - \widetilde{W}^{(\mathrm{WoR})}_m(\rho,s)/\|\widetilde{W}^{(\mathrm{WoR})}_m(\rho,s)\|_\infty$ for some $s < R_{\mathcal{F}(\mathcal{H})}(\rho)$. 
    Note that Eq.~\eqref{eq:witness_resource_weight} implies that $\widetilde{W}^{(\mathrm{WoR})}_m(\rho,s)$ has at least one negative eigenvalues, so we have $\|W^{(\mathrm{WoR})}_m\|_\infty > 1$. 
    
    Define two quantum channels $\Lambda^{(m)}_1$ and $\Lambda^{(m)}_2$ as 
        \begin{align}
            \Lambda^{(m)}_1(X) &= \left(\dfrac{\tr(X)}{2} + \dfrac{\tr(W^{(\mathrm{WoR})}_mX)}{2\|W^{(\mathrm{WoR})}_m\|_\infty}\right)\ketbra{0}{0} + \left(\dfrac{\tr(X)}{2} - \dfrac{\tr(W^{(\mathrm{WoR})}_mX)}{2\|W^{(\mathrm{WoR})}_m\|_\infty}\right)\ketbra{1}{1}, \\
            \Lambda^{(m)}_2(X) &= \left(\dfrac{\tr(X)}{2} - \dfrac{\tr(W^{(\mathrm{WoR})}_mX)}{2\|W^{(\mathrm{WoR})}_m\|_\infty}\right)\ketbra{0}{0} + \left(\dfrac{\tr(X)}{2} + \dfrac{\tr(W^{(\mathrm{WoR})}_mX)}{2\|W^{(\mathrm{WoR})}_m\|_\infty}\right)\ketbra{1}{1}. 
        \end{align}
    Note that these channels have the same form as the channels used in the proof of Theorem~\ref{thm:advantage_multi-copy}. 
    Let $\sigma\in\mathcal{F}(\mathcal{H})$ be an arbitrary free state and let $m$ be an integer that satisfies \eqref{eq:modified witness free weight}.
    Then, since $\tr[W^{(\mathrm{WoR})}_m\rho^{\otimes m}] > 1$, 
    \begin{equation}
        \begin{aligned}
         \|\Lambda^{(m)}_1(\rho^{\otimes m}) - \Lambda^{(m)}_2(\rho^{\otimes m})\|_1 
            = \left\|\dfrac{\tr[W^{(\mathrm{WoR})}_m\rho^{\otimes d}]}{\|W^{(\mathrm{WoR})}_m\|_\infty}\ketbra{0}{0} - \dfrac{\tr[W^{(\mathrm{WoR})}_m\rho^{\otimes m}]}{\|W^{(\mathrm{WoR})}_m\|_\infty}\ketbra{1}{1}\right\|_1 
            = \dfrac{2\tr[W^{(\mathrm{WoR})}_m\rho^{\otimes m}]}{\|W^{(\mathrm{WoR})}_m\|_\infty}  
            > \dfrac{2}{\|W^{(\mathrm{WoR})}_m\|_\infty}. 
        \end{aligned}
    \end{equation}
    Similarly, since $0\leq\tr[W^{(\mathrm{WoR})}_m\sigma^{\otimes m}] \leq 1$, 
    \begin{equation}
        \begin{aligned}
        \|\Lambda^{(m)}_1(\sigma^{\otimes m}) - \Lambda^{(m)}_2(\sigma^{\otimes m})\|_1 = \dfrac{2\tr[W^{(\mathrm{WoR})}_m\sigma^{\otimes m}]}{\|W^{(\mathrm{WoR})}_m\|_\infty} \leq \dfrac{2}{\|W^{(\mathrm{WoR})}_m\|_\infty}. 
        \end{aligned}
    \end{equation}
    Consider the channel ensemble consisting of $\Lambda^{(m)}_1$ and $\Lambda^{(m)}_2$ with probability $\tfrac{1}{2}$, denoted by $\{\tfrac{1}{2},\Lambda^{(m)}_i\}_{i=1}^2$. 
    Then, we have 
        \begin{equation}
            \begin{aligned}
            \min_{\{M_i\}_i} p_{\mathrm{err}}(\{\tfrac{1}{2},\Lambda^{(m)}_i\}_{i=1}^2,\{M_i\}_i,\rho^{\otimes m}) 
            = \dfrac{1}{2}\left(1 - \dfrac{1}{2}\|\Lambda^{(m)}_1(\rho^{\otimes m}) - \Lambda^{(m)}_2(\rho^{\otimes m})\|_1\right) 
            < \dfrac{1}{2}\left(1 - \dfrac{1}{\|W^{(\mathrm{WoR})}_m\|_\infty}\right), 
            \end{aligned}
        \end{equation}
        and 
        \begin{equation}
            \begin{aligned}
             \min_{\{M_i\}_i} p_{\mathrm{err}}(\{\tfrac{1}{2},\Lambda^{(m)}_i\}_{i=1}^2,\{M_i\}_i,\sigma^{\otimes m}) = \dfrac{1}{2}\left(1 - \dfrac{1}{2}\max_{\sigma\in\mathcal{F}(\mathcal{H})}\|\Lambda^{(m)}_1(\sigma^{\otimes m}) - \Lambda^{(m)}_2(\sigma^{\otimes m})\|_1\right) \geq \dfrac{1}{2}\left(1 - \dfrac{1}{\|W^{(\mathrm{WoR})}_m\|_\infty}\right). 
            \end{aligned}
        \end{equation}
    Note that since $\|W^{(\mathrm{WoR})}_m\|_{\infty} > 1$, we have $ 0 < 1 - 1/\|W^{(\mathrm{WoR})}_m\|_\infty < 1$, so 
    \begin{equation}
        0 < \dfrac{1}{2}\left(1 - \dfrac{1}{\|W^{(\mathrm{WoR})}_m\|_\infty}\right) < \frac{1}{2}. 
    \end{equation}
    Therefore, we have 
    \begin{equation}
        \begin{aligned}
            &\dfrac{\min_{\{M_i\}_i} p_{\mathrm{err}}(\{\tfrac{1}{2},\Lambda^{(m)}_i\}_{i=1}^2,\{M_i\}_i,\rho^{\otimes m})}{\min_{\{M_i\}_i} p_{\mathrm{err}}(\{\tfrac{1}{2},\Lambda^{(m)}_i\}_{i=1}^2,\{M_i\}_i,\sigma^{\otimes m})} 
            < 1. 
            \label{eq:advantage fixed free state weight}
        \end{aligned}
    \end{equation}

    Since every free state comes with an integer $m$ satisfying \eqref{eq:advantage fixed free state weight}, we have
    \begin{equation}
    \begin{aligned}
        \min_{m = 2,3,\ldots,d} \frac{\displaystyle \min_{\{M_i\}_i} p_{\mathrm{err}}(\{p_i,\Lambda^{(m)}_i\}_i,\{M_i\}_i,\rho^{\otimes m})}{\displaystyle \min_{\{M_i\}_i} p_{\mathrm{err}}(\{p_i,\Lambda^{(m)}_i\}_i,\{M_i\}_i,\sigma^{\otimes m})}
        < 1
    \end{aligned} 
\end{equation}
for arbitrary $\sigma\in\mathcal{F}(\mathcal{H})$.
    Taking maximum over all free states on the left-hand side completes the proof. 
\end{proof}

\setcounter{theorem}{0}
\setcounter{conjecture}{0}
\section{Proof of worst-case advantage in channel exclusion task}
\label{app:proof_worst_case_exclusion}
In this section, we prove Theorem~\ref{thm:advantage worst case exclusion}.
We restate the theorem for readability. 

\begin{theorem}[Theorem~\ref{thm:advantage worst case exclusion} in main text]
    For any resource state $\rho \in \mathcal{F}(\mathcal{H})\backslash \mathcal{D}(\mathcal{H})$, 
    \begin{equation}
            1 - \sup_{k} \min_{\{p_i,\Lambda_i\}_i,\{M_i\}_i} \dfrac{p_{\mathrm{err}}(\rho,\{p_i,\Lambda_i\}_i,\{M_i\}_i)}{\min_{\sigma_k \in \mathcal{F}_k(\mathcal{H})} p_{\mathrm{err}}(\sigma_k,\{p_i,\Lambda_i\}_i,\{M_i\}_i)} = \mathrm{WoR}_{\mathcal{F}(\mathcal{H})}(\rho). 
    \end{equation}
\end{theorem}

\begin{proof}
Using the same argument in the proof of Lemma~\ref{lem:worst_case_robustness}, for any state $\rho$, we have 
    \begin{equation}
        \mathrm{WoR}_{\mathcal{F}(\mathcal{H})}(\rho) = \inf_{k} \mathrm{WoR}_{\mathcal{F}_k(\mathcal{H})}(\rho). 
    \end{equation}
Now, for any $k$, since $\mathcal{F}_k(\mathcal{H})$ is convex, we can apply the analysis from Ref.~\cite{Ducuara2020} and get 
\begin{equation}
    \begin{aligned}
        1 - \min_{\{p_i,\Lambda_i\}_i,\{M_i\}_i} \dfrac{p_{\mathrm{err}}(\rho,\{p_i,\Lambda_i\}_i,\{M_i\}_i)}{\min_{\sigma_k \in \mathcal{F}_k(\mathcal{H})} p_{\mathrm{err}}(\sigma_k,\{p_i,\Lambda_i\}_i,\{M_i\}_i)}  = \mathrm{WoR}_{\mathcal{F}_k(\mathcal{H})}(\rho). 
    \end{aligned}
\end{equation}
Taking the infimum over $k$ on both sides, we have 
\begin{equation}
    \begin{aligned}
        \inf_{k} \left\{1 -  \min_{\{p_i,\Lambda_i\}_i,\{M_i\}_i} \dfrac{p_{\mathrm{err}}(\rho,\{p_i,\Lambda_i\}_i,\{M_i\}_i)}{\min_{\sigma_k \in \mathcal{F}_k(\mathcal{H})} p_{\mathrm{err}}(\sigma_k,\{p_i,\Lambda_i\}_i,\{M_i\}_i)} \right\} = \inf_{k} \mathrm{WoR}_{\mathcal{F}_k(\mathcal{H})}(\rho).
    \end{aligned} 
\end{equation}
Since $\mathrm{WoR}_{\mathcal{F}_k(\mathcal{H})}(\rho) = \mathrm{WoR}_{\mathcal{F}(\mathcal{H})}(\rho)$, we have 
    \begin{equation}
        \begin{aligned}
            1 -  \sup_{k}\min_{\{p_i,\Lambda_i\}_i,\{M_i\}_i} \dfrac{p_{\mathrm{err}}(\rho,\{p_i,\Lambda_i\}_i,\{M_i\}_i)}{\min_{\sigma_k \in \mathcal{F}_k(\mathcal{H})} p_{\mathrm{err}}(\sigma_k,\{p_i,\Lambda_i\}_i,\{M_i\}_i)} = \mathrm{WoR}_{\mathcal{F}(\mathcal{H})}(\rho). 
        \end{aligned}
    \end{equation}
as desired. 
\end{proof}

\setcounter{theorem}{0}
\setcounter{conjecture}{0}
\section{Construction of multi-copy witness for QRT of channels}
\label{app:proof_witness_characterization_channel}
In this section, we prove  Lemma~\ref{lem:witness_characterization_channel}.
We restate the lemma for readability. 

\begin{lemma}[Lemma~\ref{lem:witness_characterization_channel} in main text]
    Let $\Lambda \in \mathcal{O}(\mathcal{H}_1\to \mathcal{H}_2)$ be a quantum channel, and let $s > 0$. 
    Consider a quantum channel $\Xi \in \mathcal{O}(\mathcal{H}_1\to \mathcal{H}_2)$.  
    Then, 
    $J_\Xi$ can be expressed as 
    \begin{equation}
        \label{eq:xi_characterization}
        J_\Xi = \frac{J_\Lambda + s' J_\Theta}{1 + s'}
    \end{equation}
    by using some positive number $0 < s' \leq s$ and some channel $\Theta \in \mathcal{O}(\mathcal{H}_1\to \mathcal{H}_2)$ 
    if and only if 
    \begin{equation}
        \label{eq:positivity_condition_channel}
        S_m\left(\frac{1+s}{s} J_\Xi - \frac{1}{s}J_\Lambda \right) \geq 0
    \end{equation}
    for all $m = 1,2,\ldots,d_1d_2$. 
\end{lemma}

\begin{proof}
    By Lemma~\ref{lem:bloch_ball}, the condition in Eq.~\eqref{eq:positivity_condition_channel} is equivalent to positivity of Hermitian operator
    \begin{equation}
        J \coloneqq \frac{1+s}{s} J_\Xi - \frac{1}{s}J_\Lambda. 
    \end{equation}
    
    First, suppose that $J_\Xi$ cannot be expressed as 
    \begin{equation}
        J_\Xi = \frac{J_\Lambda + s' \tau}{1 + s'}
    \end{equation}
    with $0 < s' \leq s$ and $\Theta \in \mathcal{O}(\mathcal{H}_1\to \mathcal{H}_2)$.
    By way of contradiction, suppose that 
    \begin{equation}
        J = \frac{1+s}{s} J_\Xi - \frac{1}{s}J_\Lambda 
    \end{equation}
    is positive. 
    Since 
    \begin{equation}
        \tr_{\mathcal{H}_2}[J] = \frac{1+s}{s} \tr_{\mathcal{H}_2}[J_\Xi] - \frac{1}{s}\tr_{\mathcal{H}_2}[J_\Lambda] = \frac{1+s}{s}I_{d_1} - \frac{1}{s}I_{d_1} = I_{d_1},  
    \end{equation}
    $J$ is a valid Choi operator. 
    In addition, we have 
    \begin{equation}
        J_\Xi = \frac{J_\Lambda + sJ_\Theta}{1 + s}. 
    \end{equation}
    However, this contradicts the fact that $J_\Xi$ cannot be written in the form in Eq.~\eqref{eq:xi_characterization} using $s' \leq s$. 
    Thus, in this case, $J$ is not positive, and thus there exists $2 \leq m\leq d$ such that 
    \begin{equation}
        S_m\left(\frac{1+s}{s} J_\Xi - \frac{1}{s}J_\Lambda \right) < 0.
    \end{equation}
    
    On the other hand, suppose that we have 
    \begin{equation}
        J_\Xi = \frac{J_\Lambda + s' J_\Theta}{1 + s'}
    \end{equation}
    with $0 < s' \leq s$ and $\Theta \in \mathcal{O}(\mathcal{H}_1\to \mathcal{H}_2)$.
    Then,  
    \begin{align}
        J
        &= \frac{1+s}{s} J_\Xi - \frac{1}{s}J_\Lambda \\ 
        &= \frac{1+s}{s}\left(\frac{J_\Lambda + s' J_\Theta}{1 + s'}\right) - \frac{1}{s}\Lambda \\ 
        &= \left(\frac{1 + s}{1 + s'} - 1\right)\frac{1}{s} J_\Lambda + \frac{(1+s)s'}{s(1+s')}J_\Theta. 
    \end{align}
     Since $0 < s'\leq s$, we have $(1 + s)/(1 + s') \geq 1$, so $J$ is positive. 
     Therefore, 
     \begin{equation}
        S_m\left(\frac{1+s}{s} \eta - \frac{1}{s}\rho \right) \geq 0
    \end{equation}
    for all $m = 1,2,\ldots,d$. 
\end{proof}

\setcounter{theorem}{0}
\setcounter{conjecture}{0}
\section{Worst-case advantage of channels in state discrimination}
\label{app:proof_worst_case_channel}
In this section, we prove Theorem~\ref{thm:advantage worst case channel}.
We restate the theorem for readability. 

\begin{theorem}[Theorem~\ref{thm:advantage worst case channel} in main text]
    For any resource channel $\Lambda$, 
    \begin{equation}
            \inf_{k} \max_{\{p_i,\eta_i\}_i,\{M_i\}_i} \dfrac{p_{\mathrm{succ}}(\Lambda,\{p_i,\eta_i\}_i,\{M_i\}_i)}{\max_{\Xi_k \in \mathcal{O}_{\textup{F}_k}} p_{\mathrm{succ}}(\Xi_k,\{p_i,\eta_i\}_i,\{M_i\}_i)} = 1 + R_{\mathcal{O}_{\textup{F}}}(\Lambda). 
    \end{equation}
\end{theorem}

\begin{proof}
For any channel $\Lambda$, 
\begin{equation}
    R_{\mathcal{O}_{\textup{F}}}(\Lambda) = \inf_{k} R_{\mathcal{O}_{\textup{F}_k}}(\rho). 
\end{equation}

For any $k$, since $\mathcal{O}_{\textup{F}_k}$ is convex, we can apply the analysis from Ref.~\cite{Takagi2019a} and get 
\begin{equation}
    \begin{aligned}
        \max_{\{p_i,\eta_i\}_i,\{M_i\}_i} \dfrac{p_{\mathrm{succ}}(\Lambda,\{p_i,\eta_i\}_i,\{M_i\}_i)}{\max_{\Xi_k \in \mathcal{O}_{\textup{F}_k}} p_{\mathrm{succ}}(\Xi_k,\{p_i,\eta_i\}_i,\{M_i\}_i)}  = 1 + R_{\mathcal{O}_{\textup{F}_k}}(\Lambda). 
    \end{aligned}
\end{equation}
Taking the infimum over $k$ on both sides, we have 
\begin{equation}
    \begin{aligned}
        \inf_{k} \max_{\{p_i,\eta_i\}_i,\{M_i\}_i} \dfrac{p_{\mathrm{succ}}(\Lambda,\{p_i,\eta_i\}_i,\{M_i\}_i)}{\max_{\Xi_k \in \mathcal{O}_{\textup{F}_k}} p_{\mathrm{succ}}(\Xi_k,\{p_i,\eta_i\}_i,\{M_i\}_i)}  = 1 + \inf_k R_{\mathcal{O}_{\textup{F}_k}}(\Lambda).
    \end{aligned} 
\end{equation}
Similarly to Lemma~\ref{lem:worst_case_robustness}, we can also show that $\inf_{k} R_{\mathcal{O}_{\textup{F}_k}}(\Lambda) = R_{\mathcal{O}_{\textup{F}}}(\Lambda)$. 
Hence, 
    \begin{equation}
        \begin{aligned}
            \inf_{k} \max_{\{p_i,\eta_i\}_i,\{M_i\}_i} \dfrac{p_{\mathrm{succ}}(\Lambda,\{p_i,\eta_i\}_i,\{M_i\}_i)}{\max_{\Xi_k \in \mathcal{O}_{\textup{F}_k}} p_{\mathrm{succ}}(\Xi_k,\{p_i,\eta_i\}_i,\{M_i\}_i)}  = 1 + R_{\mathcal{O}_{\textup{F}}}(\Lambda). 
        \end{aligned}
    \end{equation}
as desired. 
\end{proof}

\end{document}